\documentclass[11pt,letterpaper,english]{article}
\usepackage[utf8]{inputenc}			
\usepackage{makeidx}
\usepackage[pdftex]{graphicx}
\usepackage{mathtools,amsfonts}
\usepackage{array}
\usepackage{multirow, hhline}

\usepackage{nicefrac}
\usepackage{tikz}
\usetikzlibrary{arrows}
\usetikzlibrary{calc}
\usepackage{paralist}

\usepackage{wrapfig}
\usepackage[english]{babel}		
\usepackage[margin=1in]{geometry}	
\usepackage{lipsum}			
\usepackage{comment}
\usepackage{chngpage}			
\usepackage{amsfonts,amssymb,amsmath, amsthm}
\usepackage{amsopn}
\usepackage{color}
\usepackage{pgfplots}
\usepackage{pgfgantt}
\usepackage{algpseudocode}
\usepackage[Algorithm]{algorithm}
\usepackage{subcaption}
\usepackage{tabto}
\usepackage{csquotes} 
\usepackage{multirow}
\usepackage{amsthm}
\usepackage[colorlinks,urlcolor=black,citecolor=black,linkcolor=black,menucolor=black]{hyperref}
\newtheorem{theorem}{Theorem}

\newtheorem{observation}[theorem]{Observation}
\newtheorem{definition}[theorem]{Definition}

\newtheorem{lemma}[theorem]{Lemma}

\newtheorem*{problem*}{Problem} 

\newtheorem*{fact*}{Fact}

\newcounter{example}[section]

%
%
%
\title{Improving EFX Guarantees through Rainbow Cycle Number}
\author{Bhaskar Ray Chaudhury\thanks{MPI for Informatics, Saarland Informatics Campus, Graduate School of Computer Science, Saarbr\"ucken, Germany}\\ \texttt{\small braycha@mpi-inf.mpg.de} \and Jugal Garg\thanks{University of Illinois at Urbana-Champaign. Supported by NSF Grant CCF-1942321 (CAREER)}\\ \texttt{\small jugal@illinois.edu}  \and Kurt Mehlhorn \thanks{MPI for Informatics, Saarland Informatics Campus, Germany}\\ \texttt{\small mehlhorn@mpi-inf.mpg.de} \and Ruta Mehta\thanks{University of Illinois at Urbana-Champaign. Supported by NSF Grant CCF-1750436 (CAREER)}\\ \texttt{\small rutameht@illinois.edu} \and Pranabendu Misra \thanks{MPI for Informatics, Saarland Informatics Campus, Germany}\\ \texttt{\small pmisra@mpi-inf.mpg.de}}

\usepackage{booktabs} 



\newcommand{\abs}[1]{| #1 |}
\newcommand{\set}[2]{ \{ #1 \mid  #2 \} }
\newcommand{\sset}[1]{\left\{ #1 \right\} }
\newcommand{\eps}{\varepsilon}
\newcommand{\ceil}[1]{\lceil #1 \rceil}

\newcommand{\ram}{{\mathit{R}}}
\newcommand{\pd}{>_{\mathit{PD}}}

\renewcommand{\O}{\mathcal{O}}

\begin{document}

\maketitle 

\begin{abstract}
	We study the problem of fairly allocating a set of indivisible goods among $n$ agents with additive valuations. Envy-freeness up to \emph{any} good (EFX) is arguably the most compelling fairness notion in this context. 
	However, the existence of EFX allocations has not been settled and is one of the most important problems in fair division~\cite{ProcacciaCACM20}. Towards resolving this problem, many impressive results show the existence of its relaxations. In particular,~\cite{AmanatidisMN20} shows the existence of $0.618$-EFX allocations, and~\cite{CKMS20} shows that EFX allocation exists if we do not allocate at most $n-1$ goods. The latter result was recently improved for three agents in~\cite{CGM20}, in which the two unallocated goods are allocated through an involved procedure. Reducing the number of unallocated goods for an arbitrary number of agents is a systematic way to settle the big question. 
	
	In this paper, we develop a new approach, and show that for every $\eps \in (0,1/2]$, there always exists a $(1-\eps)$-EFX allocation with \emph{sublinear} number of unallocated goods and high Nash welfare. 
	For this, we reduce the EFX problem to a novel problem in extremal graph theory. We introduce the notion of \emph{rainbow cycle number} $\ram(\cdot)$. For all $d \in \mathbb{N}$, $\ram(d)$ is the largest $k$ such that there exists a $k$-partite graph $G =(\cup_{i \in [k]} V_i, E)$, in which 
	\begin{itemize}
		\item each part has at most $d$ vertices, i.e., $\abs{V_i} \leq d$ for all $i \in [k]$,
		\item for any two parts $V_i$ and $V_j$, each vertex in $V_i$ has an incoming edge from some vertex in $V_j$ and vice-versa, and
		\item there exists no cycle in $G$ that contains at most one vertex from each part.
	\end{itemize}
	We show that any upper bound on $\ram(d)$ directly translates to a sublinear bound on the number of unallocated goods. We establish a polynomial upper bound on $\ram(d)$, yielding our main result. Furthermore, our approach is constructive, which also gives a polynomial-time algorithm for finding such an allocation.
\end{abstract}

\clearpage 

\section{Introduction}\label{intro}
Fair division of resources is a fundamental problem in many disciplines, including computer science, economics, and social choice theory. The objective is to distribute resources among agents in a \emph{fair} (no agent is significantly unhappy with her allocation) and \emph{efficient} (there is no other \emph{fair allocation} that can achieve better total welfare) manner. Mentions of such problems date back to the Bible and ancient Greek mythology. Today the issue of fair division arises in division of labor, inheritance, or computing resources, divorce settlements, partnership dissolutions, splitting rent among tenants, splitting taxi fare among passengers, dividing household tasks, air traffic management, frequency allocation, and so on. 
In the internet age, the existence of several centralized platforms and more computational power has triggered substantial interest from the economics and computer science community to find computationally tractable protocols to allocate resources fairly; see Spliddit~\cite{spliddit} and Fair Outcomes~\cite{fairoutcome} for more details on fair division protocols used in real-life scenarios.

\paragraph{Discrete Fair Division.} In this paper, we focus on one of the most important open problems in discrete fair division. To this end, we first describe a typical setup of a problem instance: Given a set $N$ of $n$ agents and a set $M$ of $m$ \emph{indivisible} goods, the goal is to determine a partition $X = \langle X_1,X_2, \dots ,X_n \rangle$ of the good set $M$ such that agent $i \in N$ receives the bundle $X_i$ and the allocation is \emph{fair}. Depending on the notion of fairness, there are plethora of problems in this setting. 

\paragraph{EFX Allocations.} A quintessential notion of fairness is that of envy-freeness (EF): an allocation $X$ is said to be envy-free if and only if for every pair of agents $i$ and $j$ we have $v_i(X_i) \geq v_i(X_j)$, i.e., each agent $i$, values her own bundle at least as much as she values the bundles of other agents. However, such allocations may not always exist: consider a simple example with two agents having a positive valuation towards a single good. The agent that gets this good is envied by the one that does not. Therefore, several relaxations of envy-freeness have been proposed and studied over the last fifteen years~\cite{LiptonMMS04,budish2011combinatorial,CaragiannisKMP016}. The most compelling relaxation is \emph{envy-freeness up to any good (EFX)}, where no agent envies the other agent following the removal of \emph{any} single good from the other's bundle; that is, an allocation $X$ is said to be EFX if and only if for every pair of agents $i$ and $j$ we have $v_i(X_i) \geq v_i(X_j \setminus \left\{g\right\})$ for all $g \in X_j$. It is also regarded as the best analogue of envy-freeness in discrete fair division: Caragiannis et al.~\cite{CaragiannisGravin19} remarked that,

\begin{quote}
	``\textit{Arguably, EFX is the best fairness analog of envy-freeness for indivisible items.}'' 
\end{quote}

Unfortunately, it is not known whether EFX allocations always exist, even when there are only four agents with additive valuations despite significant efforts by many researchers, e.g., see ~\cite{CaragiannisKMP016, Moulin19}. Indeed, only recently was this question resolved affirmatively for three agents with additive valuations~\cite{CGM20}\footnote{Recipient of the Exemplary Theory Paper Award and the Best Paper with a Student Lead Author Award at ACM EC 2020}. In fact, Procaccia ~\cite{ProcaciaCACM} remarked that,

\begin{quote}
	``\textit{This fundamental and deceptively accessible question is open. In my view, it is the successor of envy-free cake cutting as fair division's biggest problem.}'' 
\end{quote}

There has been a substantial study on the existence of EFX allocation in special cases and its relaxations. For instance, EFX allocations exist when agents' valuations are identical~\cite{TimPlaut18}, binary~\cite{darmann2014binary,barman2018binarynsw}, and bi-valued~\cite{amanatidis2020mnwefx}. The two primary relaxations of EFX are approximate-EFX allocations and partial-EFX allocations:
\begin{itemize}
	\item \textbf{Approximate-EFX Allocation:} An allocation $X = \langle X_1,X_2, \dots, X_n \rangle$ is an $\alpha$-EFX allocation 
	for some scalar $\alpha \in (0,1]$, if for every pair of agents $i$ and $j$, we have $v_i(X_i) \geq \alpha \cdot v_i(X_j \setminus \left\{g\right\})$ for all $g \in X_j$. Plaut and Roughgarden~\cite{TimPlaut18} showed the existence of $\tfrac{1}{2}$-EFX allocations. A clever modification of the same approach leads to a $0.618$-EFX allocation~\cite{AmanatidisMN20}. 
	\item \textbf{Partial-EFX Allocation:} An allocation $X = \langle X_1,X_2, \dots , X_n \rangle$ is called a partial-EFX allocation if $X$ is EFX and not all goods are necessarily allocated, i.e., $\cup_{i \in [n]} X_i \subseteq M$. There is always a trivial partial EFX allocation where each $X_i$ is empty. Therefore, a good partial EFX allocation is the one which has good \emph{qualitative} and \emph{quantitative} guarantees on the unallocated goods. Caragiannis et al.~\cite{CaragiannisGravin19} showed that there exists a partial EFX allocation where every agent gets a bundle that she values at least as much as half of her value for the bundle she receives in a \emph{Nash welfare} maximizing allocation. Here, the Nash welfare of an allocation $\mathit{NW}(X) = (\prod_{i \in [n]} v_i(X_i) )^{\nicefrac{1}{n}}$ is another popular measure of fairness and economic efficiency. Following the same line of work, Chaudhury et al.~\cite{CKMS20} showed that there always exists a partial EFX allocation $X$ and a set of unallocated goods $P$ such that 
	\begin{itemize}
		\item \emph{nobody envies the set of unallocated items:} $v_i(X_i) \geq v_i(P)$ for all $i \in N$, and  
		\item \emph{at most $n-1$ goods are unallocated:} $\lvert P \rvert \le n-1$. 
	\end{itemize}  
\end{itemize}

We remark that studying relaxations (of EFX allocations) is a systematic and promising direction to investigate the existence of EFX allocations. It has been suspected in Plaut and Roughgarden~\cite{TimPlaut18} that EFX allocations may not exist in the general setting: 
\begin{quote}
	``\textit{We suspect that at least for general valuations, there exist instances where no EFX allocation exists, and it may be easier to find a counterexample in that setting.}'' 
\end{quote}
However, finding counter-examples, at least in the additive setting, seems to be a very challenging task; quite recently Manurangsi and Suksompong~\cite{ManurangsiS20} showed that when agents valuations for individual items are drawn at random from a probability distribution, then EFX allocations exist with high probability. This demands a non brute-force approach to find counter-examples, if any. Thus finding better relaxations (improving the approximation factor or reducing the number of unallocated goods in a partial EFX allocation) is a crucial step to find the right answer to this big open question. We achieve exactly this by our first main result,

\begin{theorem}
	\label{mainthm1}
	For all $\eps \in (0,1/2]$ we can determine a partial allocation $X$ and a set of unallocated goods $P$ in polynomial time such that
	\begin{itemize}
		\item $X$ is $(1- \varepsilon)$-EFX,
		\item $\abs{P} \le 64 (n / \eps)^{4 / 5}$.
	\end{itemize}
\end{theorem}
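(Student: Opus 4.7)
The plan is a potential-argument reduction to the extremal problem on $\ram(d)$ introduced in the abstract. Maintain a partial $(1-\eps)$-EFX allocation $(X, P)$ together with a potential $\Phi$ that is non-decreasing over the algorithm (for instance, a lexicographic vector of bundle values, or $\sum_i \log v_i(X_i)$ shifted to stay finite). Start, say, from the $(n-1)$-unallocated allocation of \cite{CKMS20} with the uniform $(1-\eps)$ discount baked in. I would prove that whenever $\abs{P} > 64(n/\eps)^{4/5}$, there is a local reassignment of goods and bundles that preserves $(1-\eps)$-EFX and strictly increases $\Phi$; since $\Phi$ is bounded above, the algorithm terminates in polynomially many steps at an allocation meeting the claimed bound.

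\textbf{Building the rainbow graph.} For each agent $i$ in an appropriate set of ``stuck'' agents, assemble a part $V_i$ of at most $d$ \emph{candidate replacement bundles} --- sub-bundles $S \subseteq X_j$ (for some $j\neq i$), possibly augmented with a fragment of $P$, that $i$ values enough and that respects $(1-\eps)$-EFX for $i$. Here $d$ is a parameter to be tuned. Place a directed edge from $u \in V_j$ to $v \in V_i$ exactly when $u$ certifies that assigning $v$ to agent $i$ is compatible with a cyclic update involving $j$. A direct pigeonhole argument on $P$ shows that once $\abs{P}$ is large enough every vertex in $V_i$ receives an incoming edge from every other part $V_j$; otherwise items of $P$ could be transferred to some $X_i$ without violating $(1-\eps)$-EFX and with strict improvement of $\Phi$, contradicting $\Phi$-maximality. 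A cycle using at most one vertex per part --- precisely a rainbow cycle in the sense of the abstract --- translates into a cyclic bundle reassignment that strictly increases $\Phi$. Hence, at a $\Phi$-maximal configuration the graph is rainbow-cycle-free, and its number of parts is at most $\ram(d)$.

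\textbf{Bounding $\ram(d)$.} The substantive obstacle is establishing a polynomial upper bound of the form $\ram(d) \le c\,d^{\alpha}$ for small explicit $c,\alpha$. I would attack this in extremal-combinatorics style: using the edge-richness hypothesis, fix for every ordered pair $(V_j, V_i)$ an in-neighbor function $f_{j,i}\colon V_i \to V_j$; starting from a random vertex and greedily applying these functions along a random ordering of parts, generate a walk that either visits a new part at every step or revisits one. Any revisit yields a rainbow closed walk and, after a shortcut, a rainbow cycle, which is forbidden. A counting / incidence argument in the spirit of K\H{o}v\'ari--S\'os--Tur\'an, bounding the number of rainbow paths that can coexist without producing a cycle, should force the number of parts $k$ to be polynomially bounded in $d$. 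The combinatorial exponent extracted here is exactly what drives the final $4/5$ exponent and the constant $64$.

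\textbf{Parameter balancing and conclusion.} With a bound $\ram(d) \le c\,d^{\alpha}$ in hand, the size of the unallocated pool is controlled by (goods associated per part) $\times$ (number of parts), a polynomial in $d$. Choosing $d$ to balance these two factors as a function of $n$ and $\eps$ produces $\abs{P} \le 64 (n/\eps)^{4/5}$. The algorithm is polynomial-time because each improvement step reduces to rainbow-cycle detection in an auxiliary graph of size polynomial in $n$ and $1/\eps$, and only polynomially many improvement rounds occur before $\Phi$ saturates. I expect the hardest component to be the extremal bound on $\ram(d)$; the EFX-to-graph reduction, while delicate, is a natural elaboration of the champion-graph machinery of \cite{CKMS20, CGM20}.
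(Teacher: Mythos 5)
Your high-level architecture (potential function, improvement step whenever the pool is too large, reduction to a rainbow-cycle-free multipartite graph, parameter balancing) matches the paper's, but there are two genuine gaps. The first and most serious is the bound on $\ram(d)$, which you yourself flag as the hardest component but whose sketch does not work as stated. You propose to follow in-neighbor functions $f_{j,i}$ along a walk and argue that ``any revisit yields a rainbow closed walk and, after a shortcut, a rainbow cycle.'' A revisit of a part $V_i$ at a \emph{different} vertex than before gives a closed walk that uses two vertices of $V_i$; this is not a cycle of the required kind, and there is no edge you can use to ``shortcut'' it, since the hypothesis only guarantees that every vertex has \emph{some} in-neighbor in every other part, not the specific edge a shortcut would need. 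This failure is exactly the crux of the problem: the paper's proof of $\ram(d)\le d^4+d$ is built around it. It first produces a non-simple cycle that reuses one distinguished part $V_{\tilde\ell}$ up to $d$ times (pigeonhole on the $\le d$ vertices of $V_{\tilde\ell}$), and then replaces each passage through $V_{\tilde\ell}$ by a passage through a distinct ``bypass'' part, using representative sets of the $d^2$-dimensional Boolean vectors $u_{i,j,\ell}$ that record which two-step connections $(i,x)\to(\ell,\cdot)\to(j,y)$ exist. A K\H{o}v\'ari--S\'os--Tur\'an-style count of rainbow paths is not known to substitute for this; without some concrete mechanism for converting a part-revisit into a genuine rainbow cycle, no polynomial bound---let alone the $O(d^4)$ with the constant needed for $64(n/\eps)^{4/5}$---follows.

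The second gap is in the reduction itself. You build parts indexed by ``stuck'' agents whose vertices are candidate replacement bundles, so the number of parts is at most $n$ and bounding it by $\ram(d)$ says nothing about $\abs{P}$. In the paper the parts are indexed by \emph{unallocated goods}: after discarding goods that trigger the simpler update rules, each remaining good $g$ gets a part $V_g$ whose vertices are the envy-graph sources assigned to the agents that find $g$ valuable, and edges encode championship (most-envious-agent relations). Two further ingredients you omit are then essential: the split of the pool into high-demand goods (valuable to more than $d$ agents) and low-demand goods, with $\abs{H_X}\le 2n/(\eps d)$ following from the fact that no agent heavily envies the pool and hence values at most $2/\eps$ goods at more than $\eps\, v_i(X_i)$; and the guarantee $\abs{V_g}\le d$ for low-demand goods, which is what makes $\ram(d)$ the right quantity. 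Only then does the balancing $2\max\bigl(2n/(\eps d),\, 2d^4\bigr)$ with $d=\ceil{(n/\eps)^{1/5}}$ yield $64(n/\eps)^{4/5}$. As written, your ``goods associated per part times number of parts'' accounting has no justified bound on either factor.
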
   

We remark that reducing the number of unallocated goods could be quite challenging: Indeed, a corollary from the main result in~\cite{CKMS20} already establishes that there exists a partial EFX allocation with at most two goods unallocated when there are three agents and at most three goods when there are four agents. However, removing the last two goods to obtain an EFX allocation for three agents turns out to be highly non-trivial task and the proof in ~\cite{CGM20} requires careful and cumbersome case analysis. Quite recently, Berger et al.~\cite{BCFF'21} overcome similar challenges and show the existence of EFX allocations with at most $n-2$ unallocated goods, and the existence of EFX allocations with at most one unallocated good when there are four agents.\footnote{Their proof also works for a broader class of valuation functions called \emph{nice cancellable valuations}. A valuation $v$ is a nice cancellable valuation if (i) $v(A) \neq v(B)$ for all $A \neq B$ and (ii) for all $A, B \subseteq M$ and $g \in M \setminus (A \cup B)$, $v(A \cup \{g\}) > v(B \cup \{g\})$ implies that $v(A) > v(B) $.} Their proof is also very involved and spans over 18 pages.  Furthermore, in Section~\ref{limitationoftechnique} of this paper, we show that the techniques in~\cite{CGM20, BCFF'21} does not extend to four agents with additive valuations for finding a $(1-\varepsilon)$-EFX allocation. 

\emph{In this paper, we develop a novel method that reduces the problem of determining good relaxations of EFX allocations to a combinatorial problem in graph theory.} We call it the \emph{rainbow cycle number} of an integer, defined as follows. 

\begin{definition}
	\label{Rdef}
	For any positive integer $d$, the rainbow cycle number or  $\ram(d)$ is the largest $k$ such that there exists a directed $k$-partite graph $G = (\cup _{i \in [k]} V_i, E)$ such that 
	\begin{enumerate}
		\item $\abs{V_i}\leq d$ for all $i \in [k]$, 
		\item for any two distinct parts $V_i$ and $V_j$ in $G$, every vertex in $V_i$ has an incoming edge from a vertex in $V_j$, and 
		\item there exists no cycle in $G$ that intersects each part at most once.
	\end{enumerate}
\end{definition}

Let us deduce that $\ram(1)=1$: It is clear that $G$ can be a single vertex and satisfy all the conditions in Definition~\ref{Rdef} and thus $\ram(1) \geq 1$. However, $\ram(1)$ cannot be larger than one, as otherwise we have two parts $V_1$ and $V_2$ in a graph $G$, where there is exactly one vertex each in $V_1$ and $V_2$. So let $V_1 = \left\{a_1\right\}$ and $V_2 = \left\{a_2\right\}$. By condition 2 in Definition~\ref{Rdef}, we must have an edge from $a_1$ to $a_2$ and an edge from $a_2$ to $a_1$. This gives a 2-cycle $a_1 \rightarrow a_2 \rightarrow a_1$. However, this cycle contains exactly one vertex from each $V_1$ and $V_2$, which contradicts condition 3 in Definition~\ref{Rdef}.

Similarly, using a more involved argument (appearing below) we can also determine that $R(2) = 2$. However, it is not at all clear what values $R(d)$ takes, or if it is finite for all integers $d$. 
A key technical result of this paper is a polynomial (in $d$) upper-bound on $\ram(d)$.

\begin{theorem}
	\label{mainthm3}
	For all $d \ge 1$, we have $\ram(d) \leq d^4 + d$. Furthermore, let $G$ be a $k$-partite digraph with $k > d^4 + d$ parts of cardinality at most $d$ each, such that for every vertex $v$ and any part $W$ not containing $v$, there is an edge from $W$ to $v$. Then, there exists a cycle in $G$ visiting each part at most once, and it can be found in time polynomial in $k$. 
\end{theorem}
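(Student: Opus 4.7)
The plan is to prove the contrapositive: assuming $G$ satisfies the hypothesis, has $k$ parts, and contains no cycle visiting each part at most once, I will conclude $k \le d^4 + d$. The argument will be constructive and will yield the polynomial-time algorithm.

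First I would construct a Hamiltonian rainbow path $P: u_1 \to u_2 \to \cdots \to u_k$, with $u_l \in V_{\pi(l)}$ for some permutation $\pi$ of the parts. Starting at any vertex, I would iteratively prepend predecessors from not-yet-visited parts; the hypothesis guarantees such predecessors always exist, so the extension continues until all $k$ parts appear. Absence of a rainbow cycle then forces $P$ to have no backward chord $u_a \to u_b$ with $a > b$, since such a chord together with the $P$-segment from $u_b$ to $u_a$ would form a rainbow cycle. Consequently, for every pair $l < l'$, the required incoming edge to $u_l$ from $V_{\pi(l')}$ must come from a non-path vertex $w_{l,l'} \in V_{\pi(l')} \setminus \{u_{l'}\}$, drawn from a set of size at most $d-1$.

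The heart of the proof is a four-round pigeonhole-or-cycle refinement. I would maintain at each round $r$ a distinguished vertex $w^{(r)}$ together with a large client set $A_r$ of path indices satisfying $w^{(r)} \to u_l$ for every $l \in A_r$. In round one, pigeonhole over the at most $d$ vertices of $V_{\pi(k)}$ produces $w^{(1)}$ with $|A_1| \ge (k-1)/d$. In each subsequent round I would analyze the in-edges of $w^{(r)}$: either some path vertex $u_m$ with $m > \min A_r$ satisfies $u_m \to w^{(r)}$, in which case the $P$-segment $u_{\min A_r} \to \cdots \to u_m$ together with the arc $w^{(r)} \to u_{\min A_r}$ exhibits a rainbow cycle (its parts are distinct because $m<k$ and previously used parts are tracked), or all such in-edges arise from extra vertices in $V_{\pi(m)} \setminus \{u_m\}$, and pigeonhole again isolates a new distinguished vertex $w^{(r+1)}$ with $|A_{r+1}| \ge |A_r|/d$. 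After four rounds the refined client set has size at least $k/d^4$; the assumption $k > d^4 + d$ forces it to be nontrivial, so the final round must exhibit a cycle. The additive $+d$ absorbs the boundary loss in the initial pigeonhole over $V_{\pi(k)}$ (which loses a factor $d$ rather than $d-1$).

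The main obstacle will be the precise bookkeeping of "used" versus "available" parts across the four rounds: each candidate rainbow cycle combines a segment of $P$ with the distinguished vertices $w^{(1)}, \ldots, w^{(r)}$, and must visit every part at most once. Tracking part indices carefully through the pigeonhole steps and verifying that the accumulated segment and the closing arc are part-disjoint is where the fourth power of $d$ naturally emerges, and it is the main place where the cleaner-looking argument could fail if the "available parts" set shrinks too fast. The overall procedure runs in polynomial time because each round performs only $O(kd)$ vertex/edge inspections and a single pigeonhole over $\le d$ candidates.
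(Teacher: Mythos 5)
Your setup is sound: the Hamiltonian rainbow path $P = u_1 \rightarrow \cdots \rightarrow u_k$ exists by repeatedly prepending a predecessor from an unvisited part, the absence of a rainbow cycle does forbid backward chords among the $u_l$, and the first pigeonhole over the at most $d$ vertices of $V_{\pi(k)}$ legitimately yields $w^{(1)}$ with a client set $A_1$ of size at least $(k-1)/d$. The gap is in the recursive step, which is where the entire bound would have to come from. In round $r+1$ you propose to analyze the in-edges of $w^{(r)}$ and, if no cycle appears, to ``pigeonhole again'' to obtain $w^{(r+1)}$ with $|A_{r+1}| \ge |A_r|/d$. But $w^{(r)}$ is a single vertex, and its incoming edges come one from each of the roughly $k$ other parts, from roughly $k$ distinct source vertices scattered over distinct parts; there is no set of at most $d$ candidates to pigeonhole over, so neither $w^{(r+1)}$ nor $A_{r+1}$ is actually constructed. (The only pigeonhole of the round-one type available is over the vertices of a single fixed part, applied to the in-edges of the \emph{clients} rather than of $w^{(r)}$; but that accumulates several distinguished vertices all pointing \emph{into} a common client set, with no mechanism for an edge pointing back into any of them to close a cycle.) Relatedly, the termination claim is unsupported: even granting the recursion, a final client set of size greater than one does not force the last round to exhibit a cycle, since a cycle still requires an edge from some $u_m$ with $m > \min A_r$ into $w^{(r)}$, and nothing in the construction guarantees such an edge once the non-cycle alternative has been taken four times.

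For comparison, the paper's proof does not iterate a pigeonhole on a shrinking client set. It fixes the first $d$ parts, encodes for each ordered pair $(i,j) \in [d]\times[d]$ and each further part $V_\ell$ a Boolean vector in $\{0,1\}^{d^2}$ recording which vertex pairs of $V_i$ and $V_j$ are joined by a two-edge path through $V_\ell$, and extracts for each pair a representative set of at most $d^2$ such parts; since $d^2$ pairs consume at most $d^2$ parts each, some part $V_{\tilde{\ell}}$ survives when $k > d^4 + d$. It then builds a closed walk alternating between $V_1,\dots,V_d$ and $V_{\tilde{\ell}}$, applies a single pigeonhole on the at most $d$ vertices of $V_{\tilde{\ell}}$ to close the walk, and finally reroutes each hop through $V_{\tilde{\ell}}$ via a distinct representative ``bypass'' part to make the cycle rainbow. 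To salvage your approach you would need to replace the rounds by a mechanism of this kind: one that certifies, for each two-edge connection you rely on, an alternative connection through a fresh, previously unused part.
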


Observe that the definition of the rainbow cycle number ($\ram( \cdot )$) is independent of the agents, goods and valuation functions. In the second key result of this paper, we establish a direct relation between the rainbow cycle number and the existence of better EFX relaxations: Finding a good upper bound on the rainbow cycle number can get us weaker relaxations of EFX allocations (we can asymptotically improve the number of unallocated goods). Formally, 

\begin{theorem}
	\label{mainthm2}
	Let $h(d) = d \cdot \ram(d)$ and $\eps \in (0,1/2]$.  Let $h^{-1}(n/\eps)$ be the smallest integer such that $h(d) \ge n/\eps$. Then, there is a $(1- \varepsilon)$-EFX allocation $X$ and a set of unallocated goods $P$ such that $\abs{P} \leq ({4n} / {(\eps \cdot h^{-1}(2n / \varepsilon))}$. 
\end{theorem}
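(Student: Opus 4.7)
The plan is to start from the empty partial allocation and iteratively perform a reallocation step that strictly increases a bounded, discrete potential such as $\Phi(X) = \sum_{i \in N} \lfloor \log_{1/(1-\varepsilon)} v_i(X_i) \rfloor$ (after rescaling valuations so they are at least $1$), while preserving the $(1-\varepsilon)$-EFX property. The core lemma to prove is: whenever $\abs{P} > 4n/(\varepsilon d^*)$ with $d^* = h^{-1}(2n/\varepsilon)$, such a beneficial reallocation exists and can be found efficiently. Since $\Phi$ is bounded from above by $O(n \log v_{\max}/\varepsilon)$, the procedure terminates after polynomially many steps and outputs an allocation meeting the claimed bound on $\abs{P}$.

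To realize this lemma I would build a directed multipartite ``champions'' graph $G$ encoding the possible reallocation moves from the current $(X, P)$. The parts of $G$ will be chosen so that: (i) each vertex represents a candidate augmented bundle of the form $X_j \cup S$ for some $j$ and some small $S \subseteq P$; (ii) each part contains at most $d^*$ such candidates, enforced by a canonical minimality rule on the choice of $S$ (for instance, $S$ is an inclusion-minimal subset of $P$ for which the responsible agent becomes $(1-\varepsilon)$-envious of the augmented bundle); (iii) edges encode the $(1-\varepsilon)$-envy relation witnessing that performing the indicated swap produces a valid, EFX-preserving step; and (iv) the number of parts $k$ exceeds $\ram(d^*)$. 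The quantitative lever is the defining inequality $d^* \cdot \ram(d^*) \ge 2n/\varepsilon$ together with the hypothesis $\abs{P} > 4n/(\varepsilon d^*)$, from which a pigeonhole/counting argument should both produce sufficiently many parts and verify the ``incoming edge from every other part'' condition of Definition~\ref{Rdef}.

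Once such a $G$ is in hand, Theorem~\ref{mainthm3} furnishes, in polynomial time, a cycle $C$ visiting each part at most once. Interpreting $C$ as a cyclic rotation of the candidate bundles along it yields a new partial allocation $X'$: every agent on $C$ receives a bundle she values by at least a factor $1/(1-\varepsilon)$ more than her current one, which strictly increases $\Phi$. The displaced goods, together with the portions of the chosen augmentations that are not reused by the next agent on the cycle, are returned to an updated $P'$. Preservation of $(1-\varepsilon)$-EFX in $X'$ should follow from the minimality condition on the augmenting sets $S$ combined with the fact that agents outside $C$ retain their current bundles.

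The most delicate step I expect is the graph construction in the second paragraph: one must simultaneously cap each part's vertex count at $d^*$, which demands a strong canonical selection of ``which candidate bundles qualify,'' and force each vertex to have an incoming edge from \emph{every} other part, which in turn demands that $P$ — even when only of size roughly $(n/\varepsilon)^{4/5}$ — is rich enough to supply, for every pair (agent, alternative part), an augmenting subset triggering the required $(1-\varepsilon)$-envy edge. Reconciling these two opposing pressures while still arranging $k > \ram(d^*)$, and then certifying that the rainbow-cycle rotation preserves $(1-\varepsilon)$-EFX for the agents not on the cycle, is the heart of the argument.
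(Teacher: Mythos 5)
Your high-level skeleton matches the paper's: maintain a $(1-\eps)$-EFX partial allocation, argue that whenever $\abs{P}$ exceeds the target bound a Pareto-improving reallocation exists via a rainbow cycle in a multipartite ``champion''-type graph, and terminate via a bounded multiplicative potential. However, there is a genuine gap at exactly the point you flag as ``the most delicate step,'' and the idea you are missing is the one that makes the whole reduction work: the classification of unallocated goods by \emph{demand}. The paper calls an unallocated good $g$ \emph{valuable} to agent $i$ if $v_i(g) > \eps\, v_i(X_i)$, and splits $P$ into high-demand goods (valuable to more than $d$ agents) and low-demand goods (valuable to at most $d$ agents). The high-demand goods are bounded by $2n/(\eps d)$ by a direct counting argument: since no agent heavily envies the pool (rule $U_2$ inapplicable), each agent finds at most $2/\eps$ unallocated goods valuable. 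The group champion graph is then built only on the low-demand goods: its parts are indexed by \emph{goods} (not by candidate bundles $X_j\cup S$ as you propose), and the vertices of part $V_g$ are the envy-graph sources assigned to the agents that find $g$ valuable --- so the cap of $d$ vertices per part is automatic from the low-demand definition, not from any ``canonical minimality rule.'' Likewise, the ``incoming edge from every other part'' condition is not obtained by pigeonhole on $\abs{P}$; it follows because rule $U_1$ is inapplicable (some agent strongly envies $X_s\cup g$ for every source $s$ and every unallocated $g$) together with the observation that any agent heavily envying $X_s\cup g$ for a source $s$ must find $g$ valuable, hence the champion of $X_{s}\cup g$ lives in part $V_g$. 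Without the demand split, the ``two opposing pressures'' you describe cannot be reconciled, and your construction as stated has no mechanism to bound part sizes by $d^*$.

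Two further points. First, your quantitative accounting is off: the conclusion $\abs{P}\le 4n/(\eps\cdot h^{-1}(2n/\eps))$ arises as $\abs{P}\le \abs{H_X}+\abs{L_X}\le 2\max\bigl(2n/(\eps d),\ram(d)\bigr)$ with $d$ chosen so that $2n/(\eps d)\le \ram(d)$; the hypothesis ``$\abs{P}>4n/(\eps d^*)$ forces a cycle'' is not what is proved --- rather, $\abs{L_X}>\ram(d)$ forces a rainbow cycle, and $\abs{H_X}$ is bounded unconditionally under the standing assumptions. Second, your claim that the cyclic rotation preserves $(1-\eps)$-EFX ``from the minimality condition on the augmenting sets'' needs the machinery of most-envious-agent witness pairs (rule $U_3$ of~\cite{CKMS20}): the vertices of the cycle must be \emph{distinct sources} of the acyclic envy graph with distinct champions reachable from them, which is why the paper places sources (one assigned per agent) in the parts and why a short argument is needed to extract a subsequence of the rainbow cycle with pairwise distinct sources before $U_3$ can be invoked. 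These are the concrete pieces your proposal would need to supply.
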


Theorems~\ref{mainthm3} and~\ref{mainthm2} imply Theorem~\ref{mainthm1}. We remark that, although we give a polynomial upper bound on $\ram(d)$, we believe that there is further room for improvement. As an illustration, we briefly show that $\ram(2) \leq 2$, which is significantly better than our upper-bound for $d=2$ obtained from Theorem~\ref{mainthm2}. 
We prove this by contradiction. Let us assume otherwise and let $V_1$, $V_2$ and $V_3$ be any three parts of $G$. We first look into the edges of the induced bipartite graph $G[V_1 \cup V_2]$. Without loss of generality, let us assume that vertex $b_1$ in $V_2$ has an incoming edge from vertex $a_1$ in $V_1$. By condition 2 in Definition~\ref{Rdef}, $a_1$ has an incoming edge from some vertex in $V_2$. However, this vertex cannot be $b_1$ as this will violate condition 3 in Definition~\ref{Rdef}. This implies that there must be another vertex in $V_2$, say $b_2$ that has an edge to $a_1$.  Again, by a similar argument, $b_2$ cannot have an incoming edge from $a_1$ and therefore has an incoming edge from another vertex in $V_1$, say $a_2$ and $a_2$ has the incoming edge from $b_1$ and not $b_2$ (since there can be no other vertices in $V_2$). Thus, the induced bipartite graph $G[V_1 \cup V_2]$ is a four-cycle as shown below 
\begin{center}
	\begin{tikzpicture}
	[
	agent/.style={circle, draw=green!60, fill=green!5, very thick},
	good/.style={circle, draw=red!60, fill=red!5, very thick, minimum size=1pt},
	]
	\draw[black, very thick] (-0.5,0.5) rectangle (0.5,-2.5);
	\draw[black, very thick] (-0.5+2,0.5) rectangle (0.5+2,-2.5);
	\node at (0,-2.75) {$V_1$};
	\node at (2,-2.75) {$V_2$};
	
	\node[agent]      (a1) at (0,0)      {$\scriptstyle{a_1}$};
	\node[agent]      (a2) at (0,-2)      {$\scriptstyle{a_2}$};
	\node[agent]      (b1) at (2,0)     {$\scriptstyle{b_1}$};
	\node[agent]      (b2) at (2,-2)     {$\scriptstyle{b_2}$};

	\draw[->,red,thick] (a1) -- (b1);
	\draw[->,red,thick] (b2) -- (a1);
	\draw[->,red,thick] (a2) -- (b2);
	\draw[->,red,thick] (b1) -- (a2);
	\end{tikzpicture}	
\end{center}

Note that the induced bipartite graph $G[V_2 \cup V_3]$ will be isomorphic to $G[V_1 \cup V_2]$.  Thus, so far we have the following edges in $G[V_1 \cup V_2 \cup V_3]$,
\begin{center}
	\begin{tikzpicture}
	[
	agent/.style={circle, draw=green!60, fill=green!5, very thick},
	good/.style={circle, draw=red!60, fill=red!5, very thick, minimum size=1pt},
	]
	\draw[black, very thick] (-0.5,0.5) rectangle (0.5,-2.5);
	\draw[black, very thick] (-0.5+2,0.5) rectangle (0.5+2,-2.5);
	\draw[black, very thick] (-0.5+4,0.5) rectangle (0.5+4,-2.5);
	\node at (0,-2.75) {$V_1$};
	\node at (2,-2.75) {$V_2$};
	\node at (4,-2.75) {$V_3$};	
	
	\node[agent]      (a1) at (0,0)      {$\scriptstyle{a_1}$};
	\node[agent]      (a2) at (0,-2)      {$\scriptstyle{a_2}$};
	\node[agent]      (b1) at (2,0)     {$\scriptstyle{b_1}$};
	\node[agent]      (b2) at (2,-2)     {$\scriptstyle{b_2}$};
	\node[agent]      (c1) at (4,0)     {$\scriptstyle{c_1}$};
	\node[agent]      (c2) at (4,-2)     {$\scriptstyle{c_2}$};
	
	\draw[->,red,thick] (a1) -- (b1);
	\draw[->,red,thick] (b2) -- (a1);
	\draw[->,red,thick] (a2) -- (b2);
	\draw[->,red,thick] (b1) -- (a2);
	
	\draw[->,red,thick] (b1) -- (c1);
	\draw[->,red,thick] (c2) -- (b1);
	\draw[->,red,thick] (b2) -- (c2);
	\draw[->,red,thick] (c1) -- (b2);
	\end{tikzpicture}	
\end{center}
We now look at the edges between the parts $V_1$ and $V_3$. Since $G[V_1 \cup V_3]$ is isomorphic to $G[V_1 \cup V_2]$, it must also be a four-cycle and hence in $G[V_1 \cup V_3]$, there is either an edge from $a_1$ to $c_1$ or from $c_1$ to $a_1$. If there is an edge from $a_1$ to $c_1$, then we have a $3$-cycle $a_1 \rightarrow c_1 \rightarrow b_2 \rightarrow a_1$, which visits each part of $G$ at most once and thus this is a contradiction. Similarly, if there is an edge from $c_1$ to $a_1$, then also we have a $3$-cycle $a_1 \rightarrow b_1 \rightarrow c_1 \rightarrow a_1$, which visits each part of $G$ at most once and thus this is also a contradiction.

We suspect that $\ram(d) \in \mathcal{O}(d)$. We believe that finding better upper bounds on $\ram(d)$ is a natural combinatorial question and  better upper-bounds to $\ram(d)$ imply the existence of better relaxations of EFX allocations.  Therefore investigating better upper bounds on  the rainbow cycle number  is of interest in its own right and we leave this as an interesting open problem.

\subsection{Finding $(1-\varepsilon)$-EFX allocations with high Nash welfare.} Let us recall that \emph{efficiency} is also an important and desirable property of the allocations in Fair Division. The efficiency of an allocation is a measure of the overall welfare the allocation achieves. This is important as an envy-free  allocation could be otherwise unsatisfactory: consider a simple instance with two agents $1$ and $2$ and two goods $g_1$ and $g_2$. Let $v_1(g_1) = v_2(g_2) = 1$ and $v_1(g_2) = v_2(g_1) = 0$. Note that $X_1 \gets \{g_2\}$ and $X_2 \gets \{g_1\}$ is an EFX allocation as each bundle is a singleton and following the removal of a single good results in an empty bundle which is unenvied. However, there is clearly a better EFX allocation, where the individual and the total welfare is better, namely $X_1 \gets \{g_1\}$ and $X_2 \gets \{g_2\}$. 

\emph{Nash welfare} of an allocation $X$, defined as the geometric mean of the valuations of the agents, $(\prod_{i \in [n]} v_i(X_i) )^{\nicefrac{1}{n}}$ is a popular measure of economic efficiency.\footnote{It implies other notions of efficiency like \emph{Pareto-optimality}. An allocation $X = \langle X_1,\dots ,X_n \rangle$ is Pareto-optimal if there is no allocation $Y = \langle Y_1, \dots , Y_n \rangle$ where $v_i(Y_i) \geq v_i(X_i)$ for all $i \in [n]$ and $v_j(Y_j) > v_j(X_j)$ for some $j$.} In fact, when agents have additive valuations, then the allocation with the highest Nash welfare is also EF1 (another popular fairness notion weaker than EFX). Unfortunately, maximizing Nash welfare is APX-hard. However, there have been several approximation algorithms~\cite{ColeG18, AnariGSS17, BKV18} that give a constant factor approximation. The best approximation ratio is $e^{1/e} \approx 1.445$, given  by Barman et al.~\cite{BKV18}.

Similar to the algorithm in~\cite{CKMS20}, we show that with minor modifications to our main algorithm, we can determine an allocation that satisfies the conditions in Theorem~\ref{mainthm1}, and simultaneously achieves a $2e^{1/e} \approx 2.88$ approximation of the Nash welfare, i.e.,  in polynomial time we can find efficient $(1-\varepsilon)$-EFX allocation with sublinear number of unallocated goods. 

\begin{theorem}
	\label{mainthm4}
	For all $\eps \in (0,1/2]$ we can determine a partial $(1-\varepsilon)$-EFX allocation $X$ and a set of unallocated goods $P$ in polynomial time such that  $\abs{P} \le 64 (n / \eps)^{4 / 5}$ and  $\mathit {NW}(X) \geq (1/2.88) \cdot \mathit{NW}(X^*)$, where $X^*$ is the allocation with highest Nash welfare.
\end{theorem}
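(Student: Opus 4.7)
The proof proceeds by combining the algorithm underlying Theorem~\ref{mainthm1} with the Nash welfare preservation technique of~\cite{CKMS20}. We first use the polynomial-time algorithm of Barman et al.~\cite{BKV18} to compute an initial allocation $X^{(0)}$ satisfying $\mathit{NW}(X^{(0)}) \geq \mathit{NW}(X^*)/e^{1/e}$. Rather than running the algorithm of Theorem~\ref{mainthm1} from scratch, we initialize it with $X^{(0)}$ (treating every good in $X^{(0)}$ as already allocated) and show that throughout the execution the per-agent invariant
\[
v_i(X_i) \;\geq\; v_i(X^{(0)}_i)/2 \quad \text{for every agent } i \in N
\]
is preserved. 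Taking the geometric mean over $i$ then yields $\mathit{NW}(X) \geq \mathit{NW}(X^{(0)})/2 \geq \mathit{NW}(X^*)/(2 e^{1/e})$, establishing the claimed $2.88$ factor.

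To maintain the invariant, we examine the two kinds of bundle reallocation performed by the algorithm of Theorem~\ref{mainthm1}. Rainbow-cycle rotations hand each participating agent a bundle she strictly $(1-\varepsilon)$-EFX-envied under the previous allocation, so such rotations can only increase individual values. The only reallocation step that can strictly decrease an agent's value is an EFX-reduction step, in which a single good is removed from a bundle before being handed to an envying agent; the bound $(1-\varepsilon) \geq 1/2$ ensures that each such occurrence costs the losing agent a factor of at most $2$. By scheduling the algorithm so that each agent serves as the ``losing end'' of an EFX-reduction at most once before being compensated by a subsequent envy-based improvement --- an argument analogous in spirit to~\cite{CKMS20} --- the cumulative loss per agent stays within the $1/2$ slack of the invariant.

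The main obstacle is verifying that the rainbow-cycle rotations introduced in this paper, which replace the simpler envy-cycle eliminations of~\cite{CKMS20}, remain compatible with the per-agent invariant. The key observation is that each rotation is driven by a directed edge in our envy-based auxiliary graph and thus already is envy-based in the $(1-\varepsilon)$-EFX sense, so participating agents never see a value drop. One then checks case by case that each atomic modification made by the algorithm of Theorem~\ref{mainthm1} (good insertion, champion assignment, rainbow-cycle contraction) either strictly improves each involved agent's value or is an EFX-reduction charged to the once-per-agent $1/2$ slack. Since we only change the initial allocation and add polynomial-time bookkeeping for tracking per-agent values against $X^{(0)}$, the polynomial running time of the algorithm of Theorem~\ref{mainthm1} is unaffected, yielding a polynomial-time algorithm producing a $(1-\varepsilon)$-EFX allocation with $\abs{P} \leq 64(n/\varepsilon)^{4/5}$ and $\mathit{NW}(X) \geq (1/2.88) \cdot \mathit{NW}(X^*)$.
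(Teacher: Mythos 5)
Your high-level plan---initialize the main algorithm with a high-Nash-welfare allocation and exploit the fact that no agent's value ever decreases during the algorithm's execution---is exactly the paper's strategy, and the part of your argument asserting that the update steps are value-monotone is correct and is all the paper needs. The gap is in how you obtain the initial allocation. The paper does \emph{not} start from the (non-EFX) Barman et al.\ allocation and then argue that a factor of $2$ is lost inside the algorithm; it invokes the result of Caragiannis et al.~\cite{CaragiannisGravin19} as a black box (Theorem~\ref{EFXwithNWCaragiannis}) to obtain, in polynomial time, a \emph{partial EFX allocation} that already satisfies $\mathit{NW}(X) \ge \mathit{NW}(X^*)/2.88$, and feeds that into the main algorithm; monotonicity then finishes the proof. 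Your substitute for this step does not hold up: the main algorithm's invariant is that the current allocation is $(1-\varepsilon)$-EFX at every step, so it cannot be initialized with the raw Barman et al.\ allocation; there is no ``EFX-reduction step'' in the algorithm that removes a good from an agent's bundle and decreases her value (Lemmas~\ref{envycycleelimination}, \ref{U1}, \ref{U2} and \ref{U3} all guarantee $v_i(X'_i) \ge v_i(X_i)$, so the invented charging scheme has nothing to charge); and the claim that removing a single good costs the owner at most a factor of $2$ because $(1-\varepsilon)\ge 1/2$ is false in general, since a bundle may derive almost all of its value from one good. The ``schedule losses so each agent is hit at most once and later compensated'' argument is precisely the nontrivial content of the Caragiannis et al.\ theorem, and your sketch does not establish it. If you replace your first paragraph by an appeal to their polynomial-time result (which already folds the Barman et al.\ $e^{1/e}$ factor and the factor $2$ into the constant $2.88 \approx 2e^{1/e}$), the remainder of your argument completes the proof as in the paper.
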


\subsection{Further Related Work}
Fair division has received significant attention since the seminal work of Steinhaus~\cite{Steinhaus48} in the 1940s. Other than envy-freeness, another fundamental fairness notion is that of \emph{proportionality}. Recall that, in an envy-free allocation, every agent values her own bundle at least as much as she values the bundle of any other agent. However, in a proportional allocation, each agent gets a bundle that is worth $1/n$ times her valuation on the entire set of goods. Since envy-freeness and proportionality cannot always be guaranteed while dividing indivisible goods, various relaxations of the same have been studied. Alongside EFX, another popular relaxation of envy-freeness is \emph{envy-freeness up to one good (EF1)} where no agent envies another agent following the removal of \emph{some} good from the other agent's bundle. While the existence of EFX allocations is open, EF1 allocations are known to exist for any number of agents, even when agents have weakly monotone valuation functions~\cite{LiptonMMS04}. While EF1 and EFX are fairness notions that relax envy-freeness, the most popular notion of fairness that relaxes proportionality for indivisible items is \emph{maximin share} (MMS), which was introduced by Budish~\cite{budish2011combinatorial}. While MMS allocations do not always exist~\cite{KPW18}, but there has been extensive work to come up with approximate MMS allocations~\cite{budish2011combinatorial,BL16,AMNS17,BK17,KPW18,GhodsiHSSY18,JGargMT19,GargT19}. Some works assume ordinal ranking over the goods, as opposed to cardinal values, e.g.,~\cite{AzizGMW15,BramsKK17}.

Alongside fairness, the efficiency of an allocation is also a desirable property. Two common measures of efficiency is that of Pareto-optimality and Nash welfare.  Caragiannis et al.\ \cite{CaragiannisKMP016} showed that any allocation that has the maximum Nash welfare is guaranteed to be Pareto-optimal (efficient) and EF1 (fair).  Barman et al.~\cite{BKV18} give a pseudopolynomial algorithm to find an allocation that is both EF1 and Pareto-optimal. Other works explore relaxations of EFX with high Nash welfare~\cite{CaragiannisGravin19, CKMS20}. 
\medskip

\noindent{\bf The rest of the paper is organized as follows:}
In Section~\ref{tech-overview}, we briefly highlight our main techniques used to prove our main results (Theorem~\ref{mainthm1},~\ref{mainthm2} and~\ref{mainthm3}). Then in Section~\ref{prelim}, we outline the basic concepts, notations and techniques from existing literature on EFX allocations that will be useful to prove our main results. In Sections~\ref{transformation} and~\ref{boundonR}, we give the proofs of Theorem~\ref{mainthm2} and Theorem~\ref{mainthm3} respectively. In Section~\ref{efficiency}, we show how a minor modification of our main algorithm helps us achieve our main result (Theorem~\ref{mainthm1}) with high Nash welfare (efficiency guarantees). Finally, in Section~\ref{limitationoftechnique}, we show why the technique from~\cite{CGM20} does not extend to a setting with four agents with additive valuations.

\section{Our Techniques}
\label{tech-overview}
In this section, we give a brief overview of our key ideas and techniques. We first sketch the key idea that relates the number of unallocated goods to the function  the rainbow cycle number  (Theorem~\ref{mainthm2}) and then we briefly show that $\ram(d)$ is finite. 

\paragraph{Relation between the number of unallocated goods to the rainbow cycle number.} A very crucial concept that is often used while studying relaxations of envy-freeness in discrete fair division is the \emph{envy-graph of an allocation}. Given an allocation $X = \langle X_1, X_2, \dots ,X_n \rangle$, the envy-graph $E_X$ has vertices corresponding to the agents and there is an edge from agent $i$ to agent $j$ in $E_X$ if agent $i$ envies agent $j$ ($v_i(X_i) < v_i(X_j)$). Without loss of generality, one assumes that the envy-graph of an allocation is acyclic: If there is a cycle, then one can shift the bundles along the cycle, thereby giving every agent in the cycle a strictly better bundle and the other agents retain their previous bundle. Such a procedure reduces the number of edges in the envy-graph, and one can continue this until $E_X$ is cycle-free. 

Most of the algorithms that have been used to prove the existence of relaxations of EFX allocations~\cite{CKMS20, CGM20, TimPlaut18} maintain a \emph{relaxed EFX allocation}\footnote{$(1-\varepsilon)$-EFX allocation in~\cite{CKMS20, CGM20} and $1/2$-EFX allocation in~\cite{TimPlaut18}.} $X$ on the set of allocated goods and as long as the envy-graph $E_X$ and the set of unallocated goods satisfy some ``properties", they determine another relaxed EFX allocation $X'$, in which $\phi(X') \geq \phi(X) + \delta$ for some $\delta \geq 1$, where $\phi$ is an integral upper-bounded function. In that case, we say that the relaxed EFX allocation $X'$ \emph{dominates} the relaxed EFX allocation $X$. Since $\phi$ is integral and upper-bounded, such a procedure will finally converge to a relaxed EFX allocation where the envy graph $E_X$ and the unallocated goods will not satisfy the said properties and this will be the final allocation of the algorithms.

We now highlight another crucial concept used in these algorithms. The envy-graph $E_X$ does not provide any information on an agent's valuations of the bundles formed by adding unallocated goods to the current bundles of the allocation. This information is crucial when we want to create another {dominating} relaxed EFX allocation by allocating some of the unallocated goods and unallocating some of the already allocated goods. The algorithms in~\cite{CKMS20,CGM20} make use of this information through other concepts. For instance \cite{CKMS20, CGM20} define \emph{champions}\footnote{They are called ``most envious agents" in~\cite{CKMS20}} and \emph{champion graphs}. Given an allocation $X$ and an unallocated good $g$, we say that an agent $i$ is a champion for agent $j$ w.r.t $g$ if there is a set $S \subseteq X_j \cup \{g\}$ such that $v_i(X_i) < (1-\varepsilon) \cdot v_i(S)$ and no agent (including $i$ and $j$) envies $S$ up to a factor of $(1-\varepsilon)$, following the removal of a single good, i.e, for all $\ell \in [n]$, we have $(1-\varepsilon) \cdot v_{\ell}(S \setminus \{h\}) \leq v_{\ell}(X_{\ell})$ for all $h \in S$.\footnote{Since we are dealing with $(1-\varepsilon)$-EFX allocations and not EFX allocations, we have changed the definition of champions and champion graphs appropriately.~\cite{CGM20, CKMS20} also use this definition in their algorithms as the polynomial time algorithms also deal with $(1-\varepsilon)$-EFX allocations.} A champion graph w.r.t an unallocated good $g$ has vertices corresponding to the agents (similar to the envy graph) and there is an edge from agent $i$ to agent $j$ if agent $i$ champions agent $j$ w.r.t. $g$. Depending on the configuration of the envy-graph and the champion graphs (one for each unallocated good), the current $(1-\varepsilon)$-EFX allocation $X$ is transformed into another $(1-\varepsilon)$-EFX allocation $X'$ such that $X'$ dominates $X$.  However, when the number of agents are large, there are several different possible configurations of the champion graphs and the envy-graph and it is very hard and tedious to come up with better update rules. In this paper, we introduce the notion of a \emph{group champion graph} which is significantly more insightful and well structured than the champion graphs.

Given a $(1-\varepsilon)$-EFX allocation $X$ and a set of unallocated goods $M'$, we define the group champion graph. To this end, for each agent $a \in [n]$, we assign a unique source $s(a)$ in $E_X$ such that $a$ is reachable from $s(a)$ in $E_X$ (if there are multiple sources from which $a$ is reachable in $E_X$, then pick one source arbitrarily). The group champion graph of $M'$ is a $\abs{M'}$-partite graph $G = (\cup_{g \in M'} V_g, E)$, in which each part $V_g$ contains a copy of the assigned sources of all the agents that find $g$ ``valuable"; an agent $a$ finds $g$ valuable if $v_a(\{g\}) > \varepsilon \cdot v_a(X_a)$. There is an edge from vertex $s(a)$ in $V_g$ to $s(a')$ in $V_h$ if and only if $a$ champions $s(a')$ w.r.t $g$ (see Figure~\ref{group-champion-graph} for an illustration). At a high level, the group champion graph encodes the most relevant information from all the champion graphs. We make this point more explicit by briefly explaining how group champion graphs help us prove Theorem~\ref{mainthm2}.   

\begin{figure}
	\centering 
	\begin{tikzpicture}
	[
	agent/.style={circle, draw=green!60, fill=green!5, very thick},
	good/.style={circle, draw=red!60, fill=red!5, very thick, minimum size=1pt},
	]
	\draw[black, very thick] (7-0.5,0.5) rectangle (7+0.5,-2.5);
	\draw[black, very thick] (7-0.5+2,0.5) rectangle (7+0.5+2,-2.5);
	\node at (7,-2.75) {$V_{g_a}$};
	\node at (9,-2.75) {$V_{g_b}$};
	
	\node[agent]      (a1) at (0,0)      {$\scriptstyle{a_1}$};
	\node[agent]      (a2) at (0,-2)      {$\scriptstyle{a_2}$};
	\node[agent]      (a3) at (2,0)     {$\scriptstyle{a_3}$};
	\node[agent]      (a4) at (2,-2)     {$\scriptstyle{a_4}$};
	\node[agent]      (b1) at (4,0)      {$\scriptstyle{b_1}$};
	\node[agent]      (b2) at (4,-2)      {$\scriptstyle{b_2}$};
	
	\node[agent]      (a11) at (7, -0.5)      {$\scriptstyle{a_1}$};
	\node[agent]      (a31) at (7,-1.5)      {$\scriptstyle{a_3}$};
	\node[agent]      (b11) at (9, -1)      {$\scriptstyle{b_1}$};

	\draw[->,blue,thick] (a1) -- (a2);
	\draw[->,blue,thick] (a3) -- (a4);
	\draw[->,blue,thick] (b1) -- (b2);

	\node at (2,-3.25) {\small{$a_2$ champions all agents w.r.t $g_a$.}};
	\node at (2,-4) {\small{$b_2$ champions all agents w.r.t $g_b$.}};

	\draw[->,black,thick] (a11) -- (b11);
	\draw[->,black,thick] (b11) -- (a31);	
\end{tikzpicture}	
	\caption{Illustration of a group champion graph. We have an instance with six agents $\cup_{i \in [4]} a_i$ and $\cup_{i \in [2]} b_i$ and two unallocated goods, namely $g_a$ and $g_b$. The agents $\cup_{i \in [4]}a_i$ find $g_a$ valuable and the agents $\cup_{i \in [2]} b_i$ find $g_b$ valuable. The envy graph $E_X$ of the instance is shown on the  left side. $E_X$ shows that $s(a_2) = a_1$, $s(a_4) = a_3$, and $s(b_2) = b_1$.  Also, we have that agent $a_2$ champions all the agents w.r.t $g_a$ and $b_2$ champions all the agents w.r.t $g_b$. The group champion graph (right) has two parts, $V_{g_a}$ corresponding to $g_a$ and $V_{g_b}$ corresponding to $g_b$. $V_{g_a}$ contains the sources of all the agents that find $g_a$ valuable, namely $a_1$ and $a_3$. Similarly, $V_{g_b}$ contains $b_1$. There is an edge from $a_1$ to $b_1$ as $a_2$ (which is reachable from $a_1$ in $E_X$) champions $b_1$ w.r.t to $g_a$. Similarly, there is an edge from $b_1$ to $a_3$ as $b_2$ (which is reachable from $b_1$ in $E_X$) champions $a_3$ w.r.t to $g_b$. }
	\label{group-champion-graph}
\end{figure}
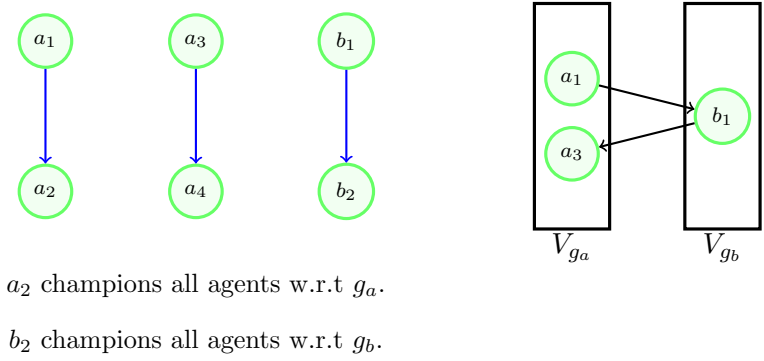

We first observe that if there is an unallocated good $g$ and an agent $i$ such that the other agents do not envy $(X_i \cup \{g\}) \setminus g'$ up to a factor of $(1-\varepsilon)$ for all $g' \in X_i \cup \{g\}$, then we allocate $g$ to $i$. Thus, we assume that for each unallocated good $g$ and each agent $i$, there is an agent $j$ that envies $(X_i \cup \{g\}) \setminus g'$ up to a factor of $(1-\varepsilon)$ for some $g' \in X_i \cup \{g\}$. In particular, this implies that every unallocated good is valuable to some agent, because if there is a good $g$ which is not valuable to any agent, i.e., $v_i(\{g\}) \leq \varepsilon \cdot v_i(X_i)$ for all $i \in [n]$, then we can simply allocate $g$ to a source $s$ in $E_X$ as no agent will envy the bundle $X_s \cup \{g\}$ up a factor of $(1-\varepsilon)$: for all $i \in [n]$, we have that $v_i(X_i) \geq v_i(X_s)$ (as $s$ is unenvied) and $\varepsilon \cdot v_i(X_i)  \geq v_i(\{g\})$, implying that $(1+\varepsilon) v_i(X_i) \geq v_i(X_s \cup \{g\})$, further implying that $v_i(X_i) \geq (1-\varepsilon) \cdot v_i(X_s \cup \{g\})$. Now we classify the set of unallocated goods into two categories depending on how many agents find them valuable: We fix an integer $d < n$ and define ``high-demand goods" and ``low-demand goods". A high-demand good is valuable to more than $d$ agents and a low-demand good is valuable to at most $d$ agents. We show in Section~\ref{transformation}, that if the number of high-demand goods is more than $2n/(\varepsilon \cdot d)$, then we can determine a dominating $(1-\varepsilon)$-EFX allocation from the existing $(1-\varepsilon)$-EFX allocation. Thus, we may assume that the number of high-demand goods is at most $2n/(\varepsilon d)$. We now bound the number of low-demand goods. Let $M''$ be the set of low-demand goods. We construct the group champion graph $G = (\cup_{g \in M''} V_g, E)$ of $M''$ in which part $V_g$ contains the assigned sources of the agents that find $g$ valuable. Note that for all $g \in M''$, $g$ is not valuable to more than $d$ agents. Thus $\abs{V_g} \leq d$ for all $g \in M''$. Now, consider any two parts $V_g$ and $V_h$ in $G$. By our assumption, for all $a \in V_h$, there is an agent that envies $(X_a \cup \{g\}) \setminus g'$ up to a factor of $(1-\varepsilon)$ for some good $g' \in X_a \cup \{g\}$, implying that for each $a$ in $V_h$, there are agents that champion $a$ w.r.t $g$. Since $a$ is a source in $E_X$, it is intuitive that the agents that champion $a$ w.r.t. $g$, must find $g$ valuable. Therefore, for all $a \in V_h$, there is a source $s(a') \in V_g$, where $a'$ champions $s(a)$ w.r.t $g$. Thus, every vertex in $V_h$ has an incoming edge from a vertex in $V_g$. \emph{In Section~\ref{transformation}, we show that whenever $G$ has a cycle that visits each part at most once, then we can determine a $(1-\varepsilon)$-EFX allocation that dominates $X$.} Therefore, we can assume that $G$ has no cycle that visits each part at most once. Since $G$ is a $\abs{M''}$-partite graph that satisfies the conditions in Definition~\ref{Rdef}, we have that the number of low-demand goods is $\abs{M''} \leq \ram(d)$. Therefore, the total number of unallocated goods is $\mathcal{O}(\mathit{max}(2n/(\varepsilon \cdot d), \ram(d)))$. By choosing the appropriate value for $d$, we arrive at the statement of Theorem~\ref{mainthm2}.

We now elaborate that $\ram(d)$ is indeed  upper-bounded, which then establishes the existence of $(1-\varepsilon)$-EFX allocations with sublinear number of unallocated goods.

\paragraph{Upper bounds on the rainbow cycle number.} 
We briefly show that for any $d \in \mathbb{N}$, $\ram(d)$ is finite. Consider a $k$-partite graph $G = (\cup_{i \in [k]} V_i, E)$ in Definition~\ref{Rdef}. For all $i \in [k]$, let $V_i = \{(i,1), (i,2), \dots, (i,\abs{V_i}) \}$. For all $i<j$ and $i'<j'$, we say that the directed bipartite graphs $G[V_i \cup V_j]$ and $G[V_{i'} \cup V_{j'}]$ have the same configuration if and only if for each directed edge from vertex $(i,a)$ to $(j,b)$ (and equivalently from $(j,b')$ to $(i,a')$) in $G[V_i \cup V_j]$, there is an edge from $(i',a)$ to $(j',b)$ (and equivalently from $(j',b')$ to $(i',a')$) in $G[V_{i'} \cup V_{j'}]$ and vice-versa. We first show that if there are $4d$ parts in $G$, say w.l.o.g. $V_1, V_2, \dots , V_{4d}$, such that  the induced directed bipartite graph $G[V_i \cup V_j]$ has the same configuration for all $1 \leq i<j \leq 4d$, then there exists a cycle in $G$ that visits each part at most once.

Consider the parts $V_1$ and $V_2$, and the induced directed bipartite graph $G[V_1 \cup V_2]$. Since every vertex in one part has an incoming edge from a vertex in the other part, $G[V_1 \cup V_2]$ is cyclic. Let the simple cycle be $C = (1,i_1) \rightarrow (2,i_2) \rightarrow (1,i_3) \rightarrow \dots \rightarrow (2,i_{2\beta}) \rightarrow (1,i_1)$ for some $\beta \leq d$. 
Since all the induced bipartite graphs $G[V_i \cup V_j]$ have the same configuration for all $1 \leq i<j \leq 4d$, we can claim that for all $\ell \in [\beta]$, for each edge $(1,i_{2\ell-1}) \rightarrow (2,i_{2\ell})$ in $C$, there is an edge  from $(2\ell-1,i_{2\ell-1})$ to  $(4d-\ell, i_{2\ell})$ in $G[V_{2\ell-1}, V_{4d-\ell}]$ (note that $2\ell -1 < 4d-\ell$ as $\ell \leq \beta \leq d$).  Similarly for all $\ell \in [\beta]$, for each edge $(2,i_{2\ell}) \rightarrow (1,i_{2\ell+1})$ in $C$ ($2\beta +1$ is to interpreted as $1$), there is an edge  from $(4d - \ell,i_{2\ell})$ to  $(2\ell+1, i_{2\ell+1})$ in $G[V_{2\ell+1}, V_{4d -\ell}]$ (again, note that $ 2\ell+1 < 4d -\ell$ as $\ell \leq \beta \leq d$). This implies that there is a cycle $C' = (1,i_1) \rightarrow (4d-1,i_2) \rightarrow (3,i_3) \rightarrow (4d-2,i_4) \rightarrow \dots \rightarrow  (4d - \beta, i_{2\beta}) \rightarrow (1,i_1)$ in $G$. Clearly, $C$ visits each part of $G$ at most once.  Therefore, there cannot be $4d$ parts in $G$ such that  the induced directed bipartite graph $G[V_i \cup V_j]$ has the same configuration for all $1 \leq i<j \leq 4d$.   

We now rephrase the question about an upper bound on $\ram(d)$. Let $\mathcal{D}$ be the set of all configurations of a directed bipartite graph, where the number of vertices in each part is at most $d$ and every vertex has an incoming edge. We treat $\mathcal{ D}$ as a set of \emph{colors} and note that $\abs{\mathcal{D}} \in 2^{\mathcal{O}(d^2)}$.  Now consider a complete graph $K_k$ with vertex set $[k]$, where the vertex $\ell \in [k]$ corresponds to part $V_{\ell}$ in $G$. For all $1 \leq i<j \leq k$, we color/label the edge $(i,j)$ in $K_k$ with a color from $\mathcal{D}$. The color on the edge $(i,j)$  corresponds to the configuration of the directed bipartite graph $G[V_i \cup V_j]$. Clearly, $\ram(d)$ must be strictly smaller than the largest $k$ such that every coloring of the edges of $K_k$ with colors from $\mathcal{D}$ contains a monochromatic clique of size $4d$.  This value of $k$ corresponds to the \emph{(multicolor) Ramsey number}~\cite{Diestel-graph-theory} $\mathcal{R}(n_1,n_2, \dots n_{\abs{\mathcal{D}}})$ in which $n_i = 4d$ for all $i \in [\abs{\mathcal{D}}]$.  
This number is finite and the current best known upper bounds on it are exponential in $|\mathcal{ D}|$ and $d$~\cite{erdos1935combinatorial,lefmann1987note,Diestel-graph-theory,conlon2021lower}. Therefore, $\ram(d)$ is also bounded. However, this upper-bound is very large and only provides a weak version of Theorem~\ref{mainthm1}. This necessitates the study of finding ``good" upper bounds on $\ram(d)$; in particular, upper bounds that are polynomial in $d$. We address this in Section~\ref{boundonR} by showing that $\ram(d) \in \mathcal{O} (d^4)$.

\section{Preliminaries and Tools} \label{prelim}
A fair division instance is given by the three tuple $\langle [n],M ,\mathcal{V} \rangle$, where $[n]$ is the set of agents, $M$ is the set of indivisible goods, and $\mathcal{V} = \{v_1(),v_2(), \dots, v_n() \}$ where each $v_i \colon 2^M \rightarrow \mathbb{R}_{\geq 0}$ denotes the valuation function of agent $i$. We assume that agents have \emph{additive valuations}, i.e, for all $i \in [n]$, we have $v_i(S) = \sum_{g \in S} v_i(\{g\})$ for all $S \subseteq M$. For the ease of notation, we write $v_i(g)$ instead of $v_i(\{g\})$ and similarly $v_i(S \cup g)$ for $v_i(S \cup \{g\})$. We assume that $v_i(g)$ can be accessed in constant time for any $i$ and $g$. For a fixed $0< \epsilon < 1$, we say that an agent $i$
\begin{itemize}
	\item \emph{envies} a set $S$ of goods if $v_i(X_i) < v_i(S)$, 
	\item \emph{heavily envies} a set $S$ of goods if $v_i(X_i) < (1 - \varepsilon) v_i(S)$, 
	\item \emph{strongly envies} a set $S$ of goods if it heavily envies a proper subset of $S$, and
	\item is a \emph{most envious agent} for a set $S$ of goods if there exists a subset $Z \subseteq S$ such that $i$ heavily envies $Z$ and no agent strongly envies $Z$. The pair $(t,Z)$ is called a \emph{most-envious-agent-witness pair} for $S$. 
\end{itemize}
An agent envies (heavily envies, strongly envies) an agent $j$ if it has these feelings for the set $X_j$.  Clearly, strong envy implies heavy envy implies envy. An allocation $X'$ \emph{strongly Pareto-dominates} an allocation $X$, or equivalently  $X' \pd X$, if and only if $v_i(X'_i) \geq v_i(X_i)$ for all $i \in [n]$ and for some agent $i' \in [n]$ we have $(1-\varepsilon) \cdot v_{i'}(X'_{i'}) \geq v_{i'}(X_{i'})$.

At a high level, our algorithm is similar to previous algorithms used to prove the existence of relaxations of EFX allocations~\cite{CKMS20, TimPlaut18, CGM20}. Our algorithm always maintains a $(1-\varepsilon)$-EFX allocation on the set of allocated goods and as long as the current allocation and the set of unallocated goods $P$ satisfies ``some properties'', it determines another $(1-\varepsilon)$-EFX allocation that strongly Pareto-dominates the previous $(1-\varepsilon)$-EFX allocation. Since the valuation of an agent for the entire good set is bounded, this procedure will eventually converge to a $(1-\varepsilon)$-EFX allocation, where the current allocation and the set of unallocated goods do not satisfy these properties. The bulk of the effort goes into determining the right properties under which one can  come up with update rules that transform one $(1-\varepsilon)$-EFX allocation into a ``better'' $(1-\varepsilon)$-EFX allocation. We briefly recollect the update rules used in~\cite{LiptonMMS04} and~\cite{CKMS20}.

\paragraph{Envy cycle elimination~\cite{LiptonMMS04}.} The \emph{envy-graph} $E_X$ of an $(1-\varepsilon)$-EFX allocation $X$ has the agents as its vertex set  and there is an edge from vertex $i$ to vertex $j$ in $E_X$ if agent $i$ envies agent $j$, i.e., $v_i(X_i) < v_i(X_j)$. The paper~\cite{LiptonMMS04} shows that whenever $E_X$ has a cycle, then one can determine another $(1-\varepsilon)$-EFX allocation $X'$ in which no agent has a worse bundle and $E_{X'}$ is acyclic. Formally,

\begin{lemma}[\cite{LiptonMMS04}]
	\label{envycycleelimination}
	Consider a $(1- \varepsilon)$-EFX allocation $X$. If there is a cycle in $E_X$, then in polynomial time, we can determine a  $(1-\varepsilon)$-EFX allocation $X'$ such that $v_i(X'_i) \geq v_i(X_i)$ for all $i \in [n]$, and  $E_{X'}$ is acyclic.\footnote{Let $C$ be an envy cycle. For each edge $(i,j)$ of the cycle one assigns in $X'$ the bundle $X_j$ to $i$. One continues in this way as long as there is a cycle in the envy graph.}
\end{lemma}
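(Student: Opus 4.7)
The plan is to carry out the construction sketched in the footnote. Given an envy cycle $C = (i_1 \to i_2 \to \cdots \to i_\ell \to i_1)$ in $E_X$, I would define the new allocation $X'$ by cyclically rotating the bundles along $C$: set $X'_{i_k} := X_{i_{k+1 \bmod \ell}}$ for $k = 1, \ldots, \ell$, and $X'_j := X_j$ for every agent $j$ not on $C$. The key structural observation is that $X'$ is a permutation of the bundle multiset of $X$; no good is added, removed, or split.

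Two verifications then remain. First, every agent weakly improves, and each agent on $C$ strictly improves: for $j \notin C$ the bundle is unchanged, while for $i_k \in C$ the edge $(i_k, i_{k+1}) \in E_X$ says that $i_k$ envies $i_{k+1}$, so $v_{i_k}(X'_{i_k}) = v_{i_k}(X_{i_{k+1}}) > v_{i_k}(X_{i_k})$. Second, $X'$ remains $(1-\varepsilon)$-EFX. Fix agents $a, b$ and a good $g \in X'_b$; since $X'$ permutes the bundles of $X$, there exists $c$ with $X'_b = X_c$, and the $(1-\varepsilon)$-EFX property of $X$ yields $v_a(X_a) \geq (1-\varepsilon)\, v_a(X_c \setminus \{g\})$. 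Combining with monotonicity $v_a(X'_a) \geq v_a(X_a)$,
\[
v_a(X'_a) \;\geq\; v_a(X_a) \;\geq\; (1-\varepsilon)\, v_a(X_c \setminus \{g\}) \;=\; (1-\varepsilon)\, v_a(X'_b \setminus \{g\}),
\]
as required.

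I would then iterate the rotation as long as the envy graph contains a cycle, which can be detected in polynomial time by any standard cycle-finding routine. Because the bundle multiset is invariant throughout, the algorithm searches within the finite set of permutations of a fixed multiset; the strict monotonicity of $\sum_i v_i(X_i)$ at every step (by the first verification above) rules out revisiting any assignment and hence guarantees termination, and the standard potential argument of Lipton et al.\ provides a polynomial bound on the number of iterations. The only step one might a priori worry about, the preservation of the $(1-\varepsilon)$-EFX property after a rotation, dissolves instantly via the multiset-preservation observation; no case analysis on individual good values is needed, which is why I anticipate no real obstacle in executing this plan.
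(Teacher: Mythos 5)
Your proposal is correct and follows essentially the same route as the paper, which proves this cited lemma only via the footnote sketch (rotate bundles along an envy cycle, repeat until acyclic); your multiset-preservation argument for why $(1-\varepsilon)$-EFX survives the rotation and your monotonicity check are exactly the details that sketch leaves implicit. The one place you lean on an external reference rather than an argument is the polynomial iteration bound, for which the standard potential is the strictly decreasing number of envy edges (each rotation removes the cycle's edges and creates none, since every agent's own value weakly increases while the bundle multiset is fixed), giving at most $\mathcal{O}(n^2)$ rotations.
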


\paragraph{Update rules in~\cite{CKMS20}.} These rules\footnote{We modify the update rules in~\cite{CKMS20} slightly, as we are dealing with $(1-\varepsilon)$-EFX allocations and not EFX allocations} are more involved and make essential use of the concept of \emph{a most envious agent}. 

\begin{observation}
	\label{mostenviousagent}
	Consider an allocation $X$ and a set $S \subseteq M$. If there is an agent that heavily envies the bundle $S$, then we can determine a most-envious-agent-witness pair $(t,Z)$ for $S$ in $\mathcal{O}(n \cdot \abs{S}^2)$ time. If there is an agent that strongly envies $S$ then $t$ strongly envies $S$. 
\end{observation}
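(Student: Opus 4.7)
The plan is to construct $(t,Z)$ by a simple \emph{shrinking} procedure: initialize $Z \gets S$ and repeatedly remove a single good whenever doing so preserves the property that some agent still heavily envies the remaining set. Formally, while there exist $g \in Z$ and $i \in [n]$ with $v_i(X_i) < (1-\varepsilon)\,v_i(Z \setminus \{g\})$, replace $Z$ by $Z \setminus \{g\}$. The invariant ``some agent heavily envies $Z$'' holds initially by hypothesis and is explicitly preserved at each step, so when the loop halts we can let $t$ be any agent still heavily envying the final $Z$.

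To verify that $(t,Z)$ is a most-envious-agent-witness pair, I would show that no agent strongly envies $Z$. Suppose for contradiction that some agent $j$ heavily envies a proper subset $Z' \subsetneq Z$. Pick any $g \in Z \setminus Z'$; then $Z' \subseteq Z \setminus \{g\}$, so by additivity and non-negativity of $v_j$,
\[
v_j(X_j) < (1-\varepsilon)\,v_j(Z') \leq (1-\varepsilon)\,v_j(Z \setminus \{g\}),
\]
meaning $j$ heavily envies $Z \setminus \{g\}$ and contradicting the termination condition. Hence no proper subset of $Z$ is heavily envied by anyone, which is precisely the ``most envious agent'' requirement.

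For the running time, $|Z|$ strictly decreases by one at each iteration, giving at most $|S|$ iterations. If we cache $v_i(Z)$ for every agent $i$, each iteration scans $O(n|S|)$ pairs $(i,g)$ and performs a single subtraction and comparison per pair, for a total cost of $O(n|S|^2)$. For the final claim about strong envy: if some agent $j$ strongly envies $S$, then $j$ heavily envies some $S' \subsetneq S$; in the very first iteration (with $Z = S$), picking any $g \in S \setminus S'$ shows via the same additivity argument that the loop can (and therefore must) perform at least one removal, so the final $Z$ is a proper subset of $S$, and hence $t$ heavily envies a proper subset of $S$, i.e., $t$ strongly envies $S$. The only mild obstacle is the incremental bookkeeping of the sums $v_i(Z)$ needed to hit the promised $O(n|S|^2)$ bound, which is routine under additive valuations.
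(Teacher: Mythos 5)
Your proposal is correct and follows essentially the same greedy shrinking procedure as the paper: repeatedly delete a single good while some agent still heavily envies the result, observe that the invariant is maintained, and note that the termination condition (together with monotonicity of additive valuations) rules out strong envy of the final set. The runtime accounting via cached values $v_i(Z)$ and the argument that strong envy of $S$ forces at least one removal also match the paper's proof.
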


\begin{proof}
	Let $i$ be an agent that heavily envies $S$.  We construct a sequence $(t_{\ell}, Z_{\ell})$ as follows: initially we set $t_1$ to $i$ and $Z_1$ to $S$. Assume that $(t_{\ell-1},Z_{\ell-1})$ is defined. If no agent (including $t_{\ell-1}$) strongly envies $Z_{\ell-1}$, then we stop. Otherwise let $i'$ be an agent such that $v_{i'}(X_{i'}) < (1-\varepsilon) \cdot v_{i'}(Z_{\ell-1} \setminus \{g \})$ for some $g \in Z_{\ell-1}$. We set $t_{\ell}$ to $i'$ and $Z_{\ell}$ to $Z_{\ell-1} \setminus \{ g \}$ and continue. We will eventually stop, as with every next pair in the sequence, the size of the set $Z_{\ell}$ decreases by one. Say we stop at $\ell^*$. Then, we have an agent $t_{\ell^*}$ that heavily envies the subset  $Z_{\ell^*}$ of $S$. Moreover, no agent strongly envies $Z_{\ell^*}$. Thus $(t_{\ell^*}, Z_{\ell^*})$ is a most-envious-agent-witness pair.
	
	If there is an agent that strongly envies $S$ then $\ell \ge 1$ and hence $t_{\ell^*}$ heavily envies a proper subset of $S$. Thus $t_{\ell^*}$ strongly envies $S$. 
	
	It is clear that we can determine the pair in $\mathcal{O}(n \cdot \abs{S}^2)$ time: the maximum length of the sequence constructed is $\abs{S} + 1$ as the size of the set $Z_{\ell} = \abs{S} +1 - \ell$. We need time $\O(n \abs{S})$ to determine $v_i(S)$ for all $i$ and can update any such value in time $\O(1)$ after the removal of an element. For each value of $\ell$, it takes $\mathcal{O}(n \cdot \abs{Z_{\ell}}) =  \mathcal{O}(n \cdot \abs{S})$ time to find $(t_{\ell+1}, Z_{\ell+1})$. Thus the total time needed is  $\mathcal{O}(n \cdot \abs{S}^2)$. 
\end{proof}

For an allocation $X$ and set $S$ of goods that is heavily envied by some agent, let $(t,Z)$ be the pair returned by the procedure in Observation~\ref{mostenviousagent}. We call $t$ \emph{the champion} of $S$ and $Z$ the corresponding witness. 

We now state the update rules. The first rule is the simplest. It is applicable whenever we can allocate an unallocated good to an unenvied agent (a source in $E_X$), without creating any strong envy. In this case, we simply allocate this good to the corresponding source. This creates another $(1-\varepsilon)$-EFX allocation where no agent gets a worse bundle and the number of unallocated goods decreases. 

\begin{lemma}[$U_1$~\cite{CKMS20}]
	\label{U1}
	Consider a $(1-\varepsilon)$-EFX allocation $X$. If there is a source $s$ in $E_X$ and an unallocated good $g$ such that no agent strongly envies $X_s \cup g$, then $X' = \langle X_1, X_2, \dots, X_s \cup g, \dots , X_n \rangle$ is a $(1-\varepsilon)$-EFX allocation and $v_i(X'_i) \geq v_i(X_i)$ for all $i \in [n]$.
\end{lemma}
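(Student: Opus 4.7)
The plan is to verify the two conclusions by a short case analysis based on whether each agent's bundle changes when going from $X$ to $X'$. Only agent $s$'s bundle is altered, and the change consists of adding a single good $g$, so the monotonicity claim $v_i(X'_i) \ge v_i(X_i)$ is immediate: for $i \neq s$ we have $X'_i = X_i$, and for $i = s$ we have $X'_s = X_s \cup g \supseteq X_s$, hence by additivity and non-negativity $v_s(X'_s) \ge v_s(X_s)$.

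For the $(1-\varepsilon)$-EFX property I must check, for every ordered pair $i \neq j$ and every $g' \in X'_j$, that $v_i(X'_i) \ge (1-\varepsilon)\, v_i(X'_j \setminus \{g'\})$. I will split the verification into two cases depending on whether $s$ plays the role of the envied agent $j$.

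In the first case, $j \neq s$, so $X'_j = X_j$. If additionally $i \neq s$, both agents keep their old bundles and the required inequality is exactly the $(1-\varepsilon)$-EFX property of $X$. If $i = s$, then $v_s(X'_s) \ge v_s(X_s) \ge (1-\varepsilon)\, v_s(X_j \setminus \{g'\})$, where the first inequality is monotonicity and the second is $(1-\varepsilon)$-EFX of $X$. In the second case, $j = s$ and $X'_j = X_s \cup g$, while $X'_i = X_i$ for every $i \neq s$. Here the hypothesis that no agent strongly envies $X_s \cup g$ is, by unwinding the definition of strong envy, precisely the statement that $v_i(X_i) \ge (1-\varepsilon)\, v_i((X_s \cup g)\setminus\{g'\})$ for every $g' \in X_s \cup g$ and every agent $i \neq s$. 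This closes the case analysis.

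There is no real obstacle: the content is a bookkeeping exercise that combines (i) the $(1-\varepsilon)$-EFX property of the old allocation on the unchanged bundles, (ii) monotonicity to handle $s$ viewing other bundles, and (iii) the hypothesis to handle other agents viewing the enlarged bundle $X_s \cup g$. The role of $s$ being a source of $E_X$ is not actually used in this verification; it is there because in the algorithmic context sources are the natural candidates to absorb a new good without creating envy at the base level $v_i(X_i) \ge v_i(X_s)$.
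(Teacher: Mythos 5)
Your proof is correct and is essentially the standard verification: the paper itself omits a proof of this lemma (deferring to~\cite{CKMS20}), and your case analysis---the old $(1-\varepsilon)$-EFX inequalities for pairs of unchanged bundles, monotonicity when $s$ is the envier, and the no-strong-envy hypothesis on $X_s \cup g$ when $s$ is the envied agent---is exactly the intended argument. Your side remark that the source property of $s$ plays no role in this verification (it only matters for when the hypothesis can be guaranteed elsewhere in the algorithm) is also accurate.
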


Note that there can be at most $m$ consecutive applications of this rule as the number of unallocated goods decreases by one every time we apply this update rule. The remaining rules are applicable whenever, there are either ``valuable" goods unallocated or if ``too many" goods are unallocated. We state the second update rule, which is applicable if there is any agent that heavily envies the set of unallocated goods.  Formally,

\begin{lemma}[$U_2$~\cite{CKMS20}]
	\label{U2}
	Consider a $(1-\varepsilon)$-EFX allocation $X$ and let $P$ be the set of unallocated goods. If there is an agent $i \in [n]$ such that heavily envies $P$, then in polynomial time, we can determine\footnote{Let $t$ be the champion of $P$ and $Z$ be the corresponding witness. In $X'$, one assigns $Z$ to $t$ and changes the pool to $X_t \cup (P \setminus Z)$.} a $(1-\varepsilon)$-EFX allocation $X' \pd X$.
\end{lemma}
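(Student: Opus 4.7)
The plan is to use the champion machinery from Observation~\ref{mostenviousagent} directly, following the construction hinted at in the footnote. Since some agent heavily envies the pool $P$, Observation~\ref{mostenviousagent} lets me compute in polynomial time a most-envious-agent-witness pair $(t,Z)$ with $Z \subseteq P$, where $t$ heavily envies $Z$ and no agent strongly envies $Z$. I then define $X'$ by $X'_t \gets Z$ and $X'_j \gets X_j$ for every $j \neq t$, and I replace the pool $P$ by the new pool $X_t \cup (P \setminus Z)$.

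Next I would verify that $X'$ is a $(1-\varepsilon)$-EFX allocation. For any ordered pair of agents $(a,b)$ I need to show that $a$ does not strongly envy $X'_b$. If $b \neq t$, then $X'_b = X_b$, and the $(1-\varepsilon)$-EFX property of $X$ already rules out strong envy towards $X_b$. If $b = t$, then $X'_b = Z$, and by the defining property of a most-envious-agent-witness pair, no agent strongly envies $Z$. Both cases combined give that $X'$ is $(1-\varepsilon)$-EFX.

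For the strong Pareto-domination $X' \pd X$, every $j \neq t$ satisfies $v_j(X'_j) = v_j(X_j)$, while for $t$ we use that $t$ heavily envies $Z$, namely $v_t(Z) > v_t(X_t)/(1-\varepsilon)$, which rearranges to $(1-\varepsilon) v_t(X'_t) > v_t(X_t)$. This is precisely the witness of strict domination required by the definition of $\pd$. The polynomial running time follows immediately from Observation~\ref{mostenviousagent}.

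There is essentially no deep obstacle here: the lemma is a clean consequence of Observation~\ref{mostenviousagent}, and all the work has been front-loaded into showing that a most-envious-agent-witness pair exists and can be computed efficiently. The only subtlety worth double-checking is that the new pool $X_t \cup (P \setminus Z)$ does not play any role in the $(1-\varepsilon)$-EFX guarantee of $X'$ (only the allocated bundles do), so no extra verification involving the pool is needed at this stage; guarantees about the pool (such as the absence of heavy envy towards it) will be re-established in later phases of the algorithm rather than within this single update step.
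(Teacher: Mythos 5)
Your proof is correct and follows exactly the construction the paper sketches in the footnote to Lemma~\ref{U2}: take the champion $t$ of $P$ with witness $Z \subseteq P$ from Observation~\ref{mostenviousagent}, assign $Z$ to $t$, and return $X_t \cup (P \setminus Z)$ to the pool. The only point worth making explicit in your case $b \neq t$ is that the envier $a$ may be $t$ itself, whose bundle has changed; but since $v_t(X'_t) = v_t(Z) > v_t(X_t)$, the $(1-\varepsilon)$-EFX inequalities of $X$ with $t$ as the envier are only strengthened, so the argument goes through.
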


The third update rule is a refinement of envy-cycle elimination. In~\cite{CKMS20} it was shown that it is applicable whenever the number of unallocated goods is at least the number of agents.

\begin{lemma}[$U_3$~\cite{CKMS20}]
	\label{U3}
	Consider a $(1-\varepsilon)$-EFX allocation $X$. If there exists a set of sources $s_1, s_2, \dots s_{\ell}$ in $E_X$, a set of unallocated goods $g_1, g_2, \dots , g_{\ell}$, and a set of agents $t_1, t_2, \dots, t_{\ell}$, such that each $t_i$ is reachable from $s_i$ in $E_X$ and $t_{i}$ is the champion of $X_{s_{i+1}} \cup g_{i+1}$ (indices are modulo $\ell$), then  in polynomial time, we can determine\footnote{Let $Z_{i+1} \subseteq X_{s_{i+1}} \cup g_{i+1}$ be the witness corresponding to $t_i$. One then essentially proceeds as in cycle elimination. For each $i$ one assigns $Z_{i+1}$ to $t_i$ and to each agent on the path from $s_i$ to $t_i$ except for $t_i$ one assigns the bundle owned by the successor on the path.} a $(1-\varepsilon)$-EFX allocation $X' \pd X$.
\end{lemma}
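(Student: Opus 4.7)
The plan is to combine envy-cycle elimination along each path from $s_i$ to $t_i$ with the champion-witness swap at each $t_i$, exploiting the cyclic indexing $t_i \leftrightarrow s_{i+1}$. For each $i \in [\ell]$, I would first invoke Observation~\ref{mostenviousagent} on the set $S := X_{s_{i+1}} \cup g_{i+1}$ (of which $t_i$ is the champion) to obtain a witness $Z_{i+1} \subseteq X_{s_{i+1}} \cup \{g_{i+1}\}$ such that $t_i$ heavily envies $Z_{i+1}$ and no agent strongly envies $Z_{i+1}$; this runs in polynomial time.

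Next, for each $i$, fix a path $s_i = p_0^{(i)} \to p_1^{(i)} \to \cdots \to p_{k_i}^{(i)} = t_i$ in $E_X$ and construct $X'$ by setting $X'_{t_i} := Z_{i+1}$ and $X'_{p_j^{(i)}} := X_{p_{j+1}^{(i)}}$ for each $0 \leq j < k_i$, leaving all other bundles unchanged. The new pool $P'$ absorbs $\bigcup_i (X_{s_i} \cup \{g_i\}) \setminus Z_i$, since the path-shift on path $i$ frees up $X_{s_i}$ while only $Z_i \subseteq X_{s_i} \cup \{g_i\}$ is handed to $t_{i-1}$. A preliminary step would be to prune the data so the fixed paths, sources and $t_i$'s are pairwise vertex-disjoint, ensuring $X'$ is well-defined; this is standard and can be done without violating the hypotheses.

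I would then verify $X' \pd X$ using two observations. Each edge on the path from $s_i$ to $t_i$ is an envy edge in $E_X$, so $v_{p_j^{(i)}}(X_{p_{j+1}^{(i)}}) > v_{p_j^{(i)}}(X_{p_j^{(i)}})$ for every $0 \leq j < k_i$; thus every agent on any path, other than the $t_i$'s, strictly improves. For $t_i$, heavy envy of $Z_{i+1}$ gives $v_{t_i}(X'_{t_i}) = v_{t_i}(Z_{i+1}) > v_{t_i}(X_{t_i}) / (1-\varepsilon)$, which furnishes the strict $(1-\varepsilon)$-gap required by the definition of $\pd$. Agents untouched by the redistribution keep their bundles, so $v_i(X'_i) \geq v_i(X_i)$ for all $i$.

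Finally, to see that $X'$ is $(1-\varepsilon)$-EFX, classify the new bundles into two types: the bundles $Z_{i+1}$ assigned to the $t_i$'s, and bundles $X_k$ that already appeared in $X$ (with possibly different owners). Each $Z_{i+1}$ is, by the definition of a most-envious-agent witness, not strongly envied by any agent. Each reused bundle $X_k$ was not strongly envied in $X$ since $X$ is $(1-\varepsilon)$-EFX, and as $v_i(X'_i) \geq v_i(X_i)$ for every $i$, improved valuations can only decrease envy, so no agent strongly envies $X_k$ in $X'$ either. I expect the main obstacle to be the vertex-disjointness bookkeeping above; once that is handled, the argument is a structural composition of envy-cycle elimination (Lemma~\ref{envycycleelimination}) and the champion/witness update rule.
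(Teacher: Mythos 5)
Your proposal is correct and follows essentially the same route as the paper, which relegates the entire argument to the footnote of Lemma~\ref{U3} (citing~\cite{CKMS20}): obtain each witness $Z_{i+1}$ via Observation~\ref{mostenviousagent}, assign it to the champion $t_i$, shift bundles backwards along the envy paths from $s_i$ to $t_i$, return the leftovers of $X_{s_{i+1}} \cup \{g_{i+1}\}$ to the pool, and use the fact that no valuation decreases to carry the ``no strong envy'' property of both the witness sets and the reused bundles over to $X'$. The one point you defer to ``standard bookkeeping,'' making the paths pairwise vertex-disjoint, is indeed routine (if two paths share a vertex, the cycle of source/champion pairs can be shortcut to a smaller one satisfying the same hypotheses) and is not addressed in the paper either.
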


\section{Relating the Number of Unallocated Goods to the Rainbow Cycle Number}
\label{transformation}
In this section, we give the proof of Theorem~\ref{mainthm2}, i.e, we show how any upper bound on $\ram(d)$ allows us to obtain a $(1-\varepsilon)$-EFX with sublinear many goods unallocated. More precisely, we show that given a $(1-\varepsilon)$-EFX allocation $X$, if $E_X$ is acyclic, and the update rules $U_1$ and $U_2$ are not applicable, and the number of unallocated goods is larger than $4n/ (\varepsilon \cdot h^{-1}(2n/ \varepsilon))$, then rule $U_3$ is applicable. Therefore,  for most of this section, we proceed under the assumption
\begin{equation}\tag{*} \text{$E_X$ is acyclic and the update rules $U_1$(Lemma~\ref{U1}) and $U_2$ (Lemma~\ref{U2}) are not applicable.}\end{equation}

We start with some definitions. Given a partial allocation $X$, we call an unallocated good $g$ \emph{valuable} to an agent $i$ if $v_i(g) > \varepsilon \cdot v_i(X_i)$.  We first make an observation about the agents that could potentially strongly envy $X_s \cup g$, where $s$ is a source in $E_X$ and $g$ is an unallocated good.  

\begin{observation}
	\label{onlyvaluablenevy}
	Consider an unallocated good $g$ and any source $s$ in $E_X$. If agent $i$ heavily envies $X_s \cup g$, then  $g$ is valuable to agent $i$.
\end{observation}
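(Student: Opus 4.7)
The plan is a short, direct algebraic argument using three ingredients: the definition of ``heavily envies'', the additivity of $v_i$, and the defining property of a source in the envy graph.

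First I would unfold the hypothesis: agent $i$ heavily envies $X_s \cup g$ means $v_i(X_i) < (1-\varepsilon)\,v_i(X_s \cup g)$, and by additivity this rewrites as $v_i(X_i) < (1-\varepsilon)\bigl(v_i(X_s) + v_i(g)\bigr)$.

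Next I would bring in the source hypothesis. Since $s$ is a source in $E_X$, no agent envies $s$; in particular $v_i(X_s) \le v_i(X_i)$. Substituting this bound into the previous inequality gives
\[
v_i(X_i) \;<\; (1-\varepsilon)\,v_i(X_i) \;+\; (1-\varepsilon)\,v_i(g),
\]
and isolating $v_i(g)$ yields
\[
v_i(g) \;>\; \frac{\varepsilon}{1-\varepsilon}\,v_i(X_i) \;\ge\; \varepsilon \cdot v_i(X_i),
\]
where the last inequality uses $\varepsilon \in (0,1/2]$ so that $1/(1-\varepsilon) \ge 1$. By definition this is exactly what it means for $g$ to be valuable to $i$, so we are done.

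There is no real obstacle here: the statement is a one-line consequence of combining the source inequality $v_i(X_s) \le v_i(X_i)$ with the heavy-envy inequality and solving for $v_i(g)$. The only thing to be mildly careful about is that heavy envy is defined with $(1-\varepsilon)$ on the right-hand side (stronger than ordinary envy), which is precisely what gives the factor $\varepsilon$ appearing in the conclusion after rearrangement.
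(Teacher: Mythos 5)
Your proof is correct and follows essentially the same route as the paper's: combine the source inequality $v_i(X_s) \le v_i(X_i)$ with the heavy-envy inequality, use additivity, and rearrange to conclude $v_i(g) > \varepsilon \cdot v_i(X_i)$. The only cosmetic difference is that the paper stops at $(1-\varepsilon)v_i(g) > \varepsilon v_i(X_i)$ while you explicitly divide through and note $1/(1-\varepsilon) \ge 1$; both are fine.
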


\begin{proof} We have $v_i(X_s) \le v_i(X_i)$ since $s$ is a source of $E_X$ and $v_i(X_i) < (1 - \varepsilon) v_i(X_s \cup g)$ since $i$ heavily envies $X_s \cup g$. Thus $v_i(X_i) < (1 - \varepsilon) (v_i(X_i) + v_i(g))$ and hence
	$(1 - \varepsilon) v_i(g) > \varepsilon v_i(X_i)$. 
\end{proof}

Note that under assumption (*) for each unallocated good $g$, and each source $s$ in the envy-graph, there is an agent that strongly envies $X_s \cup g$ (since the conditions of the update rule $U_1$ (Lemma~\ref{U1}) are not satisfied). Thus, each unallocated good is valuable to some agent. Now, we make a classification of the unallocated goods based on the number of agents that find them valuable. To be precise, given an allocation $X$, we classify the unallocated goods into two categories: \emph{high-demand goods} $H_X$ and \emph{low-demand goods} $L_X$. A good $g$ belongs to $H_X$, if it is valuable to at least $d+1$ agents and to $L_X$ if it is valuable to at most $d$ agents. We will choose the exact value of $d$ later (right now, just think of it as any integer less than $n$). Observe that the set of unallocated goods $P  = H_X \cup L_X$. To prove our claim, it suffices to show that when $\abs{H_X} + \abs{L_X} > 4n / (\varepsilon \cdot h^{-1}(2n / \varepsilon))$, the rule $U_3$ is applicable. To this end, we first make a simple observation about $\abs{H_X}$.

\begin{observation}
	\label{V_Xbound}
	Under assumption (*), we have $\abs{H_X} < 2n / (\eps \cdot d)$. 
\end{observation}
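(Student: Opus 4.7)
The plan is a short double-counting argument exploiting the fact that update rule $U_2$ (Lemma~\ref{U2}) is not applicable under assumption (*). First I would use the non-applicability of $U_2$: no agent heavily envies the pool $P$, so for every agent $i$ we have $v_i(X_i) \geq (1-\varepsilon) v_i(P) \geq (1-\varepsilon) v_i(H_X)$. Because $\varepsilon \leq 1/2$, we have $1/(1-\varepsilon) \leq 2$, so this rearranges to $v_i(H_X) \leq 2 v_i(X_i)$.

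Next, for each agent $i$ I would bound the number of high-demand goods that are valuable to $i$. Let $V_i = \{g \in H_X : v_i(g) > \varepsilon \cdot v_i(X_i)\}$. By additivity of $v_i$, we have $v_i(V_i) > |V_i| \cdot \varepsilon \cdot v_i(X_i)$, and by monotonicity $v_i(V_i) \le v_i(H_X)$. When $v_i(X_i) > 0$, combining these with the previous step gives $|V_i|\,\varepsilon\,v_i(X_i) < 2 v_i(X_i)$, hence $|V_i| < 2/\varepsilon$. The degenerate case $v_i(X_i)=0$ is handled separately: then $v_i(H_X) \leq 2 v_i(X_i) = 0$ forces $V_i = \emptyset$, so the bound $|V_i| < 2/\varepsilon$ holds trivially.

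Finally, I would count pairs $(i,g)$ with $g \in V_i$ in two ways. Summing over agents yields $\sum_i |V_i| < 2n/\varepsilon$. Summing over goods yields $\sum_{g \in H_X} |\{i : g \text{ is valuable to } i\}| \geq (d+1)\,|H_X|$, since by definition each high-demand good is valuable to at least $d+1$ agents. Equating the two counts gives $(d+1)\,|H_X| < 2n/\varepsilon$, hence $|H_X| < 2n/((d+1)\varepsilon) < 2n/(\varepsilon \cdot d)$, as claimed. There is no real technical obstacle here; the only subtlety is recognizing that the definitions of \emph{valuable} and \emph{high-demand} together supply the two sides of the double count, while assumption (*) is used solely to provide the upper bound $v_i(H_X) \leq v_i(X_i)/(1-\varepsilon)$.
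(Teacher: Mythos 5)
Your proof is correct and follows essentially the same route as the paper's: use the non-applicability of $U_2$ to bound $v_i(P)$ by $2v_i(X_i)$, deduce that each agent finds at most $2/\eps$ unallocated goods valuable, and double-count agent--good ``valuable'' pairs against the fact that each high-demand good is valuable to more than $d$ agents. The only cosmetic differences are that you restrict the count to $H_X$ rather than all of $P$ and explicitly treat the $v_i(X_i)=0$ case, neither of which changes the argument.
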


\begin{proof}
	For each good $g \in H_X$, let $\eta_g$ be the number of agents that find $g$ valuable. By definition of $H_X$, we have that $\eta_g > d$ and hence $\sum_g \eta_g > \abs{H_X} d$. We next upper bound $\sum_g \eta_g$ by $n \cdot (2/\eps)$ by showing that at most $2/\eps$ unallocated goods are valuable to any agent.
	
	Consider any agent $i$. By assumption (*) rule $U_2$ is not applicable and hence the value of the unallocated goods to $i$ is at most $1/(1 - \eps) v_i(X_i)$. This is at most $2v_i(X_i)$ since $\eps \le 1/2$. Any valuable good has value at least $\eps v_i(X_i)$ for $i$. Thus the number of unallocated goods valuable to $i$ is at most $2/\eps$. 
\end{proof}

We next bound $\abs{L_X}$. In particular, we show that $\abs{L_X} \leq \ram(d)$. To this end, we introduce the notion of \emph{group champion graph} $G$.

\paragraph{Group champion graph.} To each agent $a$, we assign a source $s(a)$, such that $a$ is reachable from $s(a)$ in the envy-graph $E_X$. Recall that we operating under assumption (*) and hence $E_X$ is acyclic. If $a$ is reachable from multiple sources, we pick $s(a)$ arbitrarily from these sources. Let $k := \abs{L_X}$. For each $g \in L_X$, let $Q_g$ be the set of all agents that find $g$ valuable. By definition of $L_X$, we have $\abs{Q_g} \leq d$ for all $g \in L_X$.  We now define a $k$-partite graph $G = (\cup_{g \in L_X} V_g, E)$, in which the part $V_g$ corresponding to $g$ consists of copies of the sources assigned to the agents in $Q_g$, formally,  $V_g = \set{ (g,s(a))}{a \in Q_g}$. For any goods $g$ and $h$ and agents $a \in Q_g$ and $b \in Q_h$, there is an edge from $(g,s(a))$ in $V_g$ to $(h,s(b))$ in $V_h$ if and only if $a$ is the champion of $X_{s(b)} \cup g$. We now make an observation about the set of edges between $V_g$ and $V_h$ in $G$ for any $g,h \in L_X$.

\begin{observation}
	\label{edgesinG} Under assumption (*): 
	Consider any $g,h \in L_X$. Then each vertex in $V_h$, has an incoming edge from a vertex in $V_g$.
\end{observation}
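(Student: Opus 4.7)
The plan is to fix an arbitrary vertex $(h, s(b))$ in $V_h$ and exhibit a vertex in $V_g$ that has an edge to it. Since assumption (*) tells us that rule $U_1$ (Lemma~\ref{U1}) is not applicable, for the source $s(b)$ in $E_X$ and the unallocated good $g \in L_X$ some agent must strongly envy $X_{s(b)} \cup g$. Applying Observation~\ref{mostenviousagent} with $S = X_{s(b)} \cup g$ then yields a most-envious-agent-witness pair $(a, Z)$, where $a$ is by definition the champion of $X_{s(b)} \cup g$, the witness $Z$ satisfies $Z \subseteq X_{s(b)} \cup g$, and $a$ heavily envies $Z$.

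The heart of the argument is to verify that $g$ is valuable to $a$, which then gives $a \in Q_g$ and hence $(g, s(a)) \in V_g$. I would first rule out the case $Z \subseteq X_{s(b)}$: there, additivity together with the fact that $s(b)$ is a source in $E_X$ forces $v_a(Z) \le v_a(X_{s(b)}) \le v_a(X_a)$, which contradicts $v_a(X_a) < (1-\varepsilon) v_a(Z)$. Thus $g \in Z$, and since $Z \subseteq X_{s(b)} \cup g$, additivity gives $v_a(X_a) < (1-\varepsilon) v_a(Z) \le (1-\varepsilon) v_a(X_{s(b)} \cup g)$, so $a$ heavily envies $X_{s(b)} \cup g$. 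Observation~\ref{onlyvaluablenevy} now immediately yields that $g$ is valuable to $a$.

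With $a \in Q_g$ in hand, the vertex $(g, s(a))$ lies in $V_g$, and since $a$ is the champion of $X_{s(b)} \cup g$, the defining edge rule of $G$ places a directed edge from $(g, s(a))$ to $(h, s(b))$, which is exactly the incoming edge we needed. The only mildly delicate step in the whole argument is the verification that the champion $a$ finds $g$ valuable (steps ruling out $Z \subseteq X_{s(b)}$ and lifting heavy envy of $Z$ to heavy envy of the superset $X_{s(b)} \cup g$); the remainder is direct bookkeeping with the definitions and the conditions packaged into assumption (*).
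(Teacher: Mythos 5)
Your proof is correct and follows essentially the same route as the paper: non-applicability of $U_1$ gives a strong envier of $X_{s(b)} \cup g$, the champion machinery of Observation~\ref{mostenviousagent} produces the agent $a$, Observation~\ref{onlyvaluablenevy} shows $a \in Q_g$, and the edge rule of $G$ finishes. The only difference is cosmetic: the paper invokes the clause of Observation~\ref{mostenviousagent} stating that the champion itself strongly (hence heavily) envies $S$, whereas you re-derive this by hand via the witness $Z$ (and your step ruling out $Z \subseteq X_{s(b)}$ is not even needed, since $v_a(Z) \le v_a(X_{s(b)} \cup g)$ already follows from $Z \subseteq X_{s(b)} \cup g$ and nonnegativity).
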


\begin{proof}
	Consider any vertex $(h,s(b)) \in V_h$. By assumption (*), there is an agent that strongly envies the bundle $X_{s(b)} \cup g$. Otherwise, rule $U_1$ would be applicable. By Observation~\ref{onlyvaluablenevy}, all agents that strongly envy $X_{s(b)} \cup g$, consider $g$ valuable and hence belong to $Q_g$. Let $a$ be the champion of $X_{s(b)} \cup g$. By Observation~\ref{mostenviousagent}, $a$ strongly envies $X_{s(b)} \cup g$ and hence belongs to $Q_g$. Thus there is an edge from $(g,s(a))$ in $V_g$ to $(h,s(b))$ in $V_h$ (by the construction of $G$).
\end{proof}

Now we claim that the existence of a cycle that visits each part of $G$ at most once, would imply the existence of a $(1-\varepsilon)$-EFX allocation that Pareto-dominates the existing $(1-\varepsilon)$-EFX allocation. 

\begin{lemma}
	\label{cycleinG}
	Given a cycle $C$ in $G$ that contains at most one vertex from each $V_g$, for all $g \in L_X$, we can determine a $(1-\varepsilon)$-EFX allocation $X' \pd X$ in polynomial time. 
\end{lemma}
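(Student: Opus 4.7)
The plan is to reduce this lemma to update rule $U_3$ (Lemma~\ref{U3}) after a cyclic reindexing of the cycle's goods and, if necessary, a shortening step that ensures the sources involved are pairwise distinct. Write the given cycle as
\[
C = (g_1, s(a_1)) \to (g_2, s(a_2)) \to \cdots \to (g_\ell, s(a_\ell)) \to (g_1, s(a_1)).
\]
By the construction of the group champion graph, for every $i$ (indices modulo $\ell$), agent $a_i$ is the champion of $X_{s(a_{i+1})} \cup g_i$, certified by some witness $Z_{i+1} \subseteq X_{s(a_{i+1})} \cup g_i$, and $a_i$ is reachable from $s(a_i)$ in $E_X$. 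Because $C$ visits each part $V_{g_i}$ of $G$ at most once, the goods $g_1, \ldots, g_\ell$ are pairwise distinct.

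First I would cyclically reindex the unallocated goods by setting $g'_{i+1} := g_i$, and define $s_i := s(a_i)$, $t_i := a_i$. The champion relation then becomes ``$t_i$ is the champion of $X_{s_{i+1}} \cup g'_{i+1}$,'' which matches the hypothesis of Lemma~\ref{U3} verbatim. The main technical subtlety is that $U_3$ requires the sources $s_1,\ldots,s_\ell$ to be pairwise distinct, since otherwise the path-shift reassignments from $s_i$ to $t_i$ in $E_X$ may conflict at a shared source. To arrange distinctness, I would apply the following shortening step whenever $s(a_i) = s(a_j)$ for some $i < j$: the edge $(g_{i-1}, s(a_{i-1})) \to (g_i, s(a_i))$ of $G$ encodes that $a_{i-1}$ is the champion of $X_{s(a_i)} \cup g_{i-1} = X_{s(a_j)} \cup g_{i-1}$, and this very same champion relation realises an edge $(g_{i-1}, s(a_{i-1})) \to (g_j, s(a_j))$ in $G$. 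Splicing in this chord and deleting the arc through $(g_i, s(a_i)), \ldots, (g_{j-1}, s(a_{j-1}))$ produces a strictly shorter cycle in $G$ visiting each part at most once. Iterating terminates in a cycle with pairwise distinct sources (and hence, since $s(\cdot)$ is a function of the agent, pairwise distinct champions as well).

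Feeding this reduced cycle into $U_3$ then produces the desired $(1-\varepsilon)$-EFX allocation $X' \pd X$ in polynomial time. The hardest part, in my view, is the shortening argument: one must verify that at each iteration the chord edge genuinely lies in $G$ (which crucially exploits the fact that the edges of $G$ depend only on the source coordinate of the target vertex, not on the specific agent in $Q_h$ from which that source was assigned), and that the resulting cycle, once distinct sources are achieved, still yields path-shift assignments for $U_3$ that are mutually consistent.
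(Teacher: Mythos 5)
Your proposal is correct and follows essentially the same route as the paper: translate the cycle's edges into the champion relations required by rule $U_3$ (with the same cyclic reindexing of the goods) and then handle repeated sources before invoking $U_3$, noting that distinct sources force distinct champion agents. The only cosmetic difference is in the shortening step -- the paper extracts in one shot a contiguous subsequence $s_{i'},\dots,s_{j'}$ of pairwise distinct sources with $s_{j'+1}=s_{i'}$ and feeds that directly to $U_3$ (no new edge of $G$ is needed, since $U_3$ only consumes the champion relations), whereas you iteratively splice chord edges of $G$, justified by the same observation that edges of the group champion graph depend only on the source coordinate of their target.
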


\begin{proof}
	Let $C = (g_{i+1},s_i) \rightarrow (g_{i+2},s_{i+1}) \rightarrow \dots \rightarrow (g_{j+1},s_j) \rightarrow (g_{i+1},s_i)$ be a cycle in $G$ that visits each part at most once. It will become clear below, why we index the $g$'s starting at $i+1$. Consider the sequence $s_i, s_{i+1}, \dots, s_j$. If all the sources in this sequence are not distinct, there exists a contiguous subsequence $s_{i'}, s_{i'+1}, \dots , s_{j'}$ where all the sources are distinct and $s_{j'+1} = s_{i'}$ with $i \leq i' < j' \leq j$ (index $j+1$ is to be interpreted as $i$). 	
	
	We now work with the sequence $s_{i'}, s_{i'+1}, \dots , s_{j'}$ where all the sources are distinct and $s_{j'+1} = s_{i'}$. For all $\ell \in [i'+1,j'+1]$, the existence of the edge $(g_{\ell},s_{\ell-1}) \rightarrow (g_{\ell+1} ,s_{\ell})$ implies the existence of an agent $t_{\ell-1}$ such that $t_{\ell - 1}$ is the champion of $X_{s_{\ell}} \cup g_{\ell}$ and $s(t_{\ell - 1}) = s_{\ell - 1}$, i.e., $t_{\ell-1}$ is reachable from $s_{\ell-1}$ in $E_X$.  Since the sources $s_{i'}, s_{i'+1}, \dots , s_{j'}$ are distinct, the agents $a_{i'}, a_{i'+1}, \dots, a_{j'}$ are also distinct (as each agent has a unique source assigned). Therefore, we have distinct sources $s_{i'}, \dots , s_{j'}$ in $E_X$, distinct goods $g_{j'+1}, g_{i'+1}, \dots , g_{j'}$ and distinct agents $t_{i'}, \dots t_{j'}$  that satisfy the conditions under which the update rule $U_3$ (Lemma~\ref{U3}) is applicable. By applying $U_3$ we can get a $(1-\varepsilon)$-EFX allocation $X' \pd X$. 
\end{proof} 

With Lemma~\ref{cycleinG}, we are now ready to give an upper bound on $\abs{L_X}$. Observe that $\abs{L_X}$ equals the number of parts  in $G$. Now the question is how many parts  can $G$ have such that it does not admit a cycle that visits each part at most once. This is where we upper bound $\abs{L_X}$ with the rainbow cycle number.

\begin{lemma}
	\label{L_Xbound}
	Consider a $(1-\varepsilon)$-EFX allocation $X$. If $\abs{L_X} >  \ram(d)$, there is a $(1-\varepsilon)$-EFX allocation $X'\pd X$. 
\end{lemma}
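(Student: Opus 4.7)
The plan is to apply the definition of the rainbow cycle number contrapositively to the group champion graph $G = (\cup_{g \in L_X} V_g, E)$, and then invoke Lemma~\ref{cycleinG}. The key observation is that $G$ already satisfies conditions (1) and (2) of Definition~\ref{Rdef} by construction, so if it has more parts than $\ram(d)$, condition (3) must fail.

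First I would verify condition (1): for every $g \in L_X$, the part $V_g = \{(g,s(a)) : a \in Q_g\}$ has at most $|Q_g| \le d$ vertices, since $g$ is a low-demand good. Next, condition (2) is exactly the content of Observation~\ref{edgesinG}: for any two distinct parts $V_g$ and $V_h$, every vertex in $V_h$ receives an incoming edge from some vertex in $V_g$. Note that the vertices across parts are tagged with the corresponding good, so distinct parts are formally disjoint even if the same source $s(a)$ appears in several $Q_g$'s.

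Now the argument is by definitional contradiction. Suppose, toward contradiction, that $G$ contains no cycle intersecting each part at most once. Then $G$ would be a $|L_X|$-partite graph satisfying all three conditions of Definition~\ref{Rdef} with $k = |L_X|$ parts, each of cardinality at most $d$. By the maximality in the definition of $\ram(d)$, this would force $|L_X| \le \ram(d)$, contradicting the hypothesis $|L_X| > \ram(d)$. Therefore $G$ must contain a cycle $C$ visiting each part at most once.

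Finally, I would apply Lemma~\ref{cycleinG} to the cycle $C$ to obtain, in polynomial time, a $(1-\varepsilon)$-EFX allocation $X' \pd X$. There is really no serious obstacle in this step — the lemma is a direct combination of the definition of $\ram(d)$ with the two preceding results — the only thing worth emphasizing is that the disjointness of the parts (via the tagging $(g,s(a))$) is what legitimizes the use of Definition~\ref{Rdef} even when agents may share sources across different $Q_g$'s.
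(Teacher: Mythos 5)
Your proof is correct and follows essentially the same route as the paper: verify that the group champion graph satisfies conditions (1) and (2) of Definition~\ref{Rdef} (the latter via Observation~\ref{edgesinG}), conclude from the maximality in the definition of $\ram(d)$ that a cycle visiting each part at most once must exist when $\abs{L_X} > \ram(d)$, and then invoke Lemma~\ref{cycleinG}. The only difference is presentational — you spell out the contrapositive explicitly and note the formal disjointness of the parts via the $(g,s(a))$ tagging, which the paper leaves implicit.
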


\begin{proof}
	Recall that $\abs{L_X} = k$, where $k$ is the number of parts  in $G$. Note that each part of $G$ corresponds to the sources assigned to the  agents that find a particular good in $L_X$ valuable ($Q_g$ for some $g \in L_X$). By definition of $L_X$, there are at most $d$ agents that find a good in $L_X$ valuable. Thus each part has at most $d$ vertices. Again, by Observation~\ref{edgesinG}, between any two parts  $V_g$ and $V_h$ of $G$, each vertex in $V_h$ has an incoming edge from a vertex in $V_g$. Therefore, by Definition~\ref{Rdef}, we have that if $k > \ram(d)$, then there exists a cycle $C$ in $G$ that visits each part at most once.
	Once we have $C$, by Lemma~\ref{cycleinG}, we  can determine a $(1-\varepsilon)$-EFX allocation $X' \pd X$.
\end{proof}

Given a $(1-\varepsilon)$-EFX allocation $X$ such that $\abs{L_X} > \ram(d)$, Lemma~\ref{L_Xbound} only gives the existence of a $(1-\varepsilon)$-EFX allocation $X' \pd X$. However, to determine $X'$ in polynomial time, one needs to find a cycle $C$ in $G$ which visits each part at most once when $\abs{L_X} > \ram(d)$, in polynomial time. Let us remark that this is a non-trivial problem in general, reminiscent of the well-known {\sc $k$-Path} and {\sc $k$-Cycle} problems which are NP-complete~\cite{fptbook}. Here, the input is a (di)graph $G$ and an integer $k$, and the objective is to determine of there is a path (cycle) on at least $k$-distinct vertices of the graph. These problems can be solved in $2^{\mathcal{O}(k)}\cdot {\sf \textup{poly}(n)}$ time using techniques based on color-coding, hash-functions and splitters~\cite{fptbook,alonColorCoding,naorSplitters}. In particular, we can reduce {\sc $k$-Path} to the following problem in polynomial time: find a $k$-path in a \emph{colorful} graph on $n$ vertices, whose vertices have been colored with $\mathcal{O}({\sf \textup{poly}}(k) \cdot \log n)$ colors, such that every vertex of the $k$-path has a distinct color. However, for our purposes the construction of the cycle $C$ in $G$ is a part of the proof of Theorem~\ref{mainthm3body} (described in Section~\ref{boundonR}: we show that in polynomial time, one can find a cycle in a $(d^4+d)$-partite digraph, in which each part has at most $d $ vertices and for any two parts $V$ and $V'$ in the digraph, every vertex in $V'$ has an incoming edge from some vertex in $V$ and vice-versa. This implies that if $\abs{L_X} > d^4 +d$, then in polynomial time, we can determine a cycle $C$ in $G$ that visits each part at most once and then determine a $(1-\varepsilon)$-EFX allocation $X' \pd X$ by applying $U_3$. This also implies that $\ram(d) \leq d^4 + d$. Therefore, 

\begin{lemma}
	\label{L_Xboundalg}
	Consider a $(1-\varepsilon)$-EFX allocation $X$. If $\abs{L_X} >  d^4 + d$, then in polynomial time, we can determine a $(1-\varepsilon)$-EFX allocation $X'\pd X$.
\end{lemma}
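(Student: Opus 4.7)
The plan is to combine the algorithmic content of Theorem~\ref{mainthm3body} (established in Section~\ref{boundonR}) with the structural machinery already developed for the group champion graph. Theorem~\ref{mainthm3body} supplies a polynomial-time procedure that, given any $k$-partite digraph with $k > d^4 + d$ parts each of size at most $d$ in which every vertex has an incoming edge from every other part, outputs a cycle visiting each part at most once. So my job is simply to build such a digraph from the current $(1-\varepsilon)$-EFX allocation $X$, run this procedure, and turn the resulting cycle into a dominating allocation via the update rule $U_3$.

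First, I would explicitly construct the group champion graph $G = (\cup_{g \in L_X} V_g, E)$ introduced in the previous subsection. Computing the assigned sources $s(a)$ takes one pass through the acyclic envy graph $E_X$, and the sets $Q_g$ and $V_g$ are read off directly from the valuations. To construct the edges, for each good $g \in L_X$ and each source $b$ that appears as the second coordinate of some vertex in $\bigcup_{h \in L_X} V_h$, I would invoke the procedure behind Observation~\ref{mostenviousagent} on the set $X_b \cup g$ to obtain, in time $\bigO{n \cdot \abs{X_b \cup g}^2}$, the champion of that set, and add the corresponding edge into the part $V_g$. Because assumption $(*)$ is in force and $b$ is a source in $E_X$, Observation~\ref{edgesinG} guarantees that such a champion exists in $V_g$, so between every pair of parts of $G$ every vertex in one part has an incoming edge from the other part. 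By hypothesis $\abs{L_X} > d^4 + d$, so $G$ satisfies all the premises of Theorem~\ref{mainthm3body}.

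I would then feed $G$ into the polynomial-time cycle-finding procedure from Theorem~\ref{mainthm3body} to obtain a cycle $C$ in $G$ that intersects each part at most once, and finally hand $C$ to the constructive proof of Lemma~\ref{cycleinG}, which extracts from $C$ the sources, goods and champions required by $U_3$ (Lemma~\ref{U3}) and applies that rule to produce the desired $(1-\varepsilon)$-EFX allocation $X' \pd X$ in polynomial time. The main obstacle is precisely the cycle-finding step: as noted in the discussion preceding the lemma, searching for a cycle that visits each part at most once is a colorful cycle problem that is NP-hard in general; the sharpening of Lemma~\ref{L_Xbound} into its algorithmic form Lemma~\ref{L_Xboundalg} relies entirely on the constructive polynomial bound $\ram(d) \leq d^4 + d$ from Theorem~\ref{mainthm3body}, which sidesteps this hardness. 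Once that algorithmic bound is available, the remaining steps --- building $G$, computing champions via Observation~\ref{mostenviousagent}, and applying $U_3$ --- are standard polynomial-time bookkeeping.
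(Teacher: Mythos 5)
Your proposal matches the paper's argument: the paper likewise obtains Lemma~\ref{L_Xboundalg} by observing that the group champion graph $G$ satisfies the hypotheses of Theorem~\ref{mainthm3body} (parts of size at most $d$ by the definition of $L_X$, incoming edges between all pairs of parts by Observation~\ref{edgesinG}), invoking the polynomial-time cycle construction from Section~\ref{boundonR}, and then converting the cycle into a dominating allocation via Lemma~\ref{cycleinG} and rule $U_3$. Your additional remarks on computing the edges of $G$ via Observation~\ref{mostenviousagent} only make explicit bookkeeping the paper leaves implicit, so the two proofs are essentially identical.
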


\paragraph{Putting it together.} We give the existence proof and indicate in brackets the changes required for the polynomial time algorithm. We start with an empty allocation, which is trivially a $(1-\varepsilon)$-EFX. Then, our algorithm iteratively maintains a $(1-\varepsilon)$-EFX allocation $X$ and a pool of unallocated goods. In each iteration, the algorithm first makes $E_X$ acyclic in polynomial time (Lemma~\ref{envycycleelimination}). Thereafter,  our algorithm checks whether any one of the update rules $U_1$ and $U_2$ is applicable. If $U_1$ is applicable, then our algorithm determines an allocation a $(1-\varepsilon)$-EFX allocation $X'$ where $v_i(X'_i) \geq v_i(X_i)$ for all $i \in [n]$ and  the number of unallocated goods reduces. If $U_2$ is applicable, then our algorithm determines a $(1-\varepsilon)$-EFX allocation $X' \pd X$.  If neither $U_1$ nor $U_2$ is applicable, then it determines the sets $H_X$ and $L_X$. By Lemma~\ref{V_Xbound}, we have $\abs{H_X} \leq 2n/(\varepsilon \cdot d)$. If $\abs{L_X} \leq \ram(d)$ ($\abs{L_X} \leq d^4 + d$), then it returns the allocation $X$. Otherwise  it determines a cycle that visits each part of $G$ at most once and then determines $(1-\varepsilon)$-EFX allocation $X' \pd X$ by applying update rule $U_3$ (by Lemma~\ref{L_Xbound}). If $\abs{L_X} > d^4 + d$, the cycle can be determined in polynomial time. 
Therefore, when the algorithm terminates, we have that $\abs{H_X} \leq 2n/(\varepsilon \cdot d)$ and $\abs{L_X} \leq \ram(d)$, ($\abs{L_X} \le d^4 + d$)  implying that the total number of unallocated goods is $\abs{H_X} + \abs{L_X} \leq 2 \cdot \max( 2n/ (\varepsilon \cdot d), \ram(d))$ ($2 \cdot \max(2n/(\eps \cdot d), 2d^4)$).

We now state the explicit value of $d$, first for the existence proof. We choose $d$ as the smallest integer such that  $2n / (\varepsilon d) \le  \ram(d)$, i.e, $d = h^{-1} (2n / \varepsilon)$. \footnote{Recall that $h(d) = d \cdot \ram(d)$ in Definition~\ref{Rdef} and that $h^{-1} (2n/\eps)$ is defined as the smallest integer such that $h(d) \ge 2n/\eps$. } Therefore, the number of unallocated goods is at most $4 n /(\eps \cdot h^{-1}(2n / \eps)) $.

For the algorithmic result, we choose $d$ as the smallest integer such that $2n/(\eps \cdot d) \le 2d^4$. Then $d =
\ceil{(n/\eps)^{1/5}}$ and the number of unallocated goods is at most $4 \ceil{(n/\eps)^{1/5}}^4$. This is less than $64 (n/\eps)^{4/5}$.

It only remains to show that the algorithm will terminate. We prove a polynomial bound on the number of iterations. The bound applies to the existence and the algorithmic version. To this end, note that in each iteration, after removing cycles from $E_X$, our algorithm determines a new $(1-\varepsilon)$-EFX allocation $X'$ through one of the following procedures:
\begin{itemize}
	\item applying $U_1$,
	\item applying $U_2$, 
	\item determining a cycle $C$ that visits each part in $G$ at most once and then applying $U_3$. 
\end{itemize}
Note that the initial envy-cycle elimination and subsequent application of all of the above procedures ensure that $v_i(X'_i) \geq v_i(X_i)$ for all $i \in [n]$ (Lemmas~\ref{envycycleelimination},~\ref{U1},~\ref{U2},~\ref{U3}). Thus, throughout the algorithm the valuation of an agent never decreases. Note that there cannot be more than $m$ \emph{consecutive applications} of $U_1$, as the number of unallocated goods decreases with each application of $U_1$. Every time we apply $U_2$ or $U_3$, we ensure that $X' \pd X$, implying that the valuation of some agent improves by a factor of at least $(1+\varepsilon)$. Since each agent's valuation is bounded by $W = \mathit{max}_{i \in [n]} v_i(M)$, and the valuation of an agent never decreases throughout the algorithm, we can have at most $\textup{poly}(n,m,W, 1 / \varepsilon)$ many iterations that involve applications of $U_2$ and $U_3$.  Therefore, the total number of iterations of our algorithm is $m \cdot (\text{iterations involving application of $U_2$ or $U_3$})$ which is also $\textup{poly}(n,m,\log W, 1 / \varepsilon)$. Notice that in the algorithmic case, each of the iterations can also be implemented in polynomial time: $U_1$ and $U_2$ can be implemented in polynomial time (Lemmas~\ref{U1} and~\ref{U2}). When $\abs{L_X} \geq 2d^4 \geq d^4 + d$, then in polynomial time we can determine the cycle $C$ and apply $U_3$ (Lemma~\ref{L_Xboundalg}).  We can now state the main result of this section.

\begin{theorem}
	\label{mainthm2version2}
	Let $h(d) = d \cdot \ram(d)$. Then there is a $(1- \varepsilon)$-EFX allocation $X$ and a set of unallocated goods $P$ such that $\abs{P} \leq ({4n} / {(\eps \cdot h^{-1}(n / (\varepsilon)))}$. In polynomial time, one can find a $(1 - \eps)$-EFX allocation and a set $P$ of unallocated goods such that $\abs{P} \le 64(n/\eps)^{4/5}$. 
\end{theorem}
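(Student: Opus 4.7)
The plan is to assemble the pieces already established in this section into a single iterative algorithm, and to pick the parameter $d$ twice --- once to optimise the existence bound and once to optimise the algorithmic bound. I would start with the empty allocation (trivially $(1-\eps)$-EFX, with pool $P = M$) and, in each iteration, first make $E_X$ acyclic via Lemma~\ref{envycycleelimination}, then attempt rule $U_1$ (Lemma~\ref{U1}) or $U_2$ (Lemma~\ref{U2}); if neither applies, then assumption~(*) holds and I invoke the group champion graph machinery.

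Under~(*), I partition the pool into high-demand $H_X$ and low-demand $L_X$ using threshold $d$. Observation~\ref{V_Xbound} unconditionally yields $\abs{H_X} < 2n/(\eps d)$, so only $\abs{L_X}$ needs to be controlled. Since each part of the group champion graph $G$ has at most $d$ vertices and Observation~\ref{edgesinG} supplies the ``incoming edge from every other part'' property, $G$ satisfies conditions~1 and~2 of Definition~\ref{Rdef} with $\abs{L_X}$ parts; hence if $\abs{L_X} > \ram(d)$ there exists a cycle visiting each part at most once, and Lemma~\ref{cycleinG} converts it into an allocation $X' \pd X$ via rule $U_3$. Termination therefore forces $\abs{L_X} \le \ram(d)$, so $\abs{P} \le 2n/(\eps d) + \ram(d)$ upon termination.

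For the existence bound I choose $d = h^{-1}(2n/\eps)$, which makes $d \cdot \ram(d) \ge 2n/\eps$ and hence balances the two summands, giving $\abs{P} \le 2\ram(d) \le 4n / (\eps \cdot h^{-1}(2n/\eps))$, which matches the claimed bound. For the polynomial-time bound I would instead invoke Lemma~\ref{L_Xboundalg}: whenever $\abs{L_X} > d^4 + d$, one can \emph{in polynomial time} locate a rainbow cycle and apply $U_3$. Picking $d = \lceil (n/\eps)^{1/5} \rceil$ balances $2n/(\eps d)$ against $2d^4$, and using $\lceil x \rceil \le 2x$ for $x \ge 1$ yields $\abs{P} \le 4 \lceil (n/\eps)^{1/5} \rceil^4 \le 64 (n/\eps)^{4/5}$.

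Finally I would bound the iteration count. Each of Lemma~\ref{envycycleelimination} and rules $U_1, U_2, U_3$ is weakly Pareto-improving, and $U_2, U_3$ each strongly Pareto-dominate the current allocation, so some agent's value grows by a factor of at least $1/(1-\eps) \ge 1+\eps$. With every $v_i$ bounded by $W = \max_i v_i(M)$, the number of $U_2$ and $U_3$ applications is $\textup{poly}(n, \log W, 1/\eps)$; between any two consecutive such steps, $U_1$ can fire at most $m$ times since each firing shrinks $\abs{P}$ by one. Hence the total iteration count is polynomial, and in the algorithmic version each iteration is itself polynomial-time implementable. The main obstacle --- and the sole reason the existence and algorithmic bounds differ --- is that a mere upper bound on $\ram(d)$ does not by itself give a polynomial-time rainbow-cycle finder; the polynomial-time guarantee therefore hinges on the constructive form of the rainbow-cycle lemma (Lemma~\ref{L_Xboundalg}, i.e.\ Theorem~\ref{mainthm3}) that will be established in Section~\ref{boundonR}.
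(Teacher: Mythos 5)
Your proposal is correct and follows essentially the same route as the paper's own proof: the same iterative scheme (envy-cycle elimination, then $U_1$/$U_2$, then the $H_X$/$L_X$ split with Observation~\ref{V_Xbound} bounding $\abs{H_X}$ and the group champion graph plus Lemmas~\ref{cycleinG} and~\ref{L_Xboundalg} bounding $\abs{L_X}$), the same two choices of $d$ ($h^{-1}(2n/\eps)$ for existence, $\lceil (n/\eps)^{1/5}\rceil$ for the algorithm), and the same potential-based termination argument. Even the one slightly loose step --- passing from $2n/(\eps d)+\ram(d)$ with $d=h^{-1}(2n/\eps)$ to the stated $4n/(\eps\cdot h^{-1}(2n/\eps))$ bound --- mirrors the paper's own presentation, so nothing further is needed.
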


Note that any upper bound on the rainbow cycle number will imply an upper bound on the number of unallocated goods. 

\section{Bounds on the Rainbow Cycle Number}
\label{boundonR}
In this section, we give the proof of Theorem~\ref{mainthm3}. We briefly recall the setup: There is a $k$-partite digraph $G = (\cup_{i \in [k]} V_i, E_G)$ such that each part has at most $d$ vertices. For every distinct parts  $V_i$ and $V_j$, every vertex in $V_j$ has an incoming edge from some vertex in $V_i$. There is no cycle in $G$ that visits each part at most once. Our goal is to establish an upper bound on $k$.

We now introduce some helpful notations and concepts. For each $i \in [k]$, we represent the vertices in the part $V_i$ as $(i, \textup{vertex id})$, i.e,  $V_i = \{ (i,1), (i,2), \dots , (i, \lvert V_i \rvert) \}$. For any positive integer $d$ and  $a,b \in [d]$, we use  $\sigma_d(a,b)$ to denote $(a-1) \cdot d + b$.  Note that $1 \le \sigma_d(a,b) \le d^2$. The $\sigma_d(a,b)$ captures the lexicographic ordering among the pairs $\cup_{a \in [d]} \cup_{b \in [d]} (a,b)$. For any Boolean vector $u \in \{0,1\}^r$, we use $u[k]$ to refer to the $k^{\mathit{th}}$ coordinate of the vector $u$. We introduce the simple yet crucial notion of \emph{representative set for a set of Boolean vectors}. Given a set $D$ of $r$-dimensional Boolean vectors, the set $B \subseteq D$ is a \emph{representative set} of $D$,
if $\set{\ell}{\text{$a[\ell] = 1$ for some $a \in D$}}=  \set{\ell}{\text{$b[\ell] = 1$ for some $b \in B$}}$. We first make an observation about the size of $B$.

\begin{observation}
	\label{sizeofB}
	Given any set $D$ of $r$-dimensional Boolean vectors, there exists a representative set $B \subseteq D$ of size at most $r$.
\end{observation}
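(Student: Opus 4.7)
The plan is to prove Observation~\ref{sizeofB} by an explicit greedy/coordinate-by-coordinate selection; the statement is essentially a pigeonhole-style bound, and no serious obstacle is expected since the ambient dimension $r$ immediately caps the number of ``new'' coordinates any selection must cover.

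First I would introduce the support set $S = \set{\ell \in [r]}{a[\ell] = 1 \text{ for some } a \in D}$. By definition, a subset $B \subseteq D$ is representative exactly when the union of supports of vectors in $B$ equals $S$. Since $S \subseteq [r]$, we automatically have $\abs{S} \le r$, so it suffices to exhibit a representative $B$ with $\abs{B} \le \abs{S}$.

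Next, for each coordinate $\ell \in S$, the definition of $S$ guarantees at least one vector $a_\ell \in D$ with $a_\ell[\ell] = 1$; pick one such vector arbitrarily. Define $B = \set{a_\ell}{\ell \in S}$. Then $B \subseteq D$ and $\abs{B} \le \abs{S} \le r$. To check representativeness, observe that for every $\ell \in S$ the chosen $a_\ell \in B$ satisfies $a_\ell[\ell] = 1$, so $\ell$ lies in the support of some vector of $B$; conversely, any coordinate covered by a vector of $B \subseteq D$ is by definition in $S$. Hence the supports of $B$ and of $D$ have the same union, which is exactly the representativeness condition.

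An equivalent (and perhaps more algorithmic) phrasing, which I would mention in a single sentence as an alternative, is the greedy procedure: start with $B = \emptyset$, and while some $\ell \in S$ is not yet covered by a vector in $B$, add to $B$ any vector of $D$ with a $1$ in coordinate $\ell$; each iteration covers at least one new coordinate of $[r]$, so the procedure terminates within $r$ steps and produces a representative set of size at most $r$. Since the construction is entirely elementary, no technical obstacle arises, and the bound $\abs{B} \le r$ is tight whenever $D$ contains vectors whose supports are pairwise disjoint singletons.
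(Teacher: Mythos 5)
Your proof is correct and follows essentially the same coordinate-by-coordinate selection as the paper's own one-line argument: for each coordinate that is set in some vector of $D$, pick one such vector into $B$, giving $\abs{B} \le r$. The extra detail (the support set $S$ and the greedy rephrasing) is fine but adds nothing beyond what the paper already does.
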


\begin{proof}
	For each coordinate $\ell \in [r]$ we do: if there is a vector $a \in D$ with $a[\ell] = 1$, we put one such vector into $B$. Clearly, $\abs{B} \le r$.
\end{proof}

We prove Theorem~\ref{mainthm3} by contradiction. To be precise, we show that if $k > d^4 +d $, then there exists a cycle in $G$ that visits every part at most once. Moreover, this cycle can be found in time polynomial in $k$.

We construct the cycle in two steps. We first show the existence of a part $V_{\tilde{\ell}}$ such that there is a directed cycle that visits only the parts $V_{\tilde{\ell}}$, $V_1$, $V_2$, \ldots, $V_d$ and moreover each of the parts $V_1$, $V_2$, \ldots, $V_d$ at most once. In the second step we replace the vertices in $V_{\tilde{\ell}}$
in this cycle by vertices in distinct parts. 

For each ordered pair $(i,j) \in [d] \times [d]$, and  $\ell \in [k] \setminus [d]$, we define a $d^2$-dimensional vector $u_{i,j,\ell}$ as follows: for all $x \in [d]$ and $y \in [d]$, we set $u_{i,j,\ell}[\sigma_d(x,y)] = 1$ if and only if there exists a path $(i,x) \rightarrow (\ell,z) \rightarrow (j,y)$ in $G$ for some $(\ell,z) \in V_{\ell}$, i.e., if there exists a path from vertex $(i,x)$ in  $V_i$ to vertex $(j,y)$ in $V_j$ through some vertex in $V_{\ell}$. Otherwise, we set $u_{i,j,\ell}[\sigma_d(x,y)] = 0$.

\renewcommand{\L}{\mathcal{L}}

Let $\L = [k] \setminus [d]$. For each ordered pair $(i,j) \in [d] \times [d]$, we construct the sets $B^{i,j}$ and $\L^{i,j}$ as follows: For each $(i,j)$ taken in the increasing order of $\sigma_{d}(i,j)$, define $\L^{i,j} = \L$ and $B^{i,j}$ as a representative vector set of $\set{u_{i,j,\ell}}{\ell \in \L^{i,j}}$ of size at most $d^2$. A set $B^{i,j}$ of this size exists because our vectors have dimension $d^2$. Then we set $\L = \L \setminus \set{\ell}{u_{i,j, \ell} \in B^{i,j}}$. At most $d^2$ elements are removed from $\L$ in each iteration.

For clarity, we write $\L^{f}$ to denote the set $\L$ at the end of the construction. Observe that $\abs{\L^f} \geq 1$. This holds since we start with a set of size larger than $d^4$ and removed at most $d^2$ elements in each of the $d^2$ iterations.

\begin{observation}
	\label{disjointbypasses}
	Consider distinct ordered pairs $(i,j) \in [d] \times [d]$ and $(i',j') \in [d] \times [d]$.  The sets $\set{\ell}{u_{i,j,\ell} \in B^{i,j}}$ and $\set{\ell}{u_{i',j',\ell} \in B^{i',j'}}$ are disjoint.  
\end{observation}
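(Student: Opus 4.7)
My plan is to observe that this claim is essentially immediate from the sequential nature of the construction of the sets $B^{i,j}$ and $\L^{i,j}$. The key is that the index set $\L$ shrinks monotonically as we process the pairs $(i,j)$ in the order given by $\sigma_d$, and each $B^{i,j}$ is chosen from vectors $u_{i,j,\ell}$ with $\ell$ still in the current $\L$.

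Concretely, I would argue as follows. Without loss of generality, assume $\sigma_d(i,j) < \sigma_d(i',j')$, so that $(i,j)$ is processed strictly before $(i',j')$. Let $\L$ denote the set immediately after the iteration that builds $B^{i,j}$; by the update rule, $\L = \L^{i,j} \setminus \set{\ell}{u_{i,j,\ell} \in B^{i,j}}$. Since $\L$ only shrinks further in subsequent iterations, we have
\[
\L^{i',j'} \subseteq \L \subseteq \L^{i,j} \setminus \set{\ell}{u_{i,j,\ell} \in B^{i,j}}.
\]
In particular, $\L^{i',j'}$ is disjoint from $\set{\ell}{u_{i,j,\ell} \in B^{i,j}}$.

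Now, by construction $B^{i',j'}$ is a representative subset of $\set{u_{i',j',\ell}}{\ell \in \L^{i',j'}}$, so every vector in $B^{i',j'}$ is of the form $u_{i',j',\ell}$ for some $\ell \in \L^{i',j'}$. Thus
\[
\set{\ell}{u_{i',j',\ell} \in B^{i',j'}} \subseteq \L^{i',j'},
\]
and combining with the previous display gives the desired disjointness. I do not anticipate any real obstacle here, as the statement is essentially a bookkeeping consequence of how $\L$ is updated between consecutive iterations; the only thing to be mindful of is to pick the ordering of the two pairs correctly so that ``processed first'' has a well-defined meaning, which is handled by the WLOG step above.
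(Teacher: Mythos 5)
Your proposal is correct and follows essentially the same argument as the paper: order the two pairs by $\sigma_d$, note that every index contributing to $B^{i,j}$ is removed from $\L$ at the end of that iteration and hence is absent from $\L^{i',j'}$, and that $B^{i',j'}$ only draws indices from $\L^{i',j'}$. The only difference is presentational—you phrase it via a chain of set inclusions while the paper traces a single index $\ell$—but the underlying reasoning is identical.
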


\begin{proof}
	Let us assume without loss of generality that $\sigma_{d}(i,j) < \sigma_{d}(i',j')$. Consider any $\ell$ such that $u_{i,j,\ell} \in B^{i,j}$. Then $\ell$ is removed from $\L$ at the end of the iteration for the pair $(i,j)$ and hence does not belong to $\L$ at the beginning of the iteration for the pair $(i',j')$. Consequently $u_{i',j', \ell} \notin B^{i',j'}$ (by definition of $B^{i',j'}$, if $u_{i',j',\ell} \in B^{i',j'}$, then $\ell \in \L^{i',j'}$).
\end{proof}

At the end of the construction, we arbitrarily pick a $\tilde{\ell} \in \L^f$ (this is possible as $\L^f \neq \emptyset$). Now, we make a small observation about the vector $u_{i,j,\tilde{\ell}}$ for all $i,j \in [d]$. 

\begin{observation}
	\label{basis}
	For all $i,j \in [d]$, if $u_{i,j, \tilde{\ell}}[q] = 1$ for some $q \in [d^2]$, then there exists a vector $u_{i,j,{l'}} \in B^{i,j}$ such that $u_{i,j,{l'}}[q] = 1$.
\end{observation}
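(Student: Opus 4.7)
The statement to prove is essentially a direct unpacking of two definitions, namely the definition of $\L^{i,j}$ and the definition of representative set. So my plan is quite short.

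First, I will observe that $\tilde{\ell}$ belongs to every set $\L^{i,j}$. This is because the algorithm constructing the $B^{i,j}$'s only removes elements from $\L$: at each iteration indexed by a pair $(i,j)$, the current $\L$ is first renamed to $\L^{i,j}$ and then possibly shrunk by removing some indices (those $\ell$ with $u_{i,j,\ell}\in B^{i,j}$). Hence the sequence of sets $\L^{i,j}$, ordered by increasing $\sigma_d(i,j)$, is monotonically non-increasing, and every $\L^{i,j}$ contains the final set $\L^f$. Since $\tilde{\ell}\in \L^f$, we conclude $\tilde{\ell}\in \L^{i,j}$ for every pair $(i,j)\in[d]\times[d]$.

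Next, I will invoke the definition of representative set. By construction, $B^{i,j}$ is a representative set of the vector family $\set{u_{i,j,\ell}}{\ell\in\L^{i,j}}$. Since $\tilde{\ell}\in\L^{i,j}$, the vector $u_{i,j,\tilde{\ell}}$ is itself a member of this family. Thus if $u_{i,j,\tilde{\ell}}[q]=1$ for some coordinate $q\in[d^2]$, then $q$ lies in the set of coordinates where some vector of the family takes value $1$, which by the representative-set property coincides with the set of coordinates where some vector of $B^{i,j}$ takes value $1$. Therefore there exists $u_{i,j,\ell'}\in B^{i,j}$ with $u_{i,j,\ell'}[q]=1$, as required.

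There is no real obstacle here: the whole observation is a bookkeeping remark justifying why, even though $B^{i,j}$ is a small selection from a potentially much larger family of vectors indexed by $\L^{i,j}$, it still records every $1$-coordinate that is witnessed by the specific index $\tilde{\ell}$. The only point worth double-checking is the first step, that $\tilde{\ell}$ indeed survives in every $\L^{i,j}$; this is immediate from the monotonicity of the construction together with $\tilde{\ell}\in\L^f$.
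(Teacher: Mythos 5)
Your proof is correct and follows essentially the same route as the paper's: first note that $\L^f \subseteq \L^{i,j}$ (hence $\tilde{\ell} \in \L^{i,j}$) by the monotonicity of the construction, then apply the defining property of the representative set $B^{i,j}$. You merely spell out the monotonicity step in slightly more detail than the paper does.
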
 

\begin{proof}
	Observe that $\L^f \subseteq \L^{i,j}$. Therefore, $\tilde{l} \in \L^{i,j}$. By definition, $B^{i,j}$ is a representative vector set of  $\set{u_{i,j, \ell}}{\ell \in \L^{i,j}}$.  Therefore, by the definition of representative set, there exists a vector $u_{i,j,{\ell'}} \in B^{i,j}$ such that $u_{i,j,{\ell'}}[q]=1$. 
\end{proof}

We are now ready for the construction of a cycle that visits each part at most once. We first show that there exists a cycle $C$ in $G$ that visits only the parts $V_{\tilde{\ell}}$, $V_1$, \ldots, $V_d$ and each of the parts $V_1$, \ldots $V_d$ at most once, i.e, the only part it may visit more than once is $V_{\tilde{\ell}}$. See Figure~\ref{complexcycle} for an illustration.

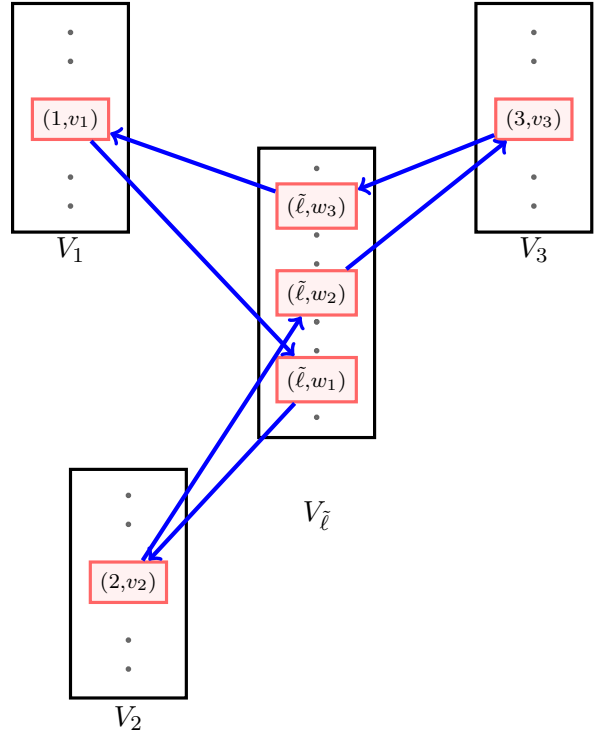
\begin{figure}[tb]
	\centering
	\begin{tikzpicture}[scale=0.77,
roundnode/.style={circle, draw=green!60, fill=green!5, very thick, minimum size=7mm},
squarednode/.style={rectangle, draw=red!60, fill=red!5, very thick, minimum size=3mm},
]

\node[squarednode]      (v1)      at (4.75,8.5)                       {$\scriptstyle{(1,v_1)}$};
\node[squarednode]      (v2)      at (5.75,0.5)                       {$\scriptstyle{(2,v_2)}$};
\node[squarednode]      (v3)      at (12.75,8.5)                       {$\scriptstyle{(3,v_3)}$};

\node[squarednode]      (w1)      at (9,4)                       {$\scriptstyle{(\tilde{\ell},w_1)}$};
\node[squarednode]      (w2)      at (9,5.5)                       {$\scriptstyle{(\tilde{\ell},w_2)}$};
\node[squarednode]      (w3)      at (9,7)                       {$\scriptstyle{(\tilde{\ell},w_3)}$};

\draw[black, very thick] (3.75,4.55+2) rectangle (5.75,8.5+2);
\node at (4.75,6.25) {$V_1$};
\draw[black, very thick] (4.75,0.5-2) rectangle (6.75,4.45-2);
\node at (5.75,-1.85) {$V_2$};
\draw[black, very thick] (11.75,4.55+2) rectangle (13.75,8.5+2);
\node at (12.75,6.25) {$V_3$};

\draw[black, very thick] (8,8) rectangle (10,3);
\node at (9,1.65) {$V_{\tilde{\ell}}$};

\filldraw[color=black!60, fill=black!5, very thick](4.75,10) circle (0.02);
\filldraw[color=black!60, fill=black!5, very thick](4.75,9.5) circle (0.02);
\filldraw[color=black!60, fill=black!5, very thick](4.75,7.5) circle (0.02);
\filldraw[color=black!60, fill=black!5, very thick](4.75,7) circle (0.02);

\filldraw[color=black!60, fill=black!5, very thick](5.75,2) circle (0.02);
\filldraw[color=black!60, fill=black!5, very thick](5.75,1.5) circle (0.02);
\filldraw[color=black!60, fill=black!5, very thick](5.75,-0.5) circle (0.02);
\filldraw[color=black!60, fill=black!5, very thick](5.75,-1) circle (0.02);

\filldraw[color=black!60, fill=black!5, very thick](12.75,10) circle (0.02);
\filldraw[color=black!60, fill=black!5, very thick](12.75,9.5) circle (0.02);
\filldraw[color=black!60, fill=black!5, very thick](12.75,7.5) circle (0.02);
\filldraw[color=black!60, fill=black!5, very thick](12.75,7) circle (0.02);

\filldraw[color=black!60, fill=black!5, very thick](9,3.35) circle (0.02);
\filldraw[color=black!60, fill=black!5, very thick](9,5) circle (0.02);
\filldraw[color=black!60, fill=black!5, very thick](9,4.5) circle (0.02);
\filldraw[color=black!60, fill=black!5, very thick](9,5) circle (0.02);
\filldraw[color=black!60, fill=black!5, very thick](9,6) circle (0.02);
\filldraw[color=black!60, fill=black!5, very thick](9,6.5) circle (0.02);
\filldraw[color=black!60, fill=black!5, very thick](9,7.65) circle (0.02);

\draw[<-, ultra thick, blue] (w3) -- (v3);
\draw[<-, ultra thick, blue] (v3) -- (w2);
\draw[<-, ultra thick, blue] (w2) -- (v2);
\draw[<-, ultra thick, blue] (v2) -- (w1);
\draw[<-, ultra thick, blue] (w1) -- (v1);
\draw[<-, ultra thick, blue] (v1) -- (w3);
\end{tikzpicture}
	\caption{Illustration of the first part of the construction. The cycle in the figure visits the parts  $V_1$, $V_2$ and $V_3$ exactly once and the part $V_{\tilde{\ell}}$ three times. It is given by $(\tilde{\ell},w_3) \rightarrow (1,v_1) \rightarrow (\tilde{\ell},w_1) \rightarrow (2,v_2) \rightarrow (\tilde{{\ell}}, w_2) \rightarrow (3,v_3) \rightarrow (\tilde{\ell},w_3)$.  }
	\label{complexcycle}
\end{figure}

Let $({\tilde{\ell}}, w_{d})$ be an arbitrary vertex in $V_{\tilde{\ell}}$. We construct a path 
\[ (\tilde{\ell},w_0) \rightarrow (1,v_1)  \rightarrow   \ldots \rightarrow (i-1,v_{i-1}) \rightarrow (\tilde{\ell},w_{i-1}) \rightarrow (i,v_{i}) \rightarrow (\tilde{\ell},w_{i}) \rightarrow \ldots \rightarrow (d,v_{d}) \rightarrow (\tilde{\ell},w_{d}) \]
by starting at $(\tilde{\ell},w_{d})$ and tracing backwards: We start in $(\tilde{\ell},w_{d})$. Assume that we already traced back to $(\tilde{\ell},w_{i})$ with $i = d$ initially. By the construction of $G$, there must be an edge from some vertex $(i,v_{i})$ in $V_{i}$ to $(\tilde{\ell},w_{i})$ in $V_{\tilde{\ell}}$, and there must be an edge from some vertex $(\tilde{\ell},w_{i-1})$ in  $V_{\tilde{\ell}}$ to $(i,v_i)$ in $V_i$. Thus there is the path $(\tilde{\ell},w_{i-1}) \rightarrow (i,v_{i}) \rightarrow (\tilde{\ell},w_{i})$ in $G$. We keep continuing this procedure until we reach $(\tilde{\ell},w_0)$.

Since the part $V_{\tilde{\ell}}$ can have at most $d$ vertices, by the pigeonhole principle, there must be $i$ and $j$ with $0 \le i < j \le d$ such that $w_i = w_j$. Let $C$ be the subpath from $(\tilde{\ell},w_i)$ to $(\tilde{\ell},w_j)$, i.e.,
\[   C = (\tilde{\ell},w_i) \rightarrow (i+1,v_{i+1}) \rightarrow (\tilde{\ell},w_{i+1}) \rightarrow \ldots \rightarrow (\tilde{\ell},w_{j-1}) \rightarrow (j,v_j) \rightarrow (\tilde{\ell},w_j).\] 

Observe that $C$ visits all the parts  of $G$ except $V_{\tilde{\ell}}$ at most once.  We now show that by using ``bypass" parts  we can make the cycle simple. For clarity, we rewrite $C$ as
\[  C = (i+1,v_{i+1}) \rightarrow (\tilde{\ell},w_{i+1}) \rightarrow \ldots \rightarrow (\tilde{\ell},w_{j-1}) \rightarrow (j,v_j) \rightarrow (\tilde{\ell},w_j) \rightarrow (i+1,v_{i+1}).\]

\begin{figure}[tb]
	\centering 	
	\begin{tikzpicture}[scale=0.9,
roundnode/.style={circle, draw=green!60, fill=green!5, very thick, minimum size=7mm},
squarednode/.style={rectangle, draw=red!60, fill=red!5, very thick, minimum size=5mm},
]

\node[squarednode]      (v1)      at (4.75,8.5)                       {$\scriptstyle{(1,v_1)}$};
\node[squarednode]      (v2)      at (5.75,0.5)                       {$\scriptstyle{(2,v_2)}$};
\node[squarednode]      (v3)      at (12.75,8.5)                       {$\scriptstyle{(3,v_3)}$};

\node[squarednode]      (w1)      at (9,4)                       {$\scriptstyle{(\tilde{\ell},w_1)}$};
\node[squarednode]      (w2)      at (9,5.5)                       {$\scriptstyle{(\tilde{\ell},w_2)}$};
\node[squarednode]      (w3)      at (9,7)                       {$\scriptstyle{(\tilde{\ell},w_3)}$};

\node[squarednode]      (l1)      at (1.95,5.25)                       {$\scriptstyle{(\ell_1,y_1)}$};
\node[squarednode]      (l2)      at (12.05,1.25)                       {$\scriptstyle{(\ell_2,y_2)}$};
\node[squarednode]      (l3)      at (9,11.75)                       {$\scriptstyle{(\ell_3,y_3)}$};

\draw[black, very thick] (3.75,4.55+2) rectangle (5.75,8.5+2);
\node at (4.75,6.25) {$V_1$};
\draw[black, very thick] (4.75,0.5-2) rectangle (6.75,4.45-2);
\node at (5.75,-1.85) {$V_2$};
\draw[black, very thick] (11.75,4.55+2) rectangle (13.75,8.5+2);
\node at (12.75,6.25) {$V_3$};
\draw[black, very thick] (8,8) rectangle (10,3);
\node at (9,1.65) {$V_{\tilde{\ell}}$};

\draw[black, very thick] (1,7) rectangle (3,3.5);
\node at (1.5,3.25) {$V_{\ell_1}$};
\draw[black, very thick] (11,3) rectangle (13,-0.5);
\node at (12,-0.75) {$V_{\ell_2}$};
\draw[black, very thick] (8,10) rectangle (10,13.5);
\node at (9,9.75) {$V_{\ell_3}$};

\filldraw[color=black!60, fill=black!5, very thick](4.75,10) circle (0.02);
\filldraw[color=black!60, fill=black!5, very thick](4.75,9.5) circle (0.02);
\filldraw[color=black!60, fill=black!5, very thick](4.75,7.5) circle (0.02);
\filldraw[color=black!60, fill=black!5, very thick](4.75,7) circle (0.02);

\filldraw[color=black!60, fill=black!5, very thick](5.75,2) circle (0.02);
\filldraw[color=black!60, fill=black!5, very thick](5.75,1.5) circle (0.02);
\filldraw[color=black!60, fill=black!5, very thick](5.75,-0.5) circle (0.02);
\filldraw[color=black!60, fill=black!5, very thick](5.75,-1) circle (0.02);

\filldraw[color=black!60, fill=black!5, very thick](12.75,10) circle (0.02);
\filldraw[color=black!60, fill=black!5, very thick](12.75,9.5) circle (0.02);
\filldraw[color=black!60, fill=black!5, very thick](12.75,7.5) circle (0.02);
\filldraw[color=black!60, fill=black!5, very thick](12.75,7) circle (0.02);

\filldraw[color=black!60, fill=black!5, very thick](9,3.5) circle (0.02);
\filldraw[color=black!60, fill=black!5, very thick](9,5) circle (0.02);
\filldraw[color=black!60, fill=black!5, very thick](9,4.5) circle (0.02);
\filldraw[color=black!60, fill=black!5, very thick](9,5) circle (0.02);
\filldraw[color=black!60, fill=black!5, very thick](9,6) circle (0.02);
\filldraw[color=black!60, fill=black!5, very thick](9,6.5) circle (0.02);
\filldraw[color=black!60, fill=black!5, very thick](9,7.5) circle (0.02);

\filldraw[color=black!60, fill=black!5, very thick](1.95,6.5) circle (0.02);
\filldraw[color=black!60, fill=black!5, very thick](1.95,5.75) circle (0.02);
\filldraw[color=black!60, fill=black!5, very thick](1.95,4.5) circle (0.02);
\filldraw[color=black!60, fill=black!5, very thick](1.95,3.75) circle (0.02);

\filldraw[color=black!60, fill=black!5, very thick](12.05,2.75) circle (0.02);
\filldraw[color=black!60, fill=black!5, very thick](12.05,2) circle (0.02);
\filldraw[color=black!60, fill=black!5, very thick](12.05,0.5) circle (0.02);
\filldraw[color=black!60, fill=black!5, very thick](12.05,-0.25) circle (0.02);

\filldraw[color=black!60, fill=black!5, very thick](9,13.25) circle (0.02);
\filldraw[color=black!60, fill=black!5, very thick](9,12.5) circle (0.02);
\filldraw[color=black!60, fill=black!5, very thick](9,11) circle (0.02);
\filldraw[color=black!60, fill=black!5, very thick](9,10.25) circle (0.02);

\draw[<-, ultra thick, lightgray] (w3) -- (v3);
\draw[<-, ultra thick, lightgray] (v3) -- (w2);
\draw[<-, ultra thick, lightgray] (w2) -- (v2);
\draw[<-, ultra thick, lightgray] (v2) -- (w1);
\draw[<-, ultra thick, lightgray] (w1) -- (v1);
\draw[<-, ultra thick, lightgray] (v1) -- (w3);

\draw[->, ultra thick, blue] (v1)--(l1);
\draw[->, ultra thick, blue] (l1)--(v2);
\draw[->, ultra thick, blue] (v2)--(l2);
\draw[->, ultra thick, blue] (l2)--(v3);
\draw[->, ultra thick, blue] (v3)--(l3);
\draw[->, ultra thick, blue] (l3)--(v1);

\end{tikzpicture}
	\caption{Illustration of the existence of a cycle that visits every part at most once. We take the instance in Figure~\ref{complexcycle}, where there exists a cycle $C$ that visits every part other than $V_{\tilde{\ell}}$ at most once. The edges of the cycle $C$ are light gray color in color. The figure shows how to obtain a cycle $C'$ that visits every part at most once from $C$. The edges of $C'$ are blue in color. For all $i \in [3]$, we replace the subpath in $C$ of the form $(i,v_i) \rightarrow (\tilde{\ell},w_i) \rightarrow (i+1,v_{i+1})$ ($3+1$ is to be interpreted as $1$) by $(i,v_i) \rightarrow (\ell_i,y_i) \rightarrow (i+1, v_{i+1})$ to get $C'$.} 
	\label{simplecycle}
\end{figure}

\paragraph{Making the Cycle Simple.} For all $q \in [i+1,j]$ consider the subpath
\[  (q,v_q) \rightarrow (\tilde{\ell},w_{q}) \rightarrow (q + 1, v_{q + 1}) \]
of $C$ (index $j+1$ is to be interpreted as $i+1$). The existence of such a subpath in $G$ implies that $u_{q,q+1,\tilde{\ell}}[\sigma_d(v_q,v_{q+1})] = 1$. By Observation~\ref{basis}, we know that there is a vector  $u_{q, q+1,\ell_q} \in B^{q,q+1}$ such that $u_{q, q+1, \ell_q}[\sigma_d(v_q,v_{q+1})] = 1$. This implies that there exists a part $V_{\ell_q}$, and a vertex $(\ell_q, y_{q})$ in part  $V_{\ell_q}$, such that there is a subpath 
\[  (q,v_q) \rightarrow ({\ell_q},y_{q}) \rightarrow (q + 1, v_{q + 1})\enspace . \]

By Observation~\ref{disjointbypasses}, we have that $\ell_q \neq \ell_{q'}$ for all $q \neq q'$. Therefore we have a simple cycle $C'$ in $G$ that visits each part in $G$ at most once, namely,
\[C' = (i+1,v_{i+1}) \rightarrow (\ell_{i+1},y_{i+1}) \rightarrow \dots \rightarrow (\ell_{j-1},y_{j-1}) \rightarrow (j,v_j) \rightarrow (\ell_j,y_j) \rightarrow (i+1,v_{i+1}). \]
See Figure~\ref{simplecycle} for an illustration of this entire procedure.

Therefore if $k > d^4 + d$, then there exists a cycle in $G$ that visits each part at most once. Moreover, this cycle can be found in time polynomial in $k$. With this we arrive at the main result of this section. 

\begin{theorem}
	\label{mainthm3body}
	For all $d \ge 1$, we have $\ram(d) \leq d^4 + d$. Furthermore, Let $G$ be a $k$-partite digraph with $k > d^4 + d$ parts of cardinality at most $d$ each, such that for every vertex $v$ and any part $W$ not containing $v$, there is an edge from $W$ to $v$. Then, there exists a cycle in $G$ visiting each part at most once, and it can be found in time polynomial in $k$.
\end{theorem}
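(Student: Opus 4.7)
My plan is to prove the theorem by a budget argument: reserve $d$ parts as a ``backbone,'' encode the possible two-step traversals through each remaining part as a Boolean vector of dimension $d^2$, greedily extract small representative sets, and show that at least one part $V_{\tilde\ell}$ must be ``redundant,'' meaning every two-step path it enables can be realized by a distinct reserved bypass part. This will let me first build a closed walk that repeats $V_{\tilde\ell}$ and then rewrite it into a simple rainbow cycle.

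The first step is the setup. Designate $V_1,\ldots,V_d$ as backbone parts and let $\mathcal{L} = [k]\setminus[d]$. For each ordered pair $(i,j)\in[d]\times[d]$ and each $\ell\in\mathcal{L}$, define $u_{i,j,\ell}\in\{0,1\}^{d^2}$ by $u_{i,j,\ell}[\sigma_d(x,y)] = 1$ iff there is a $z$ with $(i,x)\to(\ell,z)\to(j,y)$ in $G$. Processing the pairs in increasing order of $\sigma_d(i,j)$, extract a representative set $B^{i,j}$ of $\{u_{i,j,\ell} : \ell\in\mathcal{L}\}$ of size at most $d^2$ (one witness per coordinate that is set by some vector), then delete those $\ell$'s from $\mathcal{L}$. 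Since there are $d^2$ pairs and each consumes at most $d^2$ indices, at most $d^4$ parts are removed from $\mathcal{L}$, which started with $k-d>d^4$ elements. Therefore at least one index $\tilde\ell$ survives, and for this $\tilde\ell$, every coordinate where $u_{i,j,\tilde\ell}$ equals $1$ is also set by some vector in $B^{i,j}$, while the bypass parts $\{\ell : u_{i,j,\ell}\in B^{i,j}\}$ across the different $(i,j)$ are pairwise disjoint by construction.

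The second step is to build a closed walk through $V_{\tilde\ell}$ and the backbone. Starting from an arbitrary vertex $(\tilde\ell, w_d)$ and tracing edges backwards, alternately pick incoming edges from $V_d, V_{\tilde\ell}, V_{d-1}, V_{\tilde\ell},\ldots,V_1, V_{\tilde\ell}$, producing vertices $w_0,w_1,\ldots,w_d\in V_{\tilde\ell}$ together with backbone vertices $v_1,\ldots,v_d$. This is possible because every vertex has an incoming edge from every other part. Since $|V_{\tilde\ell}|\le d$ and we have $d+1$ indices, the pigeonhole principle forces $w_i=w_j$ for some $0\le i<j\le d$, yielding a closed walk
\[
C = (\tilde\ell,w_i)\to(i+1,v_{i+1})\to(\tilde\ell,w_{i+1})\to\cdots\to(j,v_j)\to(\tilde\ell,w_j)
\]
whose backbone vertices are distinct but which revisits $V_{\tilde\ell}$.

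The final step is to replace each occurrence of $V_{\tilde\ell}$ on $C$ by a distinct bypass part. For each $q\in[i+1,j]$, the subpath $(q,v_q)\to(\tilde\ell,w_q)\to(q+1,v_{q+1})$ (indices cyclic) gives $u_{q,q+1,\tilde\ell}[\sigma_d(v_q,v_{q+1})]=1$, so by the representative property some $u_{q,q+1,\ell_q}\in B^{q,q+1}$ has the same coordinate set, yielding a vertex $(\ell_q,y_q)$ with $(q,v_q)\to(\ell_q,y_q)\to(q+1,v_{q+1})$ in $G$. Disjointness of the extraction order ensures that the $\ell_q$ are all distinct from each other and from $[d]$, so swapping in each $(\ell_q,y_q)$ produces a simple rainbow cycle, giving $\ram(d)\le d^4+d$. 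Every step — extracting representative sets, tracing the backward walk, applying pigeonhole, and looking up a bypass vertex — is linear or at most polynomial in $k$, so the algorithmic claim follows. The main subtlety I expect is the bookkeeping in the representative-set construction: it is essential that the $B^{i,j}$ consume disjoint slices of $\mathcal{L}$, since this disjointness (and not merely the existence of witnesses) is what guarantees the bypass substitutions at different positions use different parts.
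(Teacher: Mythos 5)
Your proposal is correct and follows essentially the same route as the paper's proof: the same backbone parts $V_1,\dots,V_d$, the same $d^2$-dimensional incidence vectors $u_{i,j,\ell}$ with greedily extracted disjoint representative sets $B^{i,j}$, the same surviving index $\tilde\ell$, the same backward-traced walk with a pigeonhole repetition in $V_{\tilde\ell}$, and the same bypass substitution to make the cycle rainbow. You also correctly identify the key subtlety, namely that the disjointness of the index sets consumed by the $B^{i,j}$ (not just the existence of witnesses) is what makes the bypass parts distinct.
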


An improved upper bound on $\ram(d)$ would imply a better bound on the number of unallocated goods. However, we show that an exponential improvement (e.g. $\ram(d) \in \textup{poly}(\log(d))$) is not possible by showing a \emph{linear} lower bound, i.e., $\ram(d) \geq d$. However, this still leaves room for polynomial improvement and we suspect that $\ram(d) \in \mathcal{O}(d)$. This would imply the existence of a $(1-\varepsilon)$-EFX allocation with $\mathcal{O}(\sqrt{n/\varepsilon})$ many goods unallocated.  For a polynomial time algorithm, the construction of a cycle as in Theorem~\ref{mainthm3body} would have to be polynomial time. However, we remark that this is an initiation study for determining $(1-\varepsilon)$-EFX allocations with sublinear number of unallocated goods and we use concepts like the \emph{group champion graph} that are natural extensions of the champion graph. We believe that this still leaves room for developing more sophisticated concepts and techniques that may reduce the number of unallocated goods to $o(\sqrt{n/\varepsilon})$.

\paragraph{Lower bound on $\ram(d)$.} We show that $\ram(d) \geq d$. We construct a $d$-partite graph $G =(\cup_{i \in [d]} V_i, E)$ such that each part $V_i$ has $d$ vertices, for all pairs of parts $V_i$ and $V_j$, every vertex in $V_j$ has an incoming edge from a vertex in $V_i$ and vice-versa, and there exists no cycle that visits each part at most once.

We now define the edges in $G$. Let $V_i = \{(i,0), (i,1), \dots, (i,d-1) \}$. Consider any $i$ and $j$ such that $i<j$ . For each $0 \leq \ell \leq d-1$, we have an edge from $(i,\ell)$ in $V_i$ to $(j,\ell)$ in $V_j$ and there is an edge from $(j,\ell)$ in $V_{j}$ to $(i,(\ell + 1) \bmod d)$ in $V_i$ (see Figure~\ref{lower-bound} for an illustration). One can easily verify that for all parts $V_i$ and $V_j$, every vertex in part $V_j$ has an incoming edge from part $V_i$ and vice-versa. It suffices to show that $G$ admits no cycle that visits each part at most once.

\begin{figure}[H]
	\centering
	\begin{tikzpicture}
	[roundnode/.style={circle, draw=green!60, fill=green!5, very thick, minimum size=7mm},
	agent/.style={rectangle, draw=red!60, fill=red!5, very thick, minimum size=5mm},
	]
	\draw[black, very thick] (-0.5,0.5) rectangle (0.5+0.05,-2.5);
	\draw[black, very thick] (-0.5+2,0.5) rectangle (0.5+2+0.05,-2.5);
	\node at (0,-2.75) {$V_1$};
	\node at (2,-2.75) {$V_2$};
	
	\node[agent]      (a1) at (0,0)      {$\scriptstyle{(1,0)}$};
	\node[agent]      (a2) at (0,-2)      {$\scriptstyle{(1,1)}$};
	\node[agent]      (b1) at (2,0)     {$\scriptstyle{(2,0)}$};
	\node[agent]      (b2) at (2,-2)     {$\scriptstyle{(2,1)}$};

	\draw[->,blue,thick] (a1) -- (b1);
	\draw[->,blue,thick] (b2) -- (a1);
	\draw[->,blue,thick] (a2) -- (b2);
	\draw[->,blue,thick] (b1) -- (a2);

	\draw[black, very thick] (6-0.5,0.5) rectangle (6+0.5+0.05,-4.5);
	\draw[black, very thick] (-0.5+8,0.5) rectangle (0.5+8+0.05,-4.5);
	\draw[black, very thick] (-0.5+10,0.5) rectangle (0.5+10+0.05,-4.5);
	\node at (6,-4.75) {$V_1$};
	\node at (8,-4.75) {$V_2$};
	\node at (10,-4.75) {$V_3$};
	
	\node[agent]      (a1) at (6,0)      {$\scriptstyle{(1,0)}$};
	\node[agent]      (a2) at (6,-2)      {$\scriptstyle{(1,1)}$};
	\node[agent]      (a3) at (6,-4)      {$\scriptstyle{(1,2)}$};

	\node[agent]      (b1) at (8,0)     {$\scriptstyle{(2,0)}$};
	\node[agent]      (b2) at (8,-2)     {$\scriptstyle{(2,1)}$};
	\node[agent]      (b3) at (8,-4)      {$\scriptstyle{(2,2)}$};

	\node[agent]      (c1) at (10,0)     {$\scriptstyle{(3,0)}$};
	\node[agent]      (c2) at (10,-2)     {$\scriptstyle{(3,1)}$};
	\node[agent]      (c3) at (10,-4)      {$\scriptstyle{(3,2)}$};

	\draw[->,blue,thick] (a1) -- (b1);
	\draw[->,blue,thick] (a2) -- (b2);
	\draw[->,blue,thick] (a3) -- (b3);
	
	\draw[->,blue,thick] (b1) -- (c1);
	\draw[->,blue,thick] (b2) -- (c2);
	\draw[->,blue,thick] (b3) -- (c3);
	
	\draw[->,blue,thick] (b1) -- (a2);
	\draw[->,blue,thick] (b2) -- (a3);
	\draw[->,blue,thick] (b3) -- (a1);
	
	\draw[->,blue,thick] (c1) -- (b2);
	\draw[->,blue,thick] (c2) -- (b3);
	\draw[->,blue,thick] (c3) -- (b1);
	
	\path[blue, thick, ->, out=20,in=160]    (a1) edge (c1);
	\path[blue, thick, ->, out=20,in=160]    (a2) edge (c2);
	\path[blue, thick, ->, out=20,in=160]    (a3) edge (c3);
	
	\draw[->,blue,thick] (c1) -- (a2);
	\draw[->,blue,thick] (c2) -- (a3);
	\path[blue, thick, ->, out=125,in=-25]    (c3) edge (a1);
	\end{tikzpicture}
	\caption{Illustration of the construction of $d$-partite graph $G$ that satisfies all the conditions in Definition~\ref{Rdef}, for $d=2$ (left) and $d=3$ (right). } 
	\label{lower-bound}	
\end{figure}
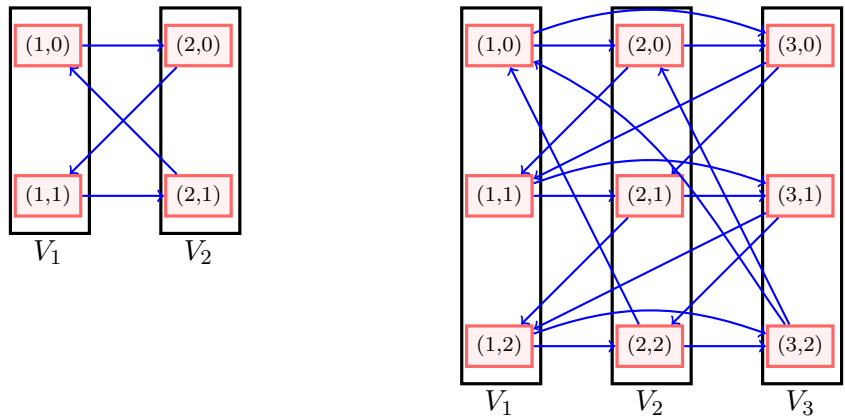

\begin{lemma}
	There exists no cycle in $G$ that visits each part at most once.
\end{lemma}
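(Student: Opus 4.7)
My plan is to assign to each vertex $(i,\ell) \in V(G)$ the ``label'' $\ell \in \{0,1,\dots,d-1\}$ and track how this label evolves along a hypothetical cycle visiting each part at most once. The defining property of the construction is that an edge behaves differently depending on whether it goes to a part of higher or lower index: inspection of the edge set shows that an edge from $(i,\ell)$ to $(j,\ell')$ with $i<j$ forces $\ell' = \ell$, whereas an edge with $i>j$ forces $\ell' = (\ell+1) \bmod d$. Thus, along any walk, the label is preserved on ``ascending'' steps (to a higher-indexed part) and incremented by $1 \bmod d$ on ``descending'' steps (to a lower-indexed part).

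Suppose for contradiction that there is a cycle $C = (i_1,\ell_1) \to (i_2,\ell_2) \to \dots \to (i_k,\ell_k) \to (i_1,\ell_1)$ that visits each part at most once, so that the indices $i_1,\dots,i_k$ are pairwise distinct and $k \le d$. Let $D$ be the number of descending edges of $C$ (indices taken cyclically). Summing the label changes around $C$ yields $0 \equiv \ell_1 - \ell_1 \equiv D \pmod d$, so $D$ is a multiple of $d$.

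The contradiction then comes from pinning $D$ into a range that excludes $0$ and $d$. First, $D \ge 1$: pick $t$ with $i_t = \max_s i_s$; since all $i_s$ are distinct, $i_{t+1} < i_t$, so the edge leaving $(i_t,\ell_t)$ is descending. Symmetrically, considering the minimum index shows that there is at least one ascending edge, so $D \le k-1 \le d-1$. Hence $1 \le D \le d-1$, which contradicts $d \mid D$. Therefore no such cycle exists, completing the proof.

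I do not expect any step here to be a real obstacle; the only mildly delicate point is making sure the argument that $D \ge 1$ and $D \le k-1$ is stated cleanly for cyclic sequences of distinct integers, which I would handle with the two one-line ``max'' and ``min'' observations above.
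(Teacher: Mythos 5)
Your proof is correct and follows essentially the same route as the paper's: both count the number of ``descending'' edges $D$ along the hypothetical cycle, observe that the labels force $D \equiv 0 \pmod d$, and derive a contradiction from $1 \le D \le r-1 \le d-1$. Your max/min observations just make explicit the paper's claim that the cycle must contain at least one ascending and one descending edge.
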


\begin{proof}
	We prove by contradiction. Assume that there is a cycle $C = (i_1, \ell_1) \rightarrow (i_2, \ell_2) \rightarrow \dots \rightarrow (i_r,\ell_r) \rightarrow (i_1,\ell_1)$ that visits each part at most once, i.e., $i_1\neq i_2 \neq \dots \neq i_r$. From here on, all the indices are modulo $r$. Note that by the construction of the edges of $G$, for all $q \in [r]$, we have $\ell_{q+1} = \ell_q$ if $i_q < i_{q+1}$ and  $\ell_{q+1}  = (\ell_{q} + 1)\bmod d$ if $i_q > i_{q+1}$. Let $\#_1 = \{q \in [r] \mid i_q > i_{q+1} \}$ (recall that $r+1$ is $1$). The existence of the cycle $C$ in $G$ implies that $\ell_1 = (\ell_1 + \#_1) \bmod d $.
	
	Since $i_1\neq i_2 \neq \dots \neq i_r$ and there exists the cycle $C$ in $G$, there are  indices $q'$ and $q''$ such that $i_{q'} > i_{q'+1}$ and $i_{q''} < i_{q''+1}$, further implying that $1 \leq \#_1 \leq r-1$.  Since $G$ has $d$ parts, we have $r \leq d$, implying that  $1 \leq \#_1 \leq d-1$. However this implies that $(\ell_1 + \#_1) \mod d  \neq \ell_1$, which is a contradiction.   
\end{proof}

\section{Finding \emph{Efficient} $(1-\varepsilon)$-EFX Allocations with Sublinear Number of Unallocated Goods} 
\label{efficiency}
We note that like the algorithms in ~\cite{CKMS20, TimPlaut18}, our algorithm is flexible with the initialization, i.e., starting with any initial $(1-\varepsilon)$-EFX allocation $X$, it can determine a final $(1-\varepsilon)$-EFX allocation $Y$ with at most $\mathcal{O}((n / \varepsilon)^{\tfrac{4}{5}})$ many goods unallocated and $v_i(Y_i) \geq v_i(X_i)$ for all $i \in [n]$. This is consequence of the fact that the valuation of an agent never decreases throughout our algorithm. Therefore, our algorithm maintains the welfare of the initial allocation. Thus, if we choose the initial $(1-\varepsilon)$-EFX allocation carefully, we can also guarantee high Nash welfare for our final $(1-\varepsilon)$-EFX allocation with sublinear many goods unallocated. To this end, we use an important result from Caragiannis et al.~\cite{CaragiannisGravin19} about determining partial EFX allocations with high Nash welfare in polynomial time.
\begin{theorem}[\cite{CaragiannisGravin19}]\label{EFXwithNWCaragiannis}
	In polynomial time, we can determine a partial EFX allocation $X$ such that $\mathit{NW}(X) \geq 1/(2.88) \cdot \mathit{NW}(X^*)$ where $X^*$ is the Nash welfare maximizing allocation.\footnote{In fact, the result in~\cite{CaragiannisGravin19} show the existence of partial EFX allocations that achieve a $1/2$ approximation of the Nash welfare. However, in polynomial time, one can only find a partial EFX allocation with a $1/2.88$ approximation of the Nash welfare.}
\end{theorem}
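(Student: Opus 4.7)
The plan is to combine a polynomial-time approximation of the Nash welfare maximum with a structural reduction that turns a nearly NW-optimal allocation into a partial EFX allocation while losing at most a factor $2$ per agent; the factor $2.88 \approx 2\,e^{1/e}$ in the statement is the product of these two losses. First, I would invoke the algorithm of Barman, Krishnamurthy and Vaish~\cite{BKV18} to compute in polynomial time an allocation $W$ with $\mathrm{NW}(W) \ge \mathrm{NW}(X^*)/e^{1/e}$. To make first-order exchange arguments available, I would then preprocess $W$ by iteratively performing any \emph{local} operation that strictly increases $\mathrm{NW}$ (single-good transfers between agents and pairwise bundle swaps); since $\mathrm{NW}$ is bounded, this halts in polynomial time and the resulting $W$ satisfies, for every pair $(i,j)$ and every $g \in W_j$, both the single-good exchange inequality
\[
\Bigl(1 + \tfrac{v_i(g)}{v_i(W_i)}\Bigr)\Bigl(1 - \tfrac{v_j(g)}{v_j(W_j)}\Bigr) \;\le\; 1
\]
and the bundle-swap inequality $v_i(W_i)\,v_j(W_j) \ge v_i(W_j \setminus g)\,v_j(W_i \cup g)$.

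Next, I would apply a per-agent reduction. For each $i$ let $g_i^\star \in \arg\max_{g \in W_i} v_i(g)$; if $v_i(g_i^\star) > v_i(W_i \setminus g_i^\star)$ set $X_i = \{g_i^\star\}$ and move $W_i \setminus \{g_i^\star\}$ to the unallocated pool, otherwise set $X_i = W_i$. In either case $v_i(X_i) \ge v_i(W_i)/2$, so
\[
\mathrm{NW}(X) \;\ge\; \tfrac{1}{2}\,\mathrm{NW}(W) \;\ge\; \tfrac{1}{2\,e^{1/e}}\,\mathrm{NW}(X^*) \;\ge\; \tfrac{1}{2.88}\,\mathrm{NW}(X^*),
\]
which is the welfare bound claimed in the theorem.

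The hard part will be verifying that $X$ is partial EFX, i.e., $v_i(X_i) \ge v_i(X_j \setminus g)$ for every $i,j$ and every $g \in X_j$. The singleton case $X_j = \{g_j^\star\}$ is vacuous, so assume $X_j = W_j$; by construction every $g' \in W_j$ then satisfies $v_j(g') \le v_j(W_j)/2$. Plugging this into the single-good exchange inequality and summing over $g' \in W_j \setminus \{g\}$ yields only the weaker bound $v_i(W_j \setminus g) \le 2\,v_i(W_i)$, which is merely $(1/2)$-EFX. Closing this factor-$2$ gap is the main obstacle: the plan is to instead apply the bundle-swap inequality with the hypothetical allocation that assigns $W_j \setminus g$ to $i$ and $W_i \cup g$ to $j$, giving $v_i(W_j \setminus g) \le v_i(W_i)\,v_j(W_j) / (v_j(W_i) + v_j(g))$, and then to lower-bound $v_j(W_i) + v_j(g)$ by $v_j(W_j)$ using a further structural consequence of local NW-maximality (essentially, envy-cycle-freeness after the preprocessing). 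The sub-case $X_i = \{g_i^\star\}$ is treated separately: the heaviness $v_i(g_i^\star) > v_i(W_i)/2$ combined with the same bundle-swap bound forces $v_i(g_i^\star) \ge v_i(W_j \setminus g)$ directly. Assembling the two sub-cases yields EFX, and since every step of the construction runs in polynomial time, the theorem follows.
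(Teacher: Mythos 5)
This statement is imported from~\cite{CaragiannisGravin19}; the paper gives no proof of its own, so your proposal has to stand on its own merits, and it does not. The central claim --- that a locally NW-optimal allocation, trimmed once per agent to its heaviest good when that good exceeds half the bundle's value, is EFX --- is not established, and your own writeup concedes this: the single-good exchange inequality only yields $v_i(W_j \setminus g) \le 2\, v_i(W_i)$, i.e.\ $\tfrac{1}{2}$-EFX, which is exactly the guarantee one already knows for (approximately) NW-maximizing allocations. The proposed repair does not close the gap. The bundle-swap inequality gives $v_i(W_j \setminus g)\, v_j(W_i \cup g) \le v_i(W_i)\, v_j(W_j)$, so to conclude $v_i(W_j\setminus g)\le v_i(W_i)$ you would need $v_j(W_i) + v_j(g) \ge v_j(W_j)$; nothing in local NW-optimality or envy-cycle-freeness forces this --- if $j$ values $W_i \cup \{g\}$ at a tiny fraction of $W_j$, the swap inequality is vacuous and $v_i(W_j\setminus g)$ can far exceed $v_i(W_i)$. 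Your per-agent trimming rule is also aimed at the wrong agent's perspective: an EFX violation of $i$ against $j$ can occur when $W_j$ consists of many goods each worth little to $j$ (so no trimming is triggered) but collectively worth much to $i$. Indeed, MNW allocations are EF1 but provably not EFX in general, which is precisely why the construction cannot be a one-shot trim. Separately, your preprocessing step (``iterate local NW-improving moves until none applies; since NW is bounded this halts in polynomial time'') is not a valid polynomial-time argument --- a bounded strictly increasing sequence can take exponentially many steps, and polynomial convergence of local search for Nash welfare is not known.

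The actual argument in~\cite{CaragiannisGravin19} is iterative rather than one-shot: starting from the $e^{1/e}$-approximate MNW allocation of~\cite{BKV18}, while the partial allocation is not EFX one finds an inclusion-minimal envied subset $Z$ of some bundle, reassigns $Z$ to a most envious agent (in the sense of Observation~\ref{mostenviousagent}), and returns the displaced goods to the pool; a potential argument shows the process terminates with every agent retaining at least half of her value in the starting allocation, which is where the factor $2$ (and hence $2e^{1/e} \approx 2.88$) comes from. Your welfare accounting ($2 \cdot e^{1/e}$) matches the intended decomposition, but the mechanism that produces the EFX property --- adaptive, cross-agent discarding driven by the most-envious-agent subroutine --- is missing from your construction and cannot be replaced by the local-optimality inequalities you list.
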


Let $X$ be the partial EFX allocation that achieves a $2.88$ approximation of the Nash welfare. We run our algorithm starting with $X$ as the initial allocation. The final $(1-\varepsilon)$-EFX allocation with sublinear many unallocated goods is also a $2.88$ approximation of the Nash welfare as the valuations of the agents in the final allocation is at least their valuations in $X$. Therefore, we have the following theorem,

\begin{theorem}
	\label{EFXwithNashwelfare}
	In polynomial time, we can determine a $(1-\varepsilon)$-EFX allocation with $\mathcal{O}((n / \varepsilon)^{\tfrac{4}{5}})$ goods unallocated. Furthermore, $\mathit{NW}(X) \geq 1/(2.88) \cdot \mathit{NW}(X^*)$.\footnote{Note that using the existence of partial EFX allocations with $1/2$ approximation to Nash welfare, one can also claim the existence of a $(1-\varepsilon)$-EFX allocation $X$ with $\mathcal{O}((n / \varepsilon)^{\tfrac{4}{5}})$ goods unallocated such that $\mathit{NW}(X) \geq 1/2 \cdot \mathit{NW}(X^*)$.}  
\end{theorem}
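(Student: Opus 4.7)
The plan is to use the main algorithm from Section~\ref{transformation} as a \emph{postprocessing} step applied to a carefully chosen initial allocation, rather than starting from the empty allocation. The key observation is that our main algorithm is monotone in each agent's valuation, so whatever welfare guarantee the initial allocation enjoys is inherited by the final allocation.

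First, I would invoke Theorem~\ref{EFXwithNWCaragiannis} of Caragiannis et al.\ to compute in polynomial time a partial EFX allocation $X^{(0)}$ with $\mathit{NW}(X^{(0)}) \geq (1/2.88)\cdot \mathit{NW}(X^*)$. Since every EFX allocation is trivially $(1-\varepsilon)$-EFX, $X^{(0)}$ is a valid starting point for our main algorithm. I would then run the algorithm underlying Theorem~\ref{mainthm2version2} (equivalently Theorem~\ref{mainthm1}) initialized at $X^{(0)}$ rather than at the empty allocation. Nothing in the analysis of that algorithm uses the emptiness of the initial allocation: the termination argument only needs a valid $(1-\varepsilon)$-EFX starting configuration together with the potential-function bound on the number of iterations, and the $U_1$, $U_2$, $U_3$, and envy-cycle-elimination update rules apply verbatim.

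The heart of the argument is the monotonicity claim: if $Y$ is produced from $X$ by any single step of the main algorithm, then $v_i(Y_i) \geq v_i(X_i)$ for all $i \in [n]$. This follows directly from the existing lemmas: envy-cycle elimination gives it by Lemma~\ref{envycycleelimination}; $U_1$ gives it by Lemma~\ref{U1}; and both $U_2$ (Lemma~\ref{U2}) and $U_3$ (Lemma~\ref{U3}) in fact give the stronger $Y \pd X$, which implies pointwise non-decrease. Chaining over all iterations, the final allocation $Y$ satisfies $v_i(Y_i) \geq v_i(X^{(0)}_i)$ for every $i$. Consequently,
\[
\mathit{NW}(Y) \;=\; \Bigl(\prod_{i\in[n]} v_i(Y_i)\Bigr)^{1/n} \;\geq\; \Bigl(\prod_{i\in[n]} v_i(X^{(0)}_i)\Bigr)^{1/n} \;=\; \mathit{NW}(X^{(0)}) \;\geq\; \frac{1}{2.88}\cdot \mathit{NW}(X^*).
\]

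Finally, Theorem~\ref{mainthm1} (proved via Theorem~\ref{mainthm2version2}) guarantees that the output $Y$ is $(1-\varepsilon)$-EFX with at most $64(n/\varepsilon)^{4/5}$ unallocated goods, completing the proof. I do not expect any genuinely hard step here: the only point that deserves explicit verification is the monotonicity of each update rule, and this is already embedded in the statements of Lemmas~\ref{envycycleelimination}, \ref{U1}, \ref{U2}, and \ref{U3}. The mild subtlety is noting that the polynomial bound on the number of iterations in Section~\ref{transformation} does not rely on starting from scratch, since valuations are bounded above by $W = \max_i v_i(M)$ regardless of the initial configuration.
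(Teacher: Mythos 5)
Your proposal matches the paper's own proof in Section~\ref{efficiency}: initialize the main algorithm with the partial EFX allocation of Theorem~\ref{EFXwithNWCaragiannis}, and use the fact that no agent's valuation ever decreases under envy-cycle elimination or the rules $U_1$, $U_2$, $U_3$ to conclude that the Nash welfare guarantee of the initial allocation is preserved. The argument is correct and takes essentially the same route.
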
 


\section{Limitations of the Approach in~\cite{CGM20}}
\label{limitationoftechnique}
In~\cite{CGM20}, an algorithmic proof to the existence of EFX allocations is shown for three agents with additive valuations. We briefly sketch the proof technique in~\cite{CGM20} and then highlight why it does not work for determining a $(1-\varepsilon)$-EFX allocations with just four agents. Let the three agents be $a$, $b$ and $c$ and for any allocation $X$, let $\phi(X)$ be the vector $\langle v_a(X_a), v_b(X_b), v_c(X_c) \rangle$. The algorithm starts with an empty allocation which is trivially  EFX and as long as there is an unallocated good, the algorithm determines another EFX allocation $X'$ such that $\phi(X')$ is lexicographically larger than $\phi(X)$, i.e., either $v_a(X'_a) > v_a(X_a)$ or $v_a(X'_a) = v_a(X_a)$ and $v_b(X'_b) > v_b(X_b)$ or $v_a(X'_a) = v_a(X_a)$, $v_b(X'_b) = v_b(X_b)$ and $v_c(X'_c) > v_c(X_c)$. Berger et al.~\cite{BCFF'21} show that the same potential (namely $\phi( \cdot )$) can be used to show the existence of EFX allocations for four agents with at most one unallocated good. In this paper, we show that such a technique cannot be used to show the existence of $(1-\varepsilon)$-EFX allocations for four agents.

\begin{theorem}
	\label{counterexampleEFX3}
     There exists an instance $I$ with four agents, $\{a,b,c,d\}$ with additive valuations,  nine goods $\{ g_i \mid i \in [9] \}$ and a partial $(1-\varepsilon)$-EFX allocation $X$ on the goods $\cup_{i \in [8]} g_i$, such that in all complete $(1-\varepsilon)$-EFX allocation, the valuation of agent $a$ will be strictly less than her valuation in $X$, i.e., for all complete $(1-\varepsilon)$-EFX allocation $Y$, $\phi(Y)$ is lexicographically smaller than $\phi(X)$.  	
\end{theorem}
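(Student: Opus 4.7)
The plan is to construct an explicit counterexample: a valuation profile on four agents and nine goods, together with a partial $(1-\varepsilon)$-EFX allocation $X$ of the first eight goods, whose agent-$a$ value is provably unattainable by any complete $(1-\varepsilon)$-EFX allocation. The proof will then have three parts: exhibit $X$, verify it is $(1-\varepsilon)$-EFX, and perform an exhaustive case analysis over all complete $(1-\varepsilon)$-EFX allocations $Y$ on the nine goods to show $v_a(Y_a) < v_a(X_a)$ in every case.

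For the construction, I would choose valuations so that in $X$, agent $a$ receives a bundle $X_a$ of high value (say, her two or three most preferred goods among $\{g_1,\dots,g_8\}$), while $b,c,d$ receive bundles that fully consume their own most-preferred goods. The ninth good $g_9$ is designed to be extremely valuable to $a$ --- large enough that whenever another agent holds it, $a$ heavily envies that agent unless $a$'s own bundle has correspondingly high value --- and moderately valuable to $b,c,d$, so that inserting $g_9$ directly into any of $X_b, X_c, X_d$ creates strong envy from $a$. The tightness of $X$ should be engineered so that (i) removing a single good from $X_a$ to make room for $g_9$ in $a$'s bundle drops $v_a$ strictly below $v_a(X_a)$, and (ii) any alternative bundle $Y_a$ with $v_a(Y_a) \ge v_a(X_a)$ leaves insufficient valuable goods among $\{g_1,\dots,g_8\}$ for $b,c,d$ to form $(1-\varepsilon)$-EFX-compatible bundles containing $g_9$.

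The case analysis then splits on the recipient of $g_9$ in $Y$. If $g_9 \in Y_a$, then to achieve $v_a(Y_a) \ge v_a(X_a)$, agent $a$ must retain a subset of her favorites worth at least $v_a(X_a) - v_a(g_9)$; the valuations are tuned so that the residual goods are then insufficient to avoid $(1-\varepsilon)$-heavy envy among $\{b,c,d\}$. If $g_9 \in Y_i$ for some $i \in \{b,c,d\}$, then the $(1-\varepsilon)$-EFX condition forces $v_a(Y_a) \ge (1-\varepsilon)\, v_a(Y_i \setminus \{g\})$ for every $g \in Y_i$; the numerical design should ensure that for every candidate redistribution of $\{g_1,\dots,g_8\}$, either this inequality is violated, some other $(1-\varepsilon)$-EFX constraint among $\{b,c,d\}$ fails, or the inequality is satisfied only at the cost of $v_a(Y_a) < v_a(X_a)$.

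The main obstacle is the case analysis itself: the space of assignments of eight goods to four agents, combined with the placement of $g_9$, branches into many sub-cases, and the $(1-\varepsilon)$ slack widens the set of admissible allocations that must be ruled out (as opposed to the exact-EFX setting of~\cite{CGM20,BCFF'21}). To keep the verification tractable, I would design $X$ so that several of its $(1-\varepsilon)$-EFX inequalities are nearly tight and the values of $g_9$ to the four agents sit in narrow ranges, forcing most perturbations to immediately violate a constraint and leaving a short, explicit list of candidate complete allocations that can be dismissed by direct numerical check. The construction ultimately demonstrates that the simple lexicographic potential $\phi$ used for three agents (and adapted to four agents with at most one unallocated good in~\cite{BCFF'21}) cannot be monotonically improved in the $(1-\varepsilon)$ setting with four agents, justifying the need for the new framework developed in this paper.
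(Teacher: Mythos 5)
Your proposal is a design brief, not a proof. The theorem is an existence statement whose proof \emph{is} the explicit instance together with the verification; you never write down a single valuation, so nothing is actually established. Every load-bearing step is deferred with phrases like ``the valuations are tuned so that'' and ``the numerical design should ensure,'' and you yourself flag the exhaustive case analysis as the main obstacle without carrying it out. Until concrete numbers are on the table and the (admittedly tedious) elimination of all complete allocations is done, there is no argument to check.

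Beyond the missing construction, the design principle you propose points in a different (and riskier) direction than the one the paper uses. You want $g_9$ to be ``extremely valuable to $a$'' so that $a$ heavily envies whoever else holds it; this creates a tension you do not resolve, since a highly $a$-valuable $g_9$ makes it easier, not harder, for $a$ to recover her value $v_a(X_a)$ in a complete allocation (you are then forced to also argue that $a$ holding $g_9$ wrecks the allocation among $b,c,d$, which is an extra burden). The paper's instance does the opposite: $a$ values only $g_7$ and $g_8$ (at $6$ and $4$) and values $g_9$ at zero, with $X_a=\{g_7,g_8\}$. The entire argument is then that $a$ cannot retain both $g_7$ and $g_8$ in any complete $(1-\varepsilon)$-EFX allocation $Y$: if she did, then since $Y_a\setminus\{g_8\}\supseteq\{g_7\}$, agents $b,c,d$ would need bundles worth at least $31$, $29$, $19$ respectively, and a short case split (on whether $g_6\in Y_b$, then whether $g_1\in Y_b$) pins down the distribution of $g_1,\dots,g_8$ essentially uniquely, after which placing the leftover $g_9$ into any of the four bundles creates strong envy. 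The good $g_9$ functions as a low-value ``poison pill'' for $b,c,d$, not as a prize for $a$. If you pursue your own design, you must both exhibit the numbers and show the case analysis closes; as written, the proposal has a genuine gap on both counts.
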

   
We remark that our instance builds on the instance in~\cite{CGM20}, that is used to show the existence of a partial EFX allocation which is not Pareto-dominated by any complete EFX allocation.  We now construct an instance $I$ with four agents, say $\{a,b,c,d\}$ with additive valuations and nine goods $\{ g_i \mid i \in [9] \}$. Let $\phi(X) = \langle v_a(X_a), v_b(X_b), v_c(X_c), v_d(X_d) \rangle$. We show a $(1-\varepsilon)$-EFX allocation $X$ of eight goods among four agents. Then we show in any complete $(1-\varepsilon)$-EFX allocation, the valuation of agent $a$ will be strictly less than (almost half of) her valuation in $X$. This shows that for any complete $(1-\varepsilon)$-EFX allocation $Y$, we have $\phi(X)$ is lexicographically larger than $\phi(Y)$. 

\begin{table}[t]
	\begin{center}
		\begin{tabular}{||c c c c c c c c c c||} 
			\hline
			& $g_1$ & $g_2$ & $g_3$ & $g_4$ & $g_5$ & $g_6$ & $g_7$ & $g_8$ & $g_9$ \\ [0.5ex] 
			\hline\hline
			$\mathbf{a}$ & $0$ & $0$ & $0$ & $0$ & $0$ & $0$ & $6$ & $4$ & $0$ \\  
			\hline 
			$\mathbf{b}$ & $16$ & $4$ & $24$ & $4$ & $0$ & $34$ & $31$ & $0$  & $2$ \\ 
			\hline
			$\mathbf{c}$ & $10$ & $0$ & $18$ & $8$ & $20$ & $0$  & $29$ & $0$ &$6$ \\
			\hline
			$\mathbf{d}$ & $0$ & $0$ & $0$ & $0$ & $18$ & $20$ & $19$ & $0$ &$4$ \\
			\hline
		\end{tabular}
	\end{center}
	\caption{An instance where showing that the technique in~\cite{CGM20} cannot be used to determine $(1-\varepsilon)$-EFX allocations with four agents. In particular, given a $(1-\varepsilon)$-EFX allocation $X$ and the unallocated good $g_9$, there is no complete $(1-\varepsilon)$-EFX allocation where the valuation of agent $a$ does not strictly decrease, i.e., in any complete $(1-\varepsilon)$-EFX allocations $Y$, we have $v_a(Y_a) < v_a(X_a)$.}
	\label{Example One}
\end{table}

The full description of our instance is captured by Table~\ref{Example One}. We choose our $\varepsilon \ll 1$. The sub-instance defined by the agents $b$, $c$ and $d$, and the goods $\cup_{i \in [6]} g_i \cup g_9$ is the instance in~\cite{CGM20} used to show the existence of a partial EFX allocation which is not Pareto-dominated by any complete EFX allocation. We now specify the allocation $X$.
\begin{align*}
&X_a = \{g_7, g_8 \} &X_b = \{g_2, g_3, g_4 \}\\
&X_c = \{g_1, g_5 \} &X_d = \{g_6\}
\end{align*}
The good $g_9$ is unallocated. We will show that in any complete $(1-\varepsilon)$-EFX allocation, agent $a$ cannot have both $g_7$ and $g_8$.  This would imply that agent $a$'s valuation in any final $(1-\varepsilon)$-EFX allocation is strictly less than her valuation in $X$ (as agent $a$'s valuation for all goods other than $g_7$ and $g_8$ is zero). We prove this claim by contradiction. So assume that $Y$ is a complete $(1-\varepsilon)$-EFX allocation and $\{g_7, g_8\} \subseteq Y_a$. Note that $v_b(g_7) = 31$, $v_c(g_7) = 29$, and $v_d(g_7) = 19$. Since $Y_a$ contains at least one other good namely $g_8$, each of the agents $b$, $c$ and $d$ need to be allocated bundles that they value at least $31$, $29$ and $19$ respectively. 

First, consider the case that ${g_6 \in Y_b }$. Then we have $v_b(Y_b) \geq 34$. Now, to ensure $v_d(Y_d) \geq 19$, we need to allocate $g_5$ and $g_9$ to $d$, as $d$ values all the other goods zero. We are left with goods $g_1$, $g_2$, $g_3$ and $g_4$. In order to ensure $v_c(Y_c) \geq 29$, we definitely need to allocate $g_1$, $g_3$ and $g_4$ to $c$. Now, even if we allocate the remaining good $g_2$ to $b$, we have $v_b(Y_b) = v_b(\sset{g_2,g_6}) = 38 < (1-\varepsilon) \cdot 40 = (1-\varepsilon) \cdot v_b(\sset{g_1,g_3}) \leq (1-\varepsilon) \cdot {v_b}(Y_c \setminus g_4)$. Therefore, $b$ will strongly envy $c$. Thus $g_6 \notin Y_b$.

If $g_6 \notin Y_b$ and $v_b(Y_b) \geq 31$, {$Y_b$ must contain $g_3$} (the total valuation for $b$ of all the goods other than $g_3$, $g_6$, $g_7$ and $g_8$ is less than $31$). Now we consider some more subcases.

Let us first assume that ${g_1 \in Y_b}$. Since $Y_b$ already contains $g_1$ and $g_3$, the goods that can be allocated to $c$ and $d$ are $g_2$, $g_4$, $g_5$, $g_6$, and $g_9$. In order to ensure $v_c(Y_c) \geq 29$ we need to allocate $g_4$, $g_5$, and $g_9$ to $c$. Now, even if we allocate all the remaining goods ($g_2$ and $g_6$) to $d$, we have $v_d(Y_d) = v_d(\sset{g_3,g_6}) = 20 < (1-\varepsilon) \cdot 22 = (1-\varepsilon) \cdot v_d(\sset{g_5,g_7}) \leq (1-\varepsilon) \cdot v_d(Y_c \setminus g_4)$. Therefore, $d$ will strongly envy $c$. 

Thus $g_1 \notin Y_b$. Since neither $g_1$ nor $g_6$ belongs to $Y_b$, the only way to ensure $v_b(Y_b) \geq 31$ is to at least allocate $g_2$, $g_3$, and $g_4$ to $b$ (we can allocate more). Similarly, given that the goods not allocated yet are $g_1$, $g_5$, $g_6$, and $g_9$, the only way to ensure $v_c(Y_c) \geq 29$ is to allocate at least $g_1$ and $g_5$ to  $c$. Similarly, the only way to ensure $v_d(Y_d) \geq 19$ now is to allocate at least $g_6$ to $d$. Now we only have to allocate $g_9$. We  show that adding $g_9$ to any one of the existing bundles will cause a violation of the $(1-\varepsilon)$-EFX property.

\begin{itemize}
	\item Adding $g_9$ to $Y_a$: $b$, $c$ and $d$ strongly envies $a$ as $v_b(Y_b) = 32 < (1-\varepsilon) \cdot 33 = (1-\varepsilon) \cdot v_b(\{g_7,g_9\}) \leq (1-\varepsilon) \cdot v_b(Y_a \setminus g_8)$. Similarly we have $v_c(Y_c) = 30 < (1-\varepsilon) \cdot 35 = (1-\varepsilon) \cdot v_c(\{g_7,g_9\}) \leq (1-\varepsilon) \cdot v_c(Y_a \setminus g_8)$ and $v_d(Y_d) = 20 < (1-\varepsilon) \cdot 23 = (1-\varepsilon) \cdot v_d(\{g_7,g_9\}) \leq (1-\varepsilon) \cdot v_d(Y_a \setminus g_8)$. 
	\item Adding $g_9$ to $Y_b$: $c$ strongly envies $b$ as $v_c(Y_c)=30 < (1-\varepsilon) \cdot 32 = (1-\varepsilon) \cdot v_c(\left\{g_3,g_4,g_7\right\}) = (1-\varepsilon) \cdot v_c(Y_b \setminus g_2)$.
	\item Adding $g_9$ to $Y_c$: $d$ strongly envies $c$ as $v_d(Y_d)=20 < (1-\varepsilon) \cdot 22 = (1-\varepsilon) \cdot v_d(\left\{g_5,g_9\right\}) = (1-\varepsilon) \cdot v_d(Y_c \setminus g_1)$.
	\item Adding $g_9$ to $Y_d$: $b$ strongly envies $d$ as $v_b(Y_a)=32 < (1-\varepsilon) \cdot 34 = (1-\varepsilon) \cdot v_b(g_6) = (1-\varepsilon) \cdot v_b(Y_d \setminus g_9)$.
\end{itemize}

This shows that $\{g_7, g_8\} \not \subseteq Y_a$ for any complete $(1-\varepsilon)$-EFX allocation $Y$. This implies that agent $a$'s valuation in $Y$ is strictly less than her valuation in $X$, implying that $\phi(X)$ is lexicographically larger than  $\phi(Y)$. This shows that the approach from~\cite{CGM20} cannot be generalized to guarantee $(1-\varepsilon)$-EFX allocation when there are four or more agents. 

\clearpage

\bibliographystyle{abbrv}
\bibliography{EFX}

\end{document}